\documentclass[11pt]{article}
\usepackage[letterpaper,margin=1in]{geometry}
\usepackage[T1]{fontenc}
\usepackage[utf8]{inputenc}
\usepackage{lmodern,microtype}
\usepackage{amsmath,amssymb,amsthm,mathtools}
\usepackage{booktabs,array,enumitem,aliascnt,xcolor}
\usepackage[colorlinks=true,linkcolor=blue!50!black,citecolor=blue!50!black,urlcolor=blue!50!black]{hyperref}
\usepackage[nameinlink,capitalise,noabbrev]{cleveref}
\hypersetup{pdftitle={Capacity List Decoding and Mutual Correlated Agreement for Reed--Solomon Codes},pdfauthor={Fernando Granha Jeronimo},pdfsubject={Hidden-derivative interpolation, field-size-independent list bounds, and exact-support proximity gaps}}
\allowdisplaybreaks
\setlist{itemsep=2pt,topsep=5pt}
\newtheorem{theorem}{Theorem}[section]
\newaliascnt{lemma}{theorem}
\newtheorem{lemma}[lemma]{Lemma}
\aliascntresetthe{lemma}
\newaliascnt{proposition}{theorem}
\newtheorem{proposition}[proposition]{Proposition}
\aliascntresetthe{proposition}
\newaliascnt{corollary}{theorem}
\newtheorem{corollary}[corollary]{Corollary}
\aliascntresetthe{corollary}
\theoremstyle{definition}
\newaliascnt{definition}{theorem}
\newtheorem{definition}[definition]{Definition}
\aliascntresetthe{definition}
\theoremstyle{remark}
\newaliascnt{remark}{theorem}
\newtheorem{remark}[remark]{Remark}
\aliascntresetthe{remark}
\newcommand{\F}{\mathbb F}
\newcommand{\A}{\mathbb A}

\newcommand{\Z}{\mathbb Z}
\newcommand{\cA}{\mathcal A}
\newcommand{\cC}{\mathcal C}
\newcommand{\cK}{\mathcal K}
\newcommand{\cL}{\mathcal L}
\newcommand{\cQ}{\mathcal Q}
\newcommand{\cT}{\mathcal T}
\newcommand{\cV}{\mathcal V}
\newcommand{\RS}{\operatorname{RS}}
\newcommand{\Sol}{\operatorname{Sol}}
\newcommand{\Agr}{\operatorname{Agr}}
\newcommand{\Bad}{\operatorname{Bad}}
\newcommand{\errMCA}{\operatorname{err}_{\mathrm{MCA}}}
\newcommand{\cdeg}{\operatorname{cdeg}}
\newcommand{\dist}{\Delta}
\newcommand{\eps}{\varepsilon}
\newcommand{\Span}{\operatorname{span}}
\newcommand{\rank}{\operatorname{rank}}
\newcommand{\abs}[1]{\left|#1\right|}
\newcommand{\ceil}[1]{\left\lceil #1\right\rceil}
\newcommand{\floor}[1]{\left\lfloor #1\right\rfloor}

\title{Algorithmic List Decoding at Capacity  and Optimal Proximity Gaps\\for Reed--Solomon Codes}
\author{Fernando Granha Jeronimo\thanks{{\tt University of Illinois, Urbana-Champaign}. {\tt granha@illinois.edu}.}}
\date{}
\begin{document}
\maketitle
\begin{abstract}
We give a unified hidden-derivative framework for list decoding and mutual correlated agreement of ordinary Reed--Solomon codes over prime fields, on arbitrary prescribed evaluation sets. For every fixed slack $\gamma>0$, every sufficiently large block length $n$, every prime $q\ge n$, and every dimension $1\le k\le(1-\gamma)n$, a deterministic algorithm finds all codewords within relative distance $1-k/n-\gamma$ in $q^{O_\gamma(1)}$ time. The final list has size $n^{O_\gamma(1)}$, independently of $q$. Both statements extend to bounded-input-list recovery, with constants depending additionally on the input-list bound.

For every fixed curve degree $\ell$, at most $n^{O_{\gamma,\ell}(1)}$ parameters on a curve $f_0+zf_1+\cdots+z^\ell f_\ell$ admit a nearby codeword whose exact agreement support is not a maximal jointly explained support of the coefficient words. For lines this gives MCA error $n^{O_\gamma(1)}/q$, with no proximity loss.

The interpolation stage reparameterizes and optimizes the hidden-derivative construction of Brakensiek, Chen, Putterman, Zhang, and Zheng; differential root enumeration uses Kopparty's algorithm. We then prove that a specialization-safe differential equation has a cover by constant-dimensional varieties of polynomial cumulative degree, outside polynomially many parameter values. Intersecting these varieties with equations from the full agreement support yields both the field-size-independent list bound and exact-support MCA.  
\end{abstract}
\clearpage
\begingroup\small\tableofcontents\endgroup
\clearpage
\section{Introduction}\label{sec:intro}
Let $D=\{\alpha_1,\ldots,\alpha_n\}\subseteq\F_q$ consist of distinct points. The Reed--Solomon code
\[
 \RS_{\F_q}(D,k)
 =\{(P(\alpha_1),\ldots,P(\alpha_n)):P\in\F_q[X],\ \deg P<k\}
\]
has rate $k/n$ and minimum distance $n-k+1$~\cite{RS60,Sin64}. List decoding asks for all codewords near a received word, rather than a unique closest codeword. At rate $R$, the large-alphabet capacity benchmark is the radius $1-R$; the fixed-slack objective is efficient decoding at every radius $1-R-\gamma$ with $\gamma>0$ fixed. For ordinary Reed--Solomon codes, the classical general algorithmic guarantee is the Johnson radius $1-\sqrt R$, from the interpolation algorithms of Sudan and Guruswami--Sudan~\cite{Sud97,GS99}.

A second question concerns how nearby codewords behave when the received word varies algebraically. Given coefficient words $f_0,\ldots,f_\ell\in\F_q^n$, consider
\[
 y(z)=\sum_{j=0}^{\ell}z^jf_j.
\]
Ordinary correlated agreement seeks one large set of coordinates on which all the $f_j$ are simultaneously explained by codewords. Mutual correlated agreement (MCA) asks for more: outside a small exceptional set of parameters, the \emph{entire exact agreement support of every nearby codeword} must already be jointly explained. This property was introduced in WHIR~\cite{WHIR25} and is relevant to the soundness of Reed--Solomon proximity protocols~\cite{BCIKS20,ABF26}.

We treat these two questions through the same differential interpolant. Root enumeration gives a decoder. A bounded-complexity description of the solution locus gives a list bound independent of the field size and, in the parametric setting, the stronger exact-support MCA conclusion.

\subsection{Main results}
For lists $S_1,\ldots,S_n\subseteq\F_q$, write
\[
 \cL(D,k,\mathbf S,A)
 :=\{P\in\F_q[X]:\deg P<k,\
       |\{i:P(\alpha_i)\in S_i\}|\ge A\}.
\]
Singleton lists recover ordinary list decoding.

\begin{theorem}[Capacity decoding and field-size-independent lists]\label{thm:capacity}
Fix $\gamma\in(0,1)$ and an integer $L\ge1$. There are effective constants $C_{\rm alg}=C_{\rm alg}(\gamma,L)$, $C_{\rm list}=C_{\rm list}(\gamma,L)$, and $n_0=n_0(\gamma,L)$ such that the following holds. Let $n\ge n_0$, let $q\ge n$ be prime, let $D\subseteq\F_q$ have $n$ distinct points, and let $1\le k\le(1-\gamma)n$. For arbitrary lists $|S_i|\le L$, all polynomials in
\[
 \cL\bigl(D,k,\mathbf S,k+\ceil{\gamma n}\bigr)
\]
can be found deterministically using at most $q^{C_{\rm alg}}$ field operations. Moreover,
\[
 \bigl|\cL\bigl(D,k,\mathbf S,k+\ceil{\gamma n}\bigr)\bigr|
 \le n^{C_{\rm list}}.
\]
In particular, every such Reed--Solomon code is list decodable at radius $1-k/n-\gamma$, with polynomial list size independent of $q$ and deterministic polynomial time whenever $q=n^{O(1)}$.
\end{theorem}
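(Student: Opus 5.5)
The proof has three stages: build a differential interpolant from the received lists, enumerate its low-degree polynomial solutions to get the decoder, and then bound those solutions independently of $q$ to get the list-size bound.

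\textbf{Stage 1: interpolation.} Following the hidden-derivative construction of Brakensiek--Chen--Putterman--Zhang--Zheng, reparameterized for our slack, I would fix integers $s,m$ depending only on $(\gamma,L)$. I would then find a nonzero linear differential polynomial
\[
 \cQ(X,Y_0,\ldots,Y_s)=A(X)+\sum_{j=0}^{s}B_j(X)\,Y_j .
\]
Here $Y_j$ stands for the $j$-th (Hasse) derivative of the unknown $P$, and each $B_j$ has degree roughly $n-k+\gamma n/2$, independent of $j$. The interpolation conditions are imposed with multiplicity $m$ at each pair $(\alpha_i,\beta)$ with $\beta\in S_i$. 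These conditions are linear in the coefficients of $A$ and the $B_j$. Counting unknowns against constraints (about $nLm$ constraints versus about $(s+1)(n-k+\gamma n/2)$ unknowns) shows a nonzero solution exists once $s\gg Lm/\gamma$.

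The standard agreement argument then applies. If $P$ has at least $k+\ceil{\gamma n}$ good coordinates, then $\cQ(X,P,P',\ldots,P^{(s)})$ vanishes to order about $m$ at each good coordinate. Its degree is below that total vanishing order, so it is identically zero. Primality of $q$ together with $q\ge n>k$ makes Hasse derivatives behave like ordinary ones up to order $k$, so no characteristic pathology occurs.

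\textbf{Stage 2: root enumeration and the algorithm.} Each list polynomial $P$ solves the linear ODE $\sum_j B_j P^{(j)}=-A$ with $\deg P<k$. I would run Kopparty's differential root-finding algorithm on this ODE. It expresses the solution space as an affine space of dimension at most about $s$ times the number of singular points, and enumerating it and filtering for agreement costs at most $q^{O(s)}$ field operations. This gives the algorithmic claim, but the list-size bound it yields is only of the form $q^{O(1)}$.

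\textbf{Stage 3: field-size-independent list bound.} This is the main obstacle. I would first make the ODE specialization-safe. Choose $\cQ$ with minimal order and a leading coefficient $B_s$ vanishing at few points of $D$, so that at a generic evaluation point the solution is determined by $s$ initial Taylor coefficients. Then cover the solution set by constant-dimensional affine varieties. Recursively fixing initial data at regular points of $D$ yields $\Sol\subseteq\bigcup_t\cV_t$ with $\dim\cV_t\le s$, and the number of pieces plus their degrees (the cumulative degree) bounded by $n^{O(s)}$. The exceptional points where $B_s$ degenerates are handled by branching over at most $\deg B_s=O(n)$ choices.

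Next, intersect each $\cV_t$ with agreement equations. For a fixed support $T$ of size $k+\ceil{\gamma n}$ with values in $S_i$, the constraints $P(\alpha_i)=\beta_i$ are linear. Proceed incrementally: add coordinates one at a time, and each new agreement either cuts the dimension of the current piece or is implied by the previous ones. Since the dimension starts at $s$, at most $s$ genuine cuts occur along any branch, and each cut branches into at most $L$ values. Choosing which coordinates cause cuts gives at most $\binom{n}{s}L^s=n^{O(s)}$ branches, each ending in a zero-dimensional piece containing at most one polynomial.

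The worry is that an implied constraint could still leave a positive-dimensional piece. That case is excluded because a piece of dimension $\ge1$ whose points all agree on more than $k$ coordinates must be a single point, since distinct polynomials of degree below $k$ cannot agree on $k$ points. Combining the counts gives $|\cL|\le n^{C_{\rm list}}$ with $C_{\rm list}=O(s)=O_{\gamma,L}(1)$.
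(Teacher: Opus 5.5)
Your three-stage architecture matches the paper: interpolate, enumerate roots with Kopparty's solver, then cut a constant-dimensional solution cover by support equations. But there is a genuine gap in Stage~1, and it is the central difficulty of the theorem.

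\textbf{The multiplicity constraints cannot be imposed on a linear interpolant.} A Reed--Solomon received word reveals only $\beta=P(\alpha_i)$, not the Hasse derivatives of $P$ at $\alpha_i$. The Taylor coefficients of $R(X)=A(X)+\sum_jB_j(X)P^{[j]}(X)$ at $\alpha_i$ depend linearly on the hidden values $P^{[t]}(\alpha_i)$, $1\le t<k$, which $\beta$ does not constrain. So ``multiplicity $m$ at $(\alpha_i,\beta)$'' can be imposed as a linear condition on $(A,B_0,\ldots,B_s)$ only by demanding it for all values of these hidden coefficients.

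For a $Y$-linear interpolant this is ruinous. Comparing coefficients of $R(\alpha_i+T)$ gives roughly $(s+1)m$ conditions per evaluation point, almost all independent of $\beta$. For instance, when $s<m$ they force $(X-\alpha_i)^{m-s-1}\mid B_j$ for every $j$ and every $i$. The paper's backward-Taylor constraints, run on a $Y$-linear $Q$, give a sufficient system of the same shape: in $T=X-\alpha_i$, $A+\beta B_0\equiv0$ and $B_j\equiv(-1)^jT^jB_0\pmod{T^m}$, together with $B_0\equiv0\pmod{T^{m-s-1}}$.

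These conditions eat exactly the $(s+1)$-fold gain in unknowns that your count of $nLm$ constraints against $(s+1)(n-k+\gamma n/2)$ unknowns relies on. The unknowns-versus-constraints comparison then closes only in the unique-decoding regime. Linear interpolation with derivative variables works for derivative and multiplicity codes precisely because there the derivatives are transmitted.

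\textbf{The missing idea} is the hidden-derivative construction of BCPZZ that the paper reparameterizes. It has three ingredients:
\begin{itemize}
\item a genuinely nonlinear $Q\in\cQ$ of total $Y$-degree up to $B=\ceil{mA/(K-1)}=\Theta(m)$;
\item the substitution $Y_0=\beta+\sum_{j=1}^{r}(-1)^{j+1}T^jY_j+TE$, with constraints imposed only modulo $T^{m-rb}$ on the $E^b$-coefficients, which suffices because $T^r$ divides $E_P$;
\item the kernel-space local-rank bound, showing that one incidence costs only about $|\cT|m^3r^{-\vartheta}$ constraints, up to polylogarithmic factors in $r$, against $\dim\cQ\gtrsim|\cT|(K-1)m^3$.
\end{itemize}

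\textbf{Consequences for Stage~3.} Because the interpolant has to be nonlinear, its degree-$<k$ solutions no longer form an affine space of dimension at most $s$. So ``each cut branches into at most $L$ values'' and ``each zero-dimensional piece contains at most one polynomial'' are unavailable. Moreover, degeneracy is not confined to finitely many roots of a leading coefficient: $\partial Q/\partial Y_s$ can vanish identically along whole families of solutions $P$.

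The paper handles this with its jet-decomposition theorem:
\begin{itemize}
\item Taylor lifting at an anchor $\alpha$, taken from a fixed set of size $D+(B-1)d+1$, at which the substituted partial derivative is nonzero;
\item explicit bounds $e_t\le2t-1$ and $\deg N_t=O_B((D+B)t^2)$ for the rational expressions of $c_{s+t}$;
\item singular solutions handled by induction on the total $Y$-degree;
\item B\'ezout and graph-degree bounds giving a cover of dimension at most $r$ and cumulative degree $n^{O(1)}$.
\end{itemize}

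Only then does your final step go through, in the form of the paper's support-isolation lemma, where zero-dimensional pieces are counted by their B\'ezout degree. Your idea for that step is the right one: the at least $k$ evaluation equations from a candidate's own hit set pin down its coefficient vector, so some equation cuts any positive-dimensional piece properly. Your Stage~2 likewise matches the paper's use of Kopparty's prime-field solver. However, that solver must be applied to the nonlinear $Q$, whose individual $Y$-degrees are below $q$ and whose weighted degree is below $q^2$.
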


The constants are uniform in the actual dimension $k$, not merely in a prescribed constant rate. For every fixed $R\in(0,1)$ and $\gamma\in(0,1-R)$, the theorem therefore gives decoding at radius $1-R-\gamma$ for every $k\le Rn$. Linear-size prime fields are allowed. The final list-size estimate is stronger than the $q^{O(r)}$ bound on all differential-equation roots: it counts only roots passing the agreement filter. The geometric argument proving this distinction is in \cref{sec:list-bounds}.

For MCA, put $A_\rho=n-\floor{\rho n}$. A support $S\subseteq[n]$ is \emph{jointly explained} by $\mathbf f=(f_0,\ldots,f_\ell)$ if there are degree-below-$k$ polynomials $P_0,\ldots,P_\ell$ such that $P_j(\alpha_i)=f_{j,i}$ for all $i\in S$ and all $j$. A parameter $z$ is bad if some degree-below-$k$ polynomial $P$ has at least $A_\rho$ agreements with $y(z)$, but its full agreement support is not a maximal jointly explained support. We give the formal definitions in \cref{sec:support-definitions}.

\begin{theorem}[Fixed-slack capacity-radius curve MCA]\label{w:thm:mca}
Fix $\gamma\in(0,1)$ and an integer $\ell\ge1$.  There are effective constants
\[
 C=C(\gamma,\ell)>0,
 \qquad n_0=n_0(\gamma,\ell),
\]
such that the following holds.  Let $n\ge n_0$, let $q\ge n$ be prime, let $D=\{\alpha_1,\ldots,\alpha_n\}\subseteq\F_q$ have $n$ distinct points, and let
\[
 \cC=\RS_{\F_q}(D,k),
 \qquad 1\le k\le(1-\gamma)n.
\]
For arbitrary $f_0,\ldots,f_\ell\in\F_q^n$ and every
\[
 0\le\rho\le1-\frac{k}{n}-\gamma,
\]
one has
\[
 \abs{\Bad_{\cC,\rho}^{(\ell)}(f_0,\ldots,f_\ell)}
 \le n^{C}.
\]
In particular, for lines,
\[
 \errMCA(\cC,\rho)\le \frac{n^{C(\gamma,1)}}{q}.
\]
\end{theorem}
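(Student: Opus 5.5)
Run the hidden-derivative interpolation of \cref{thm:capacity} generically over the parameter $z$, then show that the resulting solution variety has bounded complexity, so that only $n^{O(1)}$ specializations of $z$ can be bad.

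\emph{Step 1: a differential interpolant over $\F_q[z]$.} Treat $z$ as an indeterminate and consider the generic received word $y(z)=\sum_j z^j f_j\in\F_q[z]^n$, whose entries have $z$-degree at most $\ell$. I would build the interpolant over the ring $\F_q[z]$, equivalently over the field $\F_q(z)$ with $z$-degrees tracked. This gives a nonzero linear differential operator
\[
 Q(X,z;P,P',\ldots,P^{(r)})=Q_0(X,z)+\sum_{t}Q_t(X,z)P^{(t)},
\]
with $X$-degrees as in the construction of \cref{thm:capacity} and $z$-degree $O_{\gamma,\ell}(n)$. Every $P$ of degree below $k$ agreeing with $y(z_0)$ on at least $A_\rho\ge k+\gamma n$ points satisfies $Q(\cdot,z_0;P)=0$, unless $z_0$ lies in the set of specialization failures. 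That set consists of the roots of one nonzero leading coefficient, so it has at most $O_{\gamma,\ell}(n)$ elements. This is the ``specialization-safe'' property.

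\emph{Step 2: bounded-complexity cover.} Apply the cover theorem to the solution locus $\{(z,P):Q(\cdot,z;P)=0\}\subseteq\A^{1}\times\A^{k}$, whose fibres over good $z$ are affine spaces of dimension at most $r$. Outside $n^{O(1)}$ values of $z$, it is covered by constant-dimensional varieties $V_1,\ldots,V_m$ of polynomial cumulative degree. Fix a candidate support $S$ with $|S|\ge A_\rho$ and impose the agreement equations $P(\alpha_i)=y_i(z)$ for $i\in S$. Each $V_a\cap\{\text{these equations}\}$ has, by Bézout-type bounds, either finitely many points ($n^{O(1)}$ in total) or a positive-dimensional component dominating the $z$-line.

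\emph{Step 3: the dominant components are jointly explained.} Take a component $W$ dominating the $z$-line. Over $\F_q(z)$ it yields a polynomial $P(X,z)$ that agrees with $y(z)$ on $S$ identically in $z$. Write $P=\sum_j z^jP_j$, using a degree-in-$z$ argument and the fact that $|S|>k$ forces interpolation on $S$ to be linear in the data. Comparing coefficients of $z$ then gives $P_j(\alpha_i)=f_{j,i}$ for $i\in S$, so $S$ is jointly explained. Exact-support badness requires the agreement support of $P(\cdot,z_0)$ to be strictly larger than some maximal jointly explained support. Each extra coordinate $i$ imposes the polynomial condition $P(\alpha_i,z_0)=y_i(z_0)$, which is not an identity by maximality. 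That condition contributes at most $\deg_z$ many roots, giving $n\cdot n^{O(1)}$ bad $z_0$ per component.

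\emph{Main obstacle.} The difficulty is that the number of candidate supports $S$ is exponential. I would avoid enumerating them by ranging only over the components $V_a$ and arguing intrinsically: on each irreducible dominant component, take the generic agreement support, which is a fixed set. Values of $z$ where the support jumps are cut out by nonvanishing polynomials of bounded degree, and only $n^{O(1)}$ components exist. Adding up the three sources of bad parameters (specialization failures, finite isolated points, and support jumps) gives the bound $n^C$. The line case follows by dividing by $q$.
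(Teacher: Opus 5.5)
Your architecture matches the paper's: a primitive symbolic interpolant over $\F_q[Z]$, the parametric jet-decomposition cover outside polynomially many parameters, the hypersurfaces $h_i(z,\mathbf p)=P_{\mathbf p}(\alpha_i)-y_i(z)$, and Vandermonde plus transcendence of $z$ on a nonvertical branch to produce a joint explanation. One factual correction: the hidden-derivative interpolant is not a linear differential operator. Its total $Y$-degree can be as large as $B>r^3$, so the fibres of the solution locus are not affine spaces. That nonlinearity is why \cref{w:thm:jet-decomposition} needs the singular-solution induction. Since you use the cover as a black box, this does not by itself break your argument.

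\textbf{The gap is in how you resolve the ``main obstacle.''}
\begin{itemize}
\item \emph{Why your fix fails.} Counting support jumps against the generic support $G=\{i:h_i|_{V_a}\equiv0\}$ of each cover component works only when $V_a$ is a curve. The components have dimension up to $r+1$. For $i\notin G$, the jump locus $V_a\cap Z(h_i)$ is a hypersurface in $V_a$ that typically still dominates the $z$-line, so the set of $z$ where the support jumps need not be finite. Witnesses with large exact support lie on lower-dimensional strata cut out by subsets of the $h_i$, and a priori there are exponentially many such strata.
\item \emph{The missing idea} is the support-preserving descent of \cref{w:lem:support-descent}. Select one witness $x_z$ per bad parameter, with its exact support $S_z$. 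Start from an irreducible cover component through $x_z$. Repeatedly cut by some $h_i$ with $i\in S_z$ that does not vanish identically on the current component, and pass to an irreducible component through $x_z$.
\item \emph{Why it terminates.} Each cut drops dimension. After at most $r+1$ cuts you reach one of three cases: a zero-dimensional component, a vertical component, or a nonvertical branch on which every $h_i$ with $i\in S_z$ vanishes identically.
\item \emph{Why the count is polynomial.} The cuts are drawn from the witness's own support, so the terminal component is determined by an ordered tuple in $[n]^a$ with $a\le r+1$. B\'ezout bounds the total degree for each tuple by $\Delta\ell^a$, where $\Delta=n^{O(1)}$ is the cumulative degree of the cover. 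The first two cases therefore contribute at most $\Delta\sum_{a\le r+1}(n\ell)^a$ parameter values.
\item \emph{The third case is not bad at all.} By \cref{w:lem:vandermonde-support} the full exact support $S_z$ is jointly explained, and by \cref{w:lem:exact-maximal} it is then maximal. No separate support-jump count is needed.
\end{itemize}
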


The two main estimates have different field-size implications. The decoder applies already for $q=\Theta(n)$. The MCA estimate becomes nonvacuous only when $q$ exceeds its polynomial exception budget. A constant-degree curve with more than that many nearby points has a jointly explained support of size at least $A_\rho$ and is consequently close to the code at every parameter; see \cref{w:cor:ordinary-pg}.

\subsection{The common interpolation theorem}
The interpolation stage depends on the total incidence
\[
 M=\sum_{i=1}^n|S_i|.
\]
Fix a multiplicative slack $\vartheta\in(0,1)$ and a padded degree ceiling $K$ with $A/(K-1)\ge(1-\vartheta)^{-1}$. We construct a nonzero differential interpolant of order $r$ whenever
\begin{equation}\label{eq:intro-incidence}
 M\le a_\vartheta(K-1)\frac{r^\vartheta}{\log^4(er)}.
\end{equation}
Every degree-below-$K$ polynomial with $A$ list hits satisfies
\[
 Q(X,P,P^{[1]},\ldots,P^{[r]})\equiv0.
\]
Here $P^{[j]}$ denotes the $j$th Hasse derivative. The order $r$ is free and can be taken sufficiently large, but constant, at any fixed capacity slack. Degree padding makes $K=\Theta_\gamma(n)$ even when the actual code dimension is much smaller.

The construction follows the hidden-derivative mechanism of Brakensiek, Chen, Putterman, Zhang, and Zheng~\cite{BCPZZ26}. The two-cutoff derivative-tail space, the backward-Taylor substitution, and the local-kernel mechanism are theirs. Their presentation fixes the derivative order as a function of the target agreement at the start of its interpolation section~\cite[Eq.~(10)]{BCPZZ26}. We establish the corresponding estimates with that order left free, then optimize the two cutoffs. The resulting dimension-to-rank ratio is
\[
 \Omega_\vartheta\!\left((K-1)\frac{r^\vartheta}{\log^4(er)}\right).
\]
We also stack constraints over arbitrary point--symbol incidences and count the existing monomial space by weak compositions. These give sparse list recovery and a $q^{O_\vartheta(r)}$ interpolation algorithm. Prime-field differential root enumeration is due to Kopparty~\cite{Kop15}, in the formulation recorded in~\cite{BCPZZ26}.

\subsection{From differential equations to exact supports}
The geometric part uses more information than a black-box list-size bound. We first run interpolation over $\F_q(Z)$ and choose a primitive polynomial kernel vector. This produces one polynomial $Q(Z,X,Y_0,\ldots,Y_r)$ that remains nonzero after every specialization $Z=z$ and contains every nearby-polynomial witness in its specialized solution locus.

The key algebraic statement, \cref{w:thm:jet-decomposition}, covers these witnesses outside polynomially many parameter values by varieties of dimension at most $r+1$ and polynomial cumulative degree. At an anchor $\alpha$, the Hasse--Taylor equations recursively recover all coefficients beyond the initial jet. Their denominators have exponents $O(d)$ and their numerator degrees are polynomial in the message degree $d$ and the coefficient degree of $Q$. Singular solutions are handled by induction on total derivative-variable degree. Thus the cover controls degree as well as dimension.

For a fixed received word, agreement at $k$ distinct coordinates uniquely determines a polynomial. A candidate cannot lie on a positive-dimensional branch satisfying every equation from its agreement support. Successive support-hyperplane cuts therefore isolate it after at most $r$ dimension drops. B\'ezout and the number of possible index tuples yield the field-size-independent list bound. With input lists, the available cuts are the $M$ point--symbol equations.

For a curve of received words, we instead follow each witness's \emph{exact} support. A bad witness reaches a zero-dimensional or vertical component after constantly many proper cuts. Such components contribute only polynomially many parameter values. On a nonvertical branch where all support equations vanish identically, a Vandermonde system expresses the polynomial coefficients as a degree-$\ell$ polynomial in $z$. Transcendence of the parameter splits every remaining support equation coefficientwise, giving the coefficient codewords over $\F_q$. Reed--Solomon uniqueness then proves maximality of the full support.

The symbolic-specialization and jet-decomposition setup develops the framework of the unpublished ordinary-proximity companion note~\cite{PG26}; all geometric statements used here are proved in this paper, so no result is invoked from that note. The support-preserving descent, and its use for both candidate counting and MCA, make the relationship between the two conclusions explicit.

\subsection{Context and scope}
Related algebraic codes attain capacity through folding or derivative evaluations~\cite{PV05,GR08,GW13,Kop15,KRSW23,GHKS24,CZ25,AHS26}. Earlier work improved list decoding for suitable ordinary Reed--Solomon evaluation sets~\cite{RW14,ST23,GLS21,GST23}. Generic and random evaluation sets admit capacity list-size guarantees, including over linear-size fields~\cite{BGM24,GZ23,AGL24}; \cite{AGG25} is the merged journal version of~\cite{GZ23,AGL24}. Our setting retains the ordinary encoding map and allows every prescribed evaluation set over a prime field. BCPZZ supplies the capacity decoder in the low constant-rate regime~\cite{BCPZZ26}; the all-rate conclusion here extends that framework, rather than replacing its interpolation mechanism.

At the Johnson radius, ordinary proximity gaps are studied in~\cite{BCIKS20,BCCHKS25}, and MCA in~\cite{Hab25,BCGM26}. Fixed integer improvements beyond Johnson are obtained in~\cite{Jo26}. Goyal and Guruswami prove capacity-approaching proximity gaps and MCA for subspace-design codes and random Reed--Solomon codes over linear-size fields~\cite{GG25}. Our evaluation-set guarantee is uniform, at the expense of a much larger field-size budget for small MCA error. Counterexamples near capacity~\cite{CS25,KKH26} require care with shrinking slack: our constants depend on a fixed $\gamma$, and the theorems do not assert a uniform estimate when $\gamma$ tends to zero with $n$.

The result is not a claim of $\operatorname{poly}(n,\log q)$ decoding time for arbitrary large fields, constant output lists, or practical dependence on $1/\gamma$. It does give fixed-slack asymptotic certificates for both the MCA and fixed-interleaving list-size targets in the preliminary Proximity Prize statement~\cite{ABF26,Prize26}; it does not determine their largest safe radius. These consequences and limitations are stated in \cref{sec:consequences}.

\paragraph{Organization.}
\Cref{sec:preliminaries} sets notation. \Cref{sec:interpolation} proves the common interpolation theorem and its specialization-compatible matrix formulation. \Cref{sec:decoding} gives the uniform parameter choice and the sparse decoder. \Cref{sec:geometry} proves the symbolic construction and the jet theorem. \Cref{sec:list-bounds} establishes the polynomial list bound and completes \cref{thm:capacity}. \Cref{sec:mca-proof} proves exact-support MCA. \Cref{sec:consequences} records interleaving, coordinate transformations, and quantitative certificates. The appendices give the root-solver transfer and the more detailed parameter estimates.

\section{Preliminaries}\label{sec:preliminaries}

Throughout, $\log$ denotes the natural logarithm. A zero polynomial satisfies every displayed upper degree bound. We write $[n]=\{1,\ldots,n\}$.  For $u,v\in\F_q^n$, their relative Hamming distance is
\[
  \Delta(u,v):=\frac{1}{n}\abs{\{i\in[n]:u_i\ne v_i\}}.
\]
A code $C\subseteq\F_q^n$ is $(\rho,L)$ list decodable if every Hamming ball of relative radius $\rho$ contains at most $L$ codewords.  In list recovery, the input is a sequence of sets $S_1,\ldots,S_n\subseteq\F_q$, and a codeword is retained when its $i$th symbol lies in $S_i$ on the prescribed number of coordinates.  All polynomial degrees are ordinary univariate degrees unless a weighted degree is displayed explicitly.

\subsection{Hasse derivatives}

For $P\in F[X]$ over a field $F$, the $j$th Hasse derivative $P^{[j]}$ is defined by
\begin{equation}\label{v:eq:hasse}
  P(X+T)=\sum_{j\ge0}P^{[j]}(X)T^j.
\end{equation}
Replacing $X$ by $X+T$ and $T$ by $-T$ gives the backward identity
\begin{equation}\label{v:eq:backward}
  P(X)=\sum_{j\ge0}P^{[j]}(X+T)(-T)^j.
\end{equation}
If $P(\alpha)=y$, then for every $r\ge1$,
\begin{equation}\label{v:eq:truncated-backward}
  P(\alpha+T)
  =y+\sum_{j=1}^{r}(-1)^{j+1}T^jP^{[j]}(\alpha+T)
   +T^{r+1}G_P(T)
\end{equation}
for some $G_P\in F[T]$.

If $P(\alpha+U)=\sum_{j=0}^d c_jU^j$, then $c_j=P^{[j]}(\alpha)$ and
\begin{equation}\label{w:eq:hasse-series}
 P^{[i]}(\alpha+U)=\sum_{j=i}^{d}\binom ji c_jU^{j-i}.
\end{equation}
When $d<\operatorname{char}(F)$, the binomial coefficients with $0\le i\le j\le d$ are nonzero.

\subsection{Sparse list recovery and padding}

\begin{definition}[Sparse list-recovery instance]
A sparse list-recovery instance consists of distinct $\alpha_1,\ldots,\alpha_n\in F$, finite sets $S_1,\ldots,S_n\subseteq F$ over a field $F$, an actual degree bound $k-1$, and an agreement threshold $A$.  Its incidence size is
\[
  M=\sum_{i=1}^n\abs{S_i}.
\]
A polynomial $P$ is valid if $\deg P<k$ and
\[
  \abs{\{i\in[n]:P(\alpha_i)\in S_i\}}\ge A.
\]
\end{definition}

Ordinary list decoding is the case $S_i=\{y_i\}$.  Empty lists and coordinate-dependent list sizes are allowed.

\begin{lemma}[Degree padding]\label{v:lem:padding}
Suppose an algorithm outputs every polynomial of degree below $K$ satisfying a given list-agreement condition.  For every $k\le K$, filtering its output by $\deg P<k$ solves the same problem for degree bound $k-1$.
\end{lemma}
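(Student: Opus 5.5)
The plan is to argue by set inclusion: the set of valid polynomials for degree bound $k-1$ is exactly the set of valid polynomials for degree bound $K-1$ that also have degree below $k$. The filtered output is therefore the correct answer. Fix the instance data $\alpha_1,\ldots,\alpha_n$, $S_1,\ldots,S_n$ and the agreement threshold $A$. The key observation is that the list-agreement condition
\[
 \abs{\{i\in[n]:P(\alpha_i)\in S_i\}}\ge A
\]
depends only on the polynomial $P$ and these data. It does not involve the degree bound.

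The argument has two inclusions, which I would check in turn.

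First, let $P$ be valid for degree bound $k-1$. Then $\deg P<k\le K$, so $P$ has degree below $K$ and satisfies the same agreement condition. Hence $P$ appears in the algorithm's output. It has degree below $k$, so it survives the filter.

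Conversely, let $P$ be an output that survives the filter. By the algorithm's guarantee, $P$ satisfies the agreement condition. The filter gives $\deg P<k$. So $P$ is valid for degree bound $k-1$.

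Together these show that the filtered output is precisely the solution set for degree bound $k-1$. The zero polynomial is handled by the paper's convention that it satisfies every upper degree bound, so it needs no separate case.

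I do not expect a real obstacle. The only point needing care is to make explicit that the agreement condition is independent of the degree cap, so that padding changes nothing except the ambient search space. Filtering costs one degree check per output, and the size of the filtered list is at most the size of the original list.
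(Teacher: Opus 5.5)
Your proof is correct and matches the paper's: the paper also argues that every valid degree-below-$k$ polynomial has degree below $K$, so it appears in the output, and that the degree filter removes nothing valid. Your converse inclusion is an extra check that assumes the algorithm outputs only polynomials satisfying the agreement condition; the paper leaves that direction implicit, and in its actual decoder it filters by agreement as well as by degree.
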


\begin{proof}
Every valid degree-below-$k$ polynomial is among the degree-below-$K$ outputs, and the final filter removes only extraneous roots.
\end{proof}

\subsection{Agreement supports and mutual correlated agreement}\label{sec:support-definitions}

Fix $\cC=\RS_{\F_q}(D,k)$ and a radius $\rho\in[0,1]$ such that $A_\rho:=n-\floor{\rho n}\ge k$. For $u\in\F_q^n$ and $P\in\F_q[X]$, define
\[
 \Agr(u,P):=\{i\in[n]:u_i=P(\alpha_i)\}.
\]
For coefficient words $\mathbf f=(f_0,\ldots,f_\ell)$, call a set $S\subseteq[n]$ \emph{$\mathbf f$-correlated} if there are polynomials $P_0,\ldots,P_\ell\in\F_q[X]$ of degree below $k$ such that
\[
 P_j(\alpha_i)=f_{j,i}
 \qquad (i\in S,\ 0\le j\le\ell).
\]
For these coefficient words write $y(z)=\sum_{j=0}^{\ell}z^jf_j$. Let $\cA_{\rho}(\mathbf f)$ be the collection of inclusion-maximal $\mathbf f$-correlated sets of cardinality at least $A_\rho=n-\floor{\rho n}$.

\begin{definition}[Bad curve parameters]\label{w:def:bad-curve}
For $y(z)=\sum_{j=0}^{\ell}z^jf_j$, define
\[
 \Bad_{\cC,\rho}^{(\ell)}(\mathbf f)
 :=\left\{z\in\F_q:\begin{array}{l}
 \text{there exists }P\in\F_q[X],\ \deg P<k,\text{ with }|\Agr(y(z),P)|\ge A_\rho,\\[-1mm]
 \text{and }\Agr(y(z),P)\notin\cA_\rho(\mathbf f)
 \end{array}\right\}.
\]
For $\ell=1$, the MCA error is
\[
 \errMCA(\cC,\rho)
 :=\max_{f_0,f_1\in\F_q^n}
   \frac{|\Bad_{\cC,\rho}^{(1)}(f_0,f_1)|}{q}.
\]
\end{definition}

In the regime $A_\rho\ge k$, \cref{w:lem:exact-maximal} shows that an exact support of size at least $A_\rho$ is correlated if and only if it belongs to $\cA_\rho(\mathbf f)$. For lines, the definition therefore agrees with the usual MCA condition~\cite{WHIR25,GG25}; for curves, we use the analogous exact-support formulation.

\begin{lemma}[Explained exact supports are maximal]\label{w:lem:exact-maximal}
Let $z\in\F_q$ and let $P\in\F_q[X]$ have degree below $k$.  Put
\[
 S=\Agr(y(z),P).
\]
If $|S|\ge k$ and $S$ is $\mathbf f$-correlated, then $S$ is inclusion-maximal among $\mathbf f$-correlated sets.  In particular, if $|S|\ge A_\rho\ge k$, then $S\in\cA_\rho(\mathbf f)$.
\end{lemma}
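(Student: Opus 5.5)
Since $S$ is $\mathbf f$-correlated, we fix witnesses $P_0,\ldots,P_\ell$ of degree below $k$ with $P_j(\alpha_i)=f_{j,i}$ for all $i\in S$. We show that any $\mathbf f$-correlated set $T\supseteq S$ must equal $S$. Take such a $T$, with witnesses $P_0',\ldots,P_\ell'$ of degree below $k$ agreeing with $f_j$ on $T$. On $S\subseteq T$ both $P_j$ and $P_j'$ interpolate $f_j$. Since $|S|\ge k$ and the $\alpha_i$ are distinct, Reed--Solomon uniqueness gives $P_j'=P_j$ for every $j$.

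Next we identify the polynomial $P$. Put $P_z:=\sum_j z^jP_j$, which has degree below $k$. For each $i\in S$,
\[
 P_z(\alpha_i)=\sum_j z^j f_{j,i}=y(z)_i=P(\alpha_i),
\]
where the last equality holds because $S=\Agr(y(z),P)$. Again $|S|\ge k$, so $P=P_z$.

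Now let $i\in T$. Because $P_j'=P_j$ and $T$ is correlated, $P_j(\alpha_i)=f_{j,i}$ for all $j$. Hence
\[
 y(z)_i=\sum_j z^jf_{j,i}=P_z(\alpha_i)=P(\alpha_i),
\]
so $i\in\Agr(y(z),P)=S$. Thus $T\subseteq S$, and therefore $T=S$, which proves that $S$ is inclusion-maximal.

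For the final assertion, if $|S|\ge A_\rho\ge k$, then $S$ is an inclusion-maximal $\mathbf f$-correlated set of cardinality at least $A_\rho$, so $S\in\cA_\rho(\mathbf f)$ by definition.

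The only point needing care is that the exact-support hypothesis, i.e.\ $S$ is the \emph{full} agreement set, is what forces $T\subseteq S$. This in turn rests on the two uniqueness steps, both of which use $|S|\ge k$. There is no real obstacle.
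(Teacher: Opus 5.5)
Your proof is correct and follows essentially the same route as the paper: you use Reed--Solomon uniqueness on $S$ (with $|S|\ge k$) to identify $P=\sum_j z^jP_j$, and then use the fact that $S$ is the \emph{full} agreement set to show that any correlated $T\supseteq S$ satisfies $T\subseteq S$. The only cosmetic difference is that you apply uniqueness to each witness separately ($P_j'=P_j$), whereas the paper applies it once to the combination $\sum_j z^jQ_j$ and argues by contradiction against a strict superset.
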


\begin{proof}
Choose degree-below-$k$ polynomials $P_0,\ldots,P_\ell$ explaining $S$.  Then $P$ and $\sum_{j=0}^{\ell}z^jP_j$ agree at every $\alpha_i$ with $i\in S$.  Since $|S|\ge k$, Reed--Solomon uniqueness gives
\[
 P=\sum_{j=0}^{\ell}z^jP_j.
\]
Suppose that a strict superset $T\supsetneq S$ were $\mathbf f$-correlated, witnessed by $Q_0,\ldots,Q_\ell$.  The polynomials $P$ and $\sum_jz^jQ_j$ agree on $S$, hence are equal.  The latter polynomial agrees with $y(z)$ on all of $T$, contradicting the definition of $S$ as the \emph{full} agreement support of $P$.  Thus no such $T$ exists.
\end{proof}

\section{Hidden-derivative interpolation}\label{sec:interpolation}
All constructions in this section are over an arbitrary coefficient field $F$. In particular, they apply over $\F_q(Z)$. We prove the parameterized estimates rather than invoking the capacity decoder as a black box.

\subsection{The interpolation space}\label{v:sec:interpolation-space}

We now define the family from which the interpolating polynomial will be chosen.  It is a reparameterization of the two-cutoff interpolation space in~\cite[Eq.~(18)]{BCPZZ26}: the derivative tail obeys both a weighted cutoff, adapted to the backward-Taylor substitution, and an ordinary-degree cutoff.  The gap between the two cutoffs is optimized later in \cref{v:sec:global-interpolation}, and the ordinary cutoff also permits a sharper monomial enumeration.

Fix integers $A,K,r,m$ with $1\le r<K\le A$ and write
\[
  u:=\frac{A}{K-1}.
\]
Choose real numbers
\[
  1<c<s<u.
\]
For
\[
  \mathbf v=(v_2,\ldots,v_r)\in\Z_{\ge0}^{r-1},
\]
define
\[
  \omega(\mathbf v):=\sum_{j=2}^{r}(j-1)v_j,
  \qquad
  \abs{\mathbf v}:=\sum_{j=2}^{r}v_j,
  \qquad
  Y^{\mathbf v}:=\prod_{j=2}^{r}Y_j^{v_j}.
\]
Set
\begin{equation}\label{v:eq:W}
  W:=\floor{\frac{crm}{\log(er)}}
\end{equation}
and
\begin{equation}\label{v:eq:C}
  \cT:=\left\{
    \mathbf v\in\Z_{\ge0}^{r-1}:
    \omega(\mathbf v)\le W,
    \ \abs{\mathbf v}\le\ceil{sm}
  \right\}.
\end{equation}
Finally put
\begin{equation}\label{v:eq:B}
  B:=\ceil{\frac{mA}{K-1}}
\end{equation}
and let $\cQ$ be the span of all monomials
\begin{equation}\label{v:eq:Q-monomial}
  X^aY_0^{b_0}Y_1^{b_1}Y^{\mathbf v}
\end{equation}
with
\begin{align}
  &a,b_0,b_1\in\Z_{\ge0},\quad \mathbf v\in\cT,
  \quad b_1\le m,\label{v:eq:Q1}\\
  &b_0+b_1+\abs{\mathbf v}\le B,\label{v:eq:Q2}\\
  &a+(K-1)(b_0+b_1+\abs{\mathbf v})<mA.\label{v:eq:Q3}
\end{align}

\begin{lemma}[Degree properties]\label{v:lem:degree}
Every $Q\in\cQ$ satisfies $\deg_{Y_j}Q\le B$ for every $j$.  Moreover,
\[
  \deg_{(1,K-1,K-2,\ldots,K-r-1)}Q<mA.
\]
For every $P$ with $\deg P<K$,
\[
  \deg_X Q(X,P,P^{[1]},\ldots,P^{[r]})<mA.
\]
\end{lemma}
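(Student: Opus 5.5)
Everything reduces to checking a single spanning monomial $X^aY_0^{b_0}Y_1^{b_1}Y^{\mathbf v}$ with exponents obeying \eqref{v:eq:Q1}--\eqref{v:eq:Q3}. Degrees in the $Y$ variables, weighted degrees and $X$-degrees after substitution are all subadditive under sums, so they pass to every element of $\cQ$. A zero polynomial meets every displayed bound by the convention in \cref{sec:preliminaries}.

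\emph{Step 1 ($Y$-degrees).} For $j=0,1$, and for $2\le j\le r$ with $v_j\le\abs{\mathbf v}$, the $Y_j$-exponent of the monomial is at most $b_0+b_1+\abs{\mathbf v}$. This quantity is at most $B$ by \eqref{v:eq:Q2}, so $\deg_{Y_j}Q\le B$.

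\emph{Step 2 (weighted degree).} Give $X$ weight $1$ and $Y_j$ weight $K-1-j$ for $0\le j\le r$. Under this weighting $Y_0$ has weight $K-1$, $Y_1$ has weight $K-2$, and so on. Since $r<K$, every weight is at most $K-1$, and the ones with $j\le K-1$ are nonnegative. The weighted degree of the monomial is therefore at most
\[
a+(K-1)(b_0+b_1+\abs{\mathbf v})-b_1-\sum_{j\ge2}jv_j\le a+(K-1)(b_0+b_1+\abs{\mathbf v}),
\]
which is $<mA$ by \eqref{v:eq:Q3}. The only point needing care here is the exact weight list in the statement, $(1,K-1,K-2,\ldots,K-r-1)$. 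I read it as $X$ followed by $Y_0,\ldots,Y_r$ with $Y_j\mapsto K-1-j$. Any reading in which each weight is at most $K-1$ gives the same bound, because dropping the subtracted terms only increases the estimate.

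\emph{Step 3 (substitution).} Take $\deg P<K$, so $\deg P\le K-1$. The Hasse derivative $P^{[j]}$ lowers degree by $j$, since by \eqref{v:eq:hasse} it is the coefficient of $T^j$ in $P(X+T)$; hence $\deg P^{[j]}\le K-1-j$ when $j\le K-1$. This holds for all $j\le r<K$. If some $P^{[j]}$ is zero, the corresponding monomial term vanishes and the bound is trivial. Otherwise, the $X$-degree of $X^a\prod_j (P^{[j]})^{e_j}$ is at most $a+\sum_j e_j(K-1-j)$. This is exactly the weighted degree from Step 2, so it is $<mA$. Summing over monomials gives $\deg_X Q(X,P,P^{[1]},\ldots,P^{[r]})<mA$.

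There is no real obstacle in this argument. The only subtle point is matching the weight convention and confirming $r<K$, so that every weight is nonnegative and the Hasse derivative degree drop is valid in any characteristic.
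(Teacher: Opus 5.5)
Your proof is correct and matches the paper's argument: the individual $Y_j$-degree bound comes from \cref{v:eq:Q2}, the weighted-degree bound comes from dominating every weight by the conservative $K-1$ in \cref{v:eq:Q3}, and the bound after specialization comes from the degrees of the Hasse derivatives. The only cosmetic difference is in the last step, where you use the sharper bound $\deg P^{[j]}\le K-1-j$ to match the weighted degree exactly, whereas the paper simply uses $\deg P^{[j]}\le K-1$ together with \cref{v:eq:Q3}.
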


\begin{proof}
The individual-degree statement follows from \cref{v:eq:Q2}.  The actual differential-equation weights are no larger than the conservative weight $K-1$ used for every $Y_j$ in \cref{v:eq:Q3}.  Finally, $\deg P^{[j]}\le K-1$ for all $j$, so \cref{v:eq:Q3} also bounds the degree after specialization.
\end{proof}

\subsection{Local constraints and root multiplicity}\label{v:sec:local-constraints}

The decoder knows only the value $P(\alpha)$, not the Hasse derivatives of $P$ at $\alpha$.  The following substitution, constraints, and multiplicity argument are parameterized versions of the hidden-derivative construction of Brakensiek, Chen, Putterman, Zhang, and Zheng~\cite[Lemma~3.1]{BCPZZ26}.

Fix a point--symbol pair $(\alpha,y)\in F^2$.  In $Q$, make the formal substitution
\begin{equation}\label{v:eq:local-substitution}
  X=\alpha+T,
  \qquad
  Y_0=y+\sum_{j=1}^{r}(-1)^{j+1}T^jY_j+TE.
\end{equation}
Write the result uniquely as
\begin{equation}\label{v:eq:local-expansion}
  Q_{\alpha,y}(T,E,Y_1,\ldots,Y_r)
  =\sum_{b,\mathbf e}q_{b,\mathbf e}(T)
      E^bY_1^{e_1}\cdots Y_r^{e_r}.
\end{equation}
For every $b$ with $rb<m$, impose
\begin{equation}\label{v:eq:local-constraints}
  q_{b,\mathbf e}(T)\equiv0\pmod{T^{m-rb}}
  \qquad\text{for every }\mathbf e.
\end{equation}
These are homogeneous linear constraints on the coefficients of $Q$.

\begin{lemma}[One agreement gives multiplicity $m$]\label{v:lem:multiplicity}
If $P(\alpha)=y$ and $Q$ obeys \cref{v:eq:local-constraints} at $(\alpha,y)$, then
\[
 Q\bigl(\alpha+T,P(\alpha+T),P^{[1]}(\alpha+T),\ldots,
 P^{[r]}(\alpha+T)\bigr)
 \equiv0\pmod{T^m}.
\]
\end{lemma}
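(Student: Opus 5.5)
The plan is to specialize the formal substitution \cref{v:eq:local-substitution} at the actual Hasse derivatives of $P$, and to read the vanishing modulo $T^m$ off the local constraints \cref{v:eq:local-constraints} term by term.

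\textbf{Step 1: choose the specialization.} Since $P(\alpha)=y$, the truncated backward identity \cref{v:eq:truncated-backward} applied at the point $\alpha$ gives
\[
 P(\alpha+T)=y+\sum_{j=1}^{r}(-1)^{j+1}T^jP^{[j]}(\alpha+T)+T^{r+1}G_P(T).
\]
We therefore specialize $Y_j\mapsto P^{[j]}(\alpha+T)$ for $1\le j\le r$ and $E\mapsto T^{r}G_P(T)$. With these choices, $X=\alpha+T$ and the right-hand side of \cref{v:eq:local-substitution} becomes exactly $P(\alpha+T)$. Hence
\[
 Q\bigl(\alpha+T,P(\alpha+T),P^{[1]}(\alpha+T),\ldots,P^{[r]}(\alpha+T)\bigr)
 =Q_{\alpha,y}\bigl(T,T^rG_P(T),P^{[1]}(\alpha+T),\ldots,P^{[r]}(\alpha+T)\bigr).
\]
This step uses only the uniqueness of the expansion \cref{v:eq:local-expansion} as a polynomial identity, which is then evaluated at these elements of $F[T]$.

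\textbf{Step 2: bound each term.} Expand the right-hand side using \cref{v:eq:local-expansion}. The term indexed by $(b,\mathbf e)$ equals
\[
 q_{b,\mathbf e}(T)\,T^{rb}G_P(T)^b\prod_j P^{[j]}(\alpha+T)^{e_j}.
\]
If $rb\ge m$, the factor $T^{rb}$ already puts the term in $T^mF[T]$. If $rb<m$, then by \cref{v:eq:local-constraints} we have $T^{m-rb}\mid q_{b,\mathbf e}$, so the term is divisible by $T^{m-rb}\cdot T^{rb}=T^m$. The remaining factors are polynomials in $T$, so they do not affect divisibility. Summing over all $(b,\mathbf e)$ gives the claimed congruence modulo $T^m$.

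\textbf{Expected obstacle.} The only delicate point is the choice $E=T^{r}G_P$ rather than $E=T^{r+1}G_P$. The substitution \cref{v:eq:local-substitution} contributes a factor $TE$, so we need $TE=T^{r+1}G_P$. It is exactly this choice that gives each power of $E$ a factor $T^{r}$, and that factor is what matches the threshold $m-rb$ in \cref{v:eq:local-constraints}. Beyond this bookkeeping the argument is routine. Finiteness of the sums is automatic because $Q$ is a polynomial.
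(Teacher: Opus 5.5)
Your proof is correct and follows essentially the same route as the paper. The paper likewise substitutes $Y_j=P^{[j]}(\alpha+T)$ and $E=E_P(T)$, where $E_P(T)$ is the backward-Taylor remainder divided by $T$ and so is divisible by $T^r$ (your $T^rG_P(T)$), and then splits the summands according to whether $rb<m$ or $rb\ge m$.
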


\begin{proof}
By \cref{v:eq:truncated-backward},
\[
 E_P(T):=
 \frac{P(\alpha+T)-y-
 \sum_{j=1}^{r}(-1)^{j+1}T^jP^{[j]}(\alpha+T)}{T}
\]
is divisible by $T^r$.  Substitute $Y_j=P^{[j]}(\alpha+T)$ and $E=E_P(T)$ into \cref{v:eq:local-expansion}.  If $rb<m$, the two factors $q_{b,\mathbf e}(T)$ and $E_P(T)^b$ contribute divisibility $T^{m-rb}$ and $T^{rb}$, respectively.  If $rb\ge m$, the second factor alone is divisible by $T^m$.  Thus every summand is divisible by $T^m$.
\end{proof}

\subsection{Dimension of the interpolation space}\label{v:sec:dimension}

The first side of the dimension--rank comparison is a lower bound on the number of admissible interpolation monomials.  The following box-counting estimate is a parameterized form of~\cite[Lemma~3.2]{BCPZZ26}: the ordinary-degree buffer $s<u$ leaves a three-dimensional box in the exponents of $X,Y_0,Y_1$ for every derivative-tail monomial.

\begin{lemma}[Dimension lower bound]\label{v:lem:dimension}
Fix $u_0>1$ and $1<s<u_0$, and put
\[
  \Delta:=u_0-s,
  \qquad
  \overline\Delta:=\min\{1,\Delta\}.
\]
Whenever
\[
  u=\frac{A}{K-1}\ge u_0,
  \qquad \overline\Delta m\ge2,
\]
we have
\begin{equation}\label{v:eq:dimension}
  \dim\cQ\ge
  \frac{\overline\Delta^3}{8192}
  \abs{\cT}(K-1)m^3.
\end{equation}
In particular, for fixed constants $u_0>s>1$ this gives
$\dim\cQ\ge a_0\abs{\cT}(K-1)m^3$ for all sufficiently large $m$, with
$a_0=\Omega(\min\{1,u_0-s\}^3)$.
\end{lemma}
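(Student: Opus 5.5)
We prove the bound by exhibiting, for each derivative-tail vector $\mathbf v\in\cT$, an explicit box of admissible triples $(a,b_0,b_1)$ of size $\Omega(\overline\Delta^3(K-1)m^3)$. Monomials with different $\mathbf v$ are distinct, so summing over $\cT$ gives \cref{v:eq:dimension}.

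Fix $\mathbf v\in\cT$ and write $t=\abs{\mathbf v}\le\ceil{sm}$. The constraints are \cref{v:eq:Q1}, \cref{v:eq:Q2}, \cref{v:eq:Q3}. Put $h=b_0+b_1+t$. Since $u\ge u_0$, we have $mA/(K-1)=um\ge u_0 m$, and $B=\ceil{um}\ge u_0m$. Moreover \cref{v:eq:Q3} permits $a<(K-1)(um-h)$. We restrict to
\[
 b_1\in\bigl[0,\floor{\overline\Delta m/4}\bigr],\qquad
 b_0\in\bigl[0,\floor{\overline\Delta m/4}\bigr].
\]
Then $b_1\le m$ because $\overline\Delta\le1$. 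Using $t\le sm+1$, we get
\[
 h\le sm+1+\overline\Delta m/2\le sm+\overline\Delta m\cdot\tfrac12+\overline\Delta m\cdot\tfrac12\le u_0m-\Delta m+\overline\Delta m\le u_0m .
\]
Here $1\le\overline\Delta m/2$ follows from the hypothesis $\overline\Delta m\ge2$. This step needs care: to keep an honest margin, take the $b$-ranges of length $\overline\Delta m/8$ instead. Then $h\le um-\overline\Delta m/4$. This margin gives \cref{v:eq:Q2} and leaves at least $(K-1)\overline\Delta m/4$ admissible values of $a$, namely all $0\le a<(K-1)\overline\Delta m/4$.

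The count per $\mathbf v$ is then at least
\[
 \bigl(\floor{\overline\Delta m/8}+1\bigr)^2\cdot(K-1)\overline\Delta m/4 .
\]
Since $\floor{x}+1\ge x$, this is at least $\overline\Delta^3(K-1)m^3/256$. That is comfortably above $\overline\Delta^3/8192$, so the generous constant absorbs any rounding losses. The main obstacle is purely bookkeeping: the ceilings in $\ceil{sm}$ and $B$ must be controlled using $\overline\Delta m\ge2$. I would double-check that step and, if needed, shrink the box side lengths by further factors of 2; each halving costs at most a factor $8$ in total, which the constant $8192=2^{13}$ accommodates. The closing ``in particular'' follows by fixing $u_0,s$. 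Then $\overline\Delta$ is a positive constant, and $\overline\Delta m\ge2$ holds for all large $m$.
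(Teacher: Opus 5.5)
Your final version, with $b_0,b_1\in[0,\floor{\overline\Delta m/8}]$ and $0\le a<(K-1)\overline\Delta m/4$, is correct and is essentially the paper's argument. The key step is $h\le sm+1+\overline\Delta m/4\le u_0m-\overline\Delta m/4$, which uses $\overline\Delta m\ge2$ to absorb the ceiling; the paper uses the same per-tail box in $(a,b_0,b_1)$, with side $\tau m$ for $\tau=\overline\Delta/16$ and looser integer counts. Your tighter counts actually give the constant $1/256$ in place of $1/8192$, and you can simply delete the initial attempt with side $\overline\Delta m/4$ (it only yields $h\le u_0m$, leaving no room for $a$) and the closing hedge about further halvings.
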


\begin{proof}
Set
\[
  \tau:=\frac{\overline\Delta}{16}.
\]
Every $\mathbf v\in\cT$ satisfies
\[
  um-\abs{\mathbf v}
  \ge u_0m-sm-1
  =\Delta m-1.
\]
Choose arbitrary integers
\[
  0\le b_0,b_1\le\floor{\tau m}.
\]
Since $b_0+b_1\le2\tau m$ and $\overline\Delta m\ge2$,
\begin{align}
  um-\abs{\mathbf v}-b_0-b_1
  &\ge \Delta m-1-2\tau m \notag\\
  &\ge4\tau m.\label{v:eq:dim-margin}
\end{align}
Therefore every integer
\[
  0\le a<4\tau(K-1)m
\]
satisfies \cref{v:eq:Q3}.  Since $a\ge0$, that strict inequality also implies
\[
  b_0+b_1+\abs{\mathbf v}<\frac{mA}{K-1}\le B,
\]
so \cref{v:eq:Q2} holds; and $b_1\le m$ because $\tau\le1/16$.

Each $b_i$ interval contains at least $\tau m/2$ integers, while the $a$ interval contains at least $2\tau(K-1)m$ integers.  Hence each $\mathbf v\in\cT$ contributes at least
\[
  \frac{\tau m}{2}\cdot\frac{\tau m}{2}
  \cdot 2\tau(K-1)m
  =\frac{\tau^3}{2}(K-1)m^3
  =\frac{\overline\Delta^3}{8192}(K-1)m^3
\]
monomials, proving \cref{v:eq:dimension}.
\end{proof}

\subsection{Tail-monomial geometry}\label{v:sec:tail-geometry}

The local rank estimate depends on how much the set of derivative-tail monomials grows when its weighted cutoff is enlarged by $m$.  The simplex--cube comparison and centroid--Markov argument below parameterize~\cite[Corollary~3.8 and the proof of Lemma~3.3]{BCPZZ26}.

For $z\ge0$, define
\begin{equation}\label{v:eq:Lambda}
  \Lambda(z):=
  \abs{\{\mathbf v\in\Z_{\ge0}^{r-1}:\omega(\mathbf v)\le z\}}.
\end{equation}
Let
\[
  \mathsf S_z:=\left\{x\in\mathbb R_{\ge0}^{r-1}:
     \sum_{j=1}^{r-1}j x_j\le z\right\}.
\]
A change of variables from the standard simplex gives
\begin{equation}\label{v:eq:volume}
  \operatorname{Vol}(\mathsf S_z)
  =\frac{z^{r-1}}{((r-1)!)^2}.
\end{equation}
The usual half-open unit-cube comparison gives
\begin{equation}\label{v:eq:lattice-volume}
  \frac{z^{r-1}}{((r-1)!)^2}
  \le\Lambda(z)
  \le\frac{(z+r(r-1)/2)^{r-1}}{((r-1)!)^2}.
\end{equation}
Indeed, flooring coordinates maps $\mathsf S_z$ into the union of cubes based at counted lattice points, and every such cube lies in $\mathsf S_{z+r(r-1)/2}$.

\begin{lemma}[Uniform parameterized growth bound]\label{v:lem:growth}
Let $J\subset(1,\infty)$ be compact.  There are constants $C_J>0$ and $r_J$ such that, for every $r\ge r_J$, every $c\in J$, and every $s>c$, upon setting $m=r^3$ we have
\begin{equation}\label{v:eq:growth}
  \Lambda(W+m)
  \le \frac{C_J}{1-c/s}\abs{\cT}\,r^{1/c}.
\end{equation}
\end{lemma}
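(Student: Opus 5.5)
Set $N:=r-1$. The plan is to compare $\Lambda(W+m)$ with $\Lambda(W)$ through the volume bounds \cref{v:eq:lattice-volume}, and then compare $\Lambda(W)$ with $\abs{\cT}$ by a Markov argument on $\abs{\mathbf v}$.

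\emph{Step 1 (growth of the weighted simplex).} With $m=r^3$ and $W=\floor{cr^4/\log(er)}$, we have $W\asymp r^4/\log r$. Hence $m/W\asymp \log r/(cr)$ and $r(r-1)/2\ll W$. By \cref{v:eq:lattice-volume},
\[
 \frac{\Lambda(W+m)}{\Lambda(W)}\le\Bigl(1+\frac{m+r^2/2}{W}\Bigr)^{N}
 \le\exp\Bigl(\frac{N(m+r^2)}{W}\Bigr).
\]
The exponent is $(1+o(1))\,r^4\log(er)/(cr^4)=(1+o(1))\log(er)/c$. This gives $\Lambda(W+m)\le C\,r^{1/c}\Lambda(W)$, with the $o(1)$ uniform for $c\in J$ because $J$ is compact; it also absorbs the floor in $W$ and the factor $e^{1/c}$.

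\emph{Step 2 (most of the weighted simplex has small $\abs{\mathbf v}$).} I would show that among lattice points with $\omega(\mathbf v)\le W$, the average of $\abs{\mathbf v}$ is at most about $m$, in fact about $W\log r/r^2\approx cm$. Heuristically this follows from the centroid of $\mathsf S_W$. Its coordinate $x_j$ (the exponent $v_{j+1}$, weight $j$) has mean $W/(jN+j)$, roughly $W/(jr)$. Summing over $j$ gives $\sum_j W/(jr)\approx W\log r/r=cr^3=cm$. Markov's inequality then yields
\[
 \#\{\mathbf v:\omega(\mathbf v)\le W,\ \abs{\mathbf v}>\ceil{sm}\}\le\frac{c}{s}\,(1+o(1))\,\Lambda(W),
\]
so $\abs{\cT}\ge(1-c/s-o(1))\Lambda(W)$. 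Combining with Step 1 gives \cref{v:eq:growth} with constant $C_J$. The $o(1)$ must be made small relative to $1-c/s$: either absorb it into a factor such as $(1-c/s)/2$ once $r\ge r_J$, or handle $s$ very close to $c$ separately. In the latter case, note that $\abs{\cT}\ge\Lambda(W)\cdot\Omega(1)$ is not available, so the slack has to come from Step 2 itself. I would make the Markov bound explicit with a factor $(1+\log r/r)$ and use slack from $W$ being slightly below $cr^4/\log(er)$.

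\emph{Main obstacle: the discrete centroid.} Step 2 needs the discrete average of $\abs{\mathbf v}$ over lattice points, not the continuous one. I would compute it exactly by symmetry of the volume: the number of points with $v_{j+1}\ge t$ equals $\Lambda(W-jt)$. Hence
\[
 \sum_{\mathbf v}v_{j+1}=\sum_{t\ge1}\Lambda(W-jt).
\]
Applying \cref{v:eq:lattice-volume} to each term and summing the resulting geometric-like series gives $\sum_{\mathbf v}v_{j+1}\le\Lambda(W)\,(W+r^2)/(jN)\cdot(1+o(1))$. Summing over $j\le N$ gives the total at most $\Lambda(W)\,(W+r^2)H_N/N$. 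This is $(1+o(1))\,cm\cdot\log r/\log(er)$, which is at most $cm(1+o(1))$, uniformly in $c\in J$. The delicate point is that the upper/lower lattice bounds differ by a factor $(1+r^2/W)^N=1+o(1)$, which holds since $r^3/W\to0$. That ratio is exactly what keeps the error terms negligible.
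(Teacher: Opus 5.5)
Your Step 1 is fine, and your ingredients (lattice--volume comparison plus a Markov bound on $\abs{\mathbf v}$) are the paper's. The gap is in Step 2, at exactly the point you flag, and the remedy you propose does not close it.

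\textbf{Where Step 2 fails.} You apply Markov to the \emph{discrete} uniform measure on $\{\omega(\mathbf v)\le W\}$ and control its mean only up to a factor $1+O(\log r/r)$. That factor comes from $(1+r^2/W)^{N}$ and from $N$ versus $r$. The resulting bound is $\abs{\cT}\ge(1-c/s-\eta_r)\Lambda(W)$, with an additive loss $\eta_r\to0$ that does not depend on $s$. The lemma requires $r_J$ independent of $s$, while $1-c/s$ can be arbitrarily small. So the bound is vacuous once $s/c-1\lesssim\eta_r$, and absorbing $\eta_r$ into $(1-c/s)/2$ would make $r_J$ depend on $s$.

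\textbf{Why your remedy does not work.} The floor in $W=\floor{cr^4/\log(er)}$ lowers $W$ by less than $1$. That is a relative slack of $O(\log r/r^4)$, which cannot offset a factor $1+\log r/r$. Your expression $(1+o(1))\,cm\log r/\log(er)$ also hides the issue. Its $1+o(1)$ contains $H_{r-1}/\log r=1+\Theta(1/\log r)$, which is of the same order as the margin $1/\log(er)$ you would need. The margin must therefore be computed with $H_{r-1}$ itself.

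\textbf{Repair within your route.} The usable slack is in the harmonic sum, not in the floor. Since $H_{r-1}<\log r+0.578$ while $\log(er)=\log r+1$, you get $WH_{r-1}/r\le cm\bigl(1-0.42/\log(er)\bigr)$. This relative margin $\Theta(1/\log r)$ dominates the $O(\log r/r)$ discretization error uniformly for $c\in J$. Keeping it, rather than rounding to $cm(1+o(1))$, gives a discrete mean at most $cm$, hence exactly $\abs{\cT}\ge(1-c/s)\Lambda(W)$.

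\textbf{The paper's cleaner route.} It avoids the discrete centroid entirely.
\begin{itemize}
\item Markov is applied to $X$ uniform on $\mathsf S_W$, whose mean is exactly $\frac{W}{r}H_{r-1}\le\frac{W\log(er)}{r}\le cm$. No error term appears.
\item Coordinatewise floors send $\{x\in\mathsf S_W:\abs{x}_1\le sm\}$ into disjoint half-open unit cubes based at points of $\cT$. This gives $\abs{\cT}\ge(1-c/s)\,W^{r-1}/((r-1)!)^2$ with no loss.
\item Dividing the upper lattice bound for $\Lambda(W+m)$ by this volume, rather than by $\Lambda(W)$, gives the lemma.
\end{itemize}
Only the growth estimate, your Step 1, then uses $r\ge r_J$.
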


\begin{proof}
Let $X$ be uniform on $\mathsf S_W$.  The centroid of this simplex gives
\[
  \mathbb E\abs{X}_1
  =\frac{W}{r}\sum_{j=1}^{r-1}\frac1j
  \le\frac{W\log(er)}{r}
  \le cm.
\]
Markov's inequality yields
\[
  \Pr[\abs{X}_1>sm]\le\frac cs.
\]
Taking coordinatewise floors maps the remaining subset of $\mathsf S_W$ into cubes based at points of $\cT$, and the cubes are disjoint.  Hence
\begin{equation}\label{v:eq:C-volume-lower}
  \abs{\cT}\ge
  \left(1-\frac cs\right)
  \frac{W^{r-1}}{((r-1)!)^2}.
\end{equation}
Using the upper bound in \cref{v:eq:lattice-volume} and dividing by \cref{v:eq:C-volume-lower},
\begin{equation}\label{v:eq:ratio}
 \frac{\Lambda(W+m)}{\abs{\cT}}
 \le\frac{1}{1-c/s}
 \left(1+\frac{m+r(r-1)/2}{W}\right)^{r-1}.
\end{equation}
Since $m=r^3$ and
\[
  W=\frac{cr^4}{\log(er)}+O(1),
\]
uniformly for $c\in J$, after increasing $r_J$ if necessary we have
\[
  (r-1)\frac{m+r(r-1)/2}{W}
  \le\frac1c\log r+C_J'
\]
for a constant $C_J'$ depending only on $J$.  Apply $\log(1+x)\le x$ in \cref{v:eq:ratio} and exponentiate, absorbing $e^{C_J'}$ into $C_J$.
\end{proof}

\subsection{The local rank bound}\label{v:sec:local-rank}

We next bound the number of independent equations imposed by one point--symbol incidence.  The factorization through $\Gamma$, the kernel spaces, and the rank--nullity argument below are parameterized versions of~\cite[Lemmas~3.10--3.12]{BCPZZ26}.  The key is not to count all truncated coefficients directly: after a triangular change of variables, a large explicit subspace maps to zero.

Fix $(\alpha,y)$.  Let $\Phi_{\alpha,y}$ be the linear map that records exactly the coefficients prohibited by \cref{v:eq:local-constraints}.  First substitute
\begin{equation}\label{v:eq:first-sub}
  X=\alpha+T,
  \qquad Y_0=y+TU
\end{equation}
and reduce modulo $T^m$.  We regard $\cV$ and all $\cK_\rho$ below as subspaces of $F[T,U,Y_1,\ldots,Y_r]/(T^m)$; displayed generators denote residue classes, with the same convention after rewriting $U$ in terms of $E$.  The image lies in
\begin{equation}\label{v:eq:V}
 \cV:=\Span\left\{
 T^\rho U^aY_1^bY^{\mathbf v}:
 \begin{array}{l}
 0\le\rho<m,\ 0\le a\le\rho,\ 0\le b\le m,\\
 \omega(\mathbf v)\le W+\rho
 \end{array}
 \right\}.
\end{equation}
Now rewrite
\begin{equation}\label{v:eq:E}
 E=U-\sum_{j=1}^{r}(-1)^{j+1}T^{j-1}Y_j.
\end{equation}
Let $\Gamma$ record, after this rewrite, all coefficients of monomials
\[
 T^iE^bY_1^{e_1}Y^{\mathbf v}
 \qquad\text{with }i+rb<m.
\]
Then $\Phi_{\alpha,y}$ factors through $\Gamma$, so
\begin{equation}\label{v:eq:rank-factor}
  \rank\Phi_{\alpha,y}\le\rank\Gamma.
\end{equation}

For $0\le\rho<m$, set
\begin{equation}\label{v:eq:h}
  h_\rho:=\ceil{\frac{m-\rho}{r}}.
\end{equation}
If $\rho-h_\rho<0$, let $\cK_\rho=0$.  Otherwise, let $\cK_\rho$ be spanned by
\begin{equation}\label{v:eq:kernel-vectors}
  T^\rho E^{h_\rho}U^aY_1^bY^{\mathbf v}
\end{equation}
with
\[
  0\le a\le\rho-h_\rho,
  \quad 0\le b\le m-h_\rho,
  \quad \omega(\mathbf v)\le W+\rho.
\]

\begin{lemma}[Kernel spaces]\label{v:lem:kernel}
For every $\rho$, $\cK_\rho\subseteq\cV\cap\ker\Gamma$.  The spaces $\cK_0,\ldots,\cK_{m-1}$ are linearly independent, and
\begin{equation}\label{v:eq:kernel-dim}
 \dim\ker\Gamma\ge
 \sum_{\rho=0}^{m-1}
 \max\left\{0,
 (\rho-h_\rho+1)(m-h_\rho+1)\Lambda(W+\rho)
 \right\}.
\end{equation}
\end{lemma}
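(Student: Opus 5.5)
We verify three things in turn: that each $\cK_\rho$ lies in $\cV$, that it lies in $\ker\Gamma$, and that the sum $\cK_0+\cdots+\cK_{m-1}$ is direct with the stated dimension count. Throughout we work in $F[T,U,Y_1,\ldots,Y_r]/(T^m)$. Note that $U$, $E$, $Y_1,\ldots,Y_r$ are related by the triangular substitution \cref{v:eq:E}: $U$ and $E$ differ by terms of $T$-degree $0,1,\ldots,r-1$ that are linear in the $Y_j$.

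\emph{Containment in $\cV$.} Expand a generator $T^\rho E^{h_\rho}U^aY_1^bY^{\mathbf v}$ using $E=U-Y_1+TY_2-\cdots$. A typical term is $T^{\rho+\sigma}U^{a+h_\rho-t}Y_1^{b+t_1}Y^{\mathbf v+\mathbf t'}$. Here $t\le h_\rho$ is the number of factors of $E$ replaced by $Y$-terms, and $\sigma=\sum_j (j-1)t_j$, so that $\omega(\mathbf t')=\sigma$.

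We check the defining constraints of $\cV$ for each such term.
\begin{itemize}
\item The $U$-exponent satisfies $a+h_\rho-t\le\rho\le\rho+\sigma$, since $a\le\rho-h_\rho$.
\item The $Y_1$-exponent satisfies $b+t_1\le m-h_\rho+h_\rho=m$.
\item The weight satisfies $\omega(\mathbf v+\mathbf t')\le W+\rho+\sigma$.
\end{itemize}
These are exactly the constraints of $\cV$ at $T$-degree $\rho+\sigma$. Terms with $\rho+\sigma\ge m$ vanish modulo $T^m$. This is the point where the condition $\rho-h_\rho\ge0$ is used.

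\emph{Containment in $\ker\Gamma$.} Here we rewrite in the basis $T^iE^bY_1^{e_1}Y^{\mathbf v}$. Substitute $U=E+\sum_j(-1)^{j+1}T^{j-1}Y_j$ into $U^a$. The result is a combination of monomials $T^{i}E^{b'}\cdots$ with $i\ge\rho$ and $b'\ge h_\rho$, since the factor $E^{h_\rho}$ is untouched. Then $i+rb'\ge\rho+r h_\rho\ge\rho+(m-\rho)=m$, so none of the coordinates recorded by $\Gamma$ (those with $i+rb<m$) is hit.

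\emph{Independence and the dimension count.} Order monomials in $T,U,Y$ by lowest $T$-degree. The leading part of each generator of $\cK_\rho$ in $T$-degree exactly $\rho$ is obtained by setting $T=0$ inside $E$, which gives $T^\rho(U-Y_1)^{h_\rho}U^aY_1^bY^{\mathbf v}$. We then apply a leading-term argument.
\begin{itemize}
\item Within a fixed $\rho$, the polynomials $(U-Y_1)^{h}U^aY_1^bY^{\mathbf v}$ are independent over the stated index ranges: after the invertible change of variables $U\mapsto U+Y_1$ they become distinct monomials $U^hY_1^{b}(U+Y_1)^a\cdots$, which are distinct by a leading monomial in $U$.
\item Across different $\rho$, a nontrivial relation would have a minimal $\rho$. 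Its $T^\rho$-coefficient involves only the leading parts of the $\cK_\rho$ generators, and these are independent.
\end{itemize}
Hence $\dim\bigoplus_\rho\cK_\rho=\sum_\rho(\rho-h_\rho+1)(m-h_\rho+1)\Lambda(W+\rho)$ over those $\rho$ with $\rho\ge h_\rho$, which gives \cref{v:eq:kernel-dim}.

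The main obstacle I expect is the first containment. It requires careful bookkeeping of how the $T^{j-1}Y_j$ terms shift both the $T$-degree and the weight in lockstep, and it must cover the case where the weight budget $W+\rho$ is just met.
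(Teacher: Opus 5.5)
Your proposal is correct and follows essentially the same route as the paper. Expanding $E^{h_\rho}$ places each generator in $\cV$ because $T$-degree and $\omega$-weight shift in lockstep. Rewriting $U$ in terms of $E$ keeps the factor $T^\rho E^{h_\rho}$, so every resulting monomial has $i+rb\ge\rho+rh_\rho\ge m$. Independence comes from the lowest-$T$-degree coefficients $(U-Y_1)^{h_\rho}U^aY_1^bY^{\mathbf v}$, together with a minimal-$\rho$ argument across different $\rho$.

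One wording fix for the within-$\rho$ step: after your substitution $U\mapsto U+Y_1$ the images are polynomials with distinct leading monomials, not distinct monomials. It is cleaner to say, as the paper does, that multiplying the distinct monomials $U^aY_1^bY^{\mathbf v}$ by the common nonzero factor $(U-Y_1)^{h_\rho}$ preserves linear independence in the domain $F[U,Y_1,\ldots,Y_r]$.
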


\begin{proof}
Expand $E^{h_\rho}$ by \cref{v:eq:E}.  If a term selects $e_j$ copies of $Y_j$ for $j\ge2$, it gains both $T$-degree and $\omega$-weight
\[
  L:=\sum_{j=2}^{r}(j-1)e_j.
\]
The resulting monomial has $T$-degree $\rho+L$, $U$-degree at most $h_\rho+a\le\rho\le\rho+L$, $Y_1$-degree at most $h_\rho+b\le m$, and $\omega$-weight at most $W+\rho+L$.  Every surviving term modulo $T^m$ therefore belongs to $\cV$.

After rewriting $U$ in terms of $E,Y_1,\ldots,Y_r$, every term in \cref{v:eq:kernel-vectors} remains divisible by $T^\rho E^{h_\rho}$.  Since $\rho+rh_\rho\ge m$, no resulting monomial is recorded by $\Gamma$.  Thus $\cK_\rho\subseteq\ker\Gamma$.

For fixed $\rho$, the displayed vectors are independent: their coefficient at the lowest $T$-degree is the common nonzero factor $(U-Y_1)^{h_\rho}$ times distinct monomials in $U,Y_1,Y_2,\ldots,Y_r$.  Spaces with different $\rho$ are independent by considering the smallest nonzero $T$-degree in a putative relation.  Counting the allowed exponents gives \cref{v:eq:kernel-dim}.
\end{proof}

\begin{lemma}[Local rank]\label{v:lem:local-rank}
Fix $1<c<s$, set $m=r^3$, and define $W,\cT$ as above.  For all sufficiently large $r$,
\begin{equation}\label{v:eq:local-rank}
  \rank\Phi_{\alpha,y}
  \le C(c,s)\abs{\cT}m^3r^{-(1-1/c)}.
\end{equation}
The bound is uniform in $(\alpha,y)$.
\end{lemma}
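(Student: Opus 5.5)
The plan is to bound $\rank\Phi_{\alpha,y}$ by $\rank\Gamma$ via \cref{v:eq:rank-factor}, with $\Gamma$ restricted to the subspace $\cV$ of \cref{v:eq:V}. Rank--nullity on $\cV$ then gives
\[
 \rank\Phi_{\alpha,y}\le\rank(\Gamma|_{\cV})=\dim\cV-\dim(\ker\Gamma\cap\cV).
\]
This is legitimate because the first substitution \cref{v:eq:first-sub} followed by reduction mod $T^m$ maps $\cQ$ into $\cV$. Indeed, $X^a$ only contributes powers of $T$. The term $Y_0^{b_0}=(y+TU)^{b_0}$ gives $U^{a'}$ with $a'\le$ the accompanying $T$-degree. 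We have $b_1\le m$. The weight $\omega(\mathbf v)\le W$ of every $\mathbf v\in\cT$ is certainly at most $W+\rho$. So $\Gamma$ composed with this map equals $\Phi$ composed with a further coordinate projection, as already stated.

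Next I count dimensions exactly. We have $\dim\cV=\sum_{\rho=0}^{m-1}(\rho+1)(m+1)\Lambda(W+\rho)$, and \cref{v:lem:kernel} gives a lower bound on $\dim(\ker\Gamma\cap\cV)$ by the sum \cref{v:eq:kernel-dim}. Subtracting termwise, I split according to whether $\rho\ge h_\rho$ or $\rho<h_\rho$.

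\begin{itemize}
\item For $\rho\ge h_\rho$, expanding the product gives
\[
 (\rho+1)(m+1)-(\rho-h_\rho+1)(m-h_\rho+1)\le h_\rho(m+1)+h_\rho(\rho+1)\le 2h_\rho(m+1).
\]
\item For $\rho<h_\rho$, the whole term $(\rho+1)(m+1)\le h_\rho(m+1)$ survives.
\end{itemize}

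Since $h_\rho\le m/r+1$, and since $\Lambda(W+\rho)\le\Lambda(W+m)$ by monotonicity, summing over the $m$ values of $\rho$ yields
\[
 \rank\Phi_{\alpha,y}\le 2m(m+1)\Bigl(\frac mr+1\Bigr)\Lambda(W+m)\le C\,\frac{m^3}{r}\,\Lambda(W+m),
\]
using $m=r^3\ge r$. Nothing in this count depends on $(\alpha,y)$, which gives the uniformity claim.

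Finally, I apply \cref{v:lem:growth} with the compact set $J=\{c\}$. This gives $\Lambda(W+m)\le\frac{C_J}{1-c/s}\abs{\cT}r^{1/c}$ for $r\ge r_J$. Substituting yields \cref{v:eq:local-rank} with $C(c,s)=C\,C_J/(1-c/s)$.

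The main obstacle I expect is the legitimacy of the restriction step: $\Phi$ must genuinely factor through $\Gamma|_{\cV}$, not through $\Gamma$ on the whole ambient quotient. Otherwise the kernel count from \cref{v:lem:kernel}, which lives inside $\cV$, would not translate into a rank bound. I would check carefully that the triangular change of variables \cref{v:eq:E} is invertible on $F[T,U,Y_1,\ldots,Y_r]/(T^m)$. I would also check that the coefficients recorded by $\Phi$, namely those of $T^iE^bY^{\mathbf e}$ with $i<m-rb$, are exactly the ones recorded by $\Gamma$ after the further substitution $U=Y_0$-offset. The remaining estimates are routine bookkeeping.
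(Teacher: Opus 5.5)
Your proposal is correct and follows the paper's proof step for step. You bound $\rank\Phi_{\alpha,y}$ by the rank of $\Gamma$ on $\cV$, subtract the kernel spaces $\cK_\rho$ of \cref{v:lem:kernel} termwise to get $O(h_\rho m)$ per $T$-degree and $O(m^3/r)\Lambda(W+m)$ in total, and then invoke \cref{v:lem:growth}. Your coarser bound $h_\rho\le m/r+1$ (constant $8$ instead of $6$) and the choice $J=\{c\}$ are cosmetic. The restriction worry also dissolves: the rewrite $U\mapsto E+\sum_j(-1)^{j+1}T^{j-1}Y_j$ fixes $T$, so $\Phi_{\alpha,y}$ is exactly $\Gamma$ composed with the first substitution into $\cV$.
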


\begin{proof}
At fixed $T$-degree $\rho$, the space $\cV$ has dimension
\[
  (\rho+1)(m+1)\Lambda(W+\rho).
\]
By \cref{v:lem:kernel} and rank--nullity,
\begin{equation}\label{v:eq:rank-sum}
 \rank\Gamma\le
 \sum_{\rho=0}^{m-1}B_\rho\Lambda(W+\rho),
\end{equation}
where
\[
 B_\rho:=(\rho+1)(m+1)-
 \max\{0,(\rho-h_\rho+1)(m-h_\rho+1)\}.
\]
A direct expansion, including the case in which the maximum is zero, gives
\[
 B_\rho\le h_\rho(\rho+m+2),
 \qquad
 h_\rho\le\frac{m-\rho}{r}+1.
\]
Because $m=r^3$,
\begin{align*}
 \sum_{\rho=0}^{m-1}B_\rho
 &\le\sum_{\rho=0}^{m-1}
 \left(\frac{m-\rho}{r}+1\right)(\rho+m+2)\\
 &\le 6\frac{m^3}{r}
\end{align*}
for all sufficiently large $r$.  Apply \cref{v:lem:growth} with any compact interval $J\subset(1,\infty)$ containing $c$, and use monotonicity of $\Lambda$ together with \cref{v:eq:rank-factor,v:eq:rank-sum}:
\[
 \rank\Phi_{\alpha,y}
 \le6\frac{m^3}{r}\Lambda(W+m)
 \le C(c,s)\abs{\cT}m^3r^{1/c-1}.
\]
\end{proof}

\subsection{The global interpolation theorem}\label{v:sec:global-interpolation}

We now compare the dimension of $\cQ$ with the sum of the local ranks.  The first proposition records the derivative-order dependence established above by reparameterizing the dimension and rank arguments of~\cite{BCPZZ26} and records a fixed-shape version that makes the basic mechanism transparent.  The second lets the shape approach the full degree slack and is the form used by the decoder.

We first give a clean fixed-shape formulation.  Fix $\vartheta,\lambda\in(0,1)$ and define
\begin{equation}\label{v:eq:fixed-shape}
 c:=\frac{1}{1-\lambda\vartheta},
 \qquad
 s:=\frac12\left(
  \frac{1}{1-\lambda\vartheta}
  +\frac{1}{1-\vartheta}
 \right).
\end{equation}
Then
\[
 1<c<s<\frac{1}{1-\vartheta},
 \qquad
 1-\frac1c=\lambda\vartheta.
\]

\begin{proposition}[Fixed-shape sparse interpolant]\label{v:prop:fixed-interpolant}
Fix $\vartheta,\lambda\in(0,1)$.  There are effective constants $r_0$ and $a_{\vartheta,\lambda}>0$ such that the following holds.  Suppose
\[
 \frac{A}{K-1}\ge\frac{1}{1-\vartheta},
 \qquad r\ge r_0,
 \qquad m=r^3.
\]
For a list-recovery instance of total incidence $M$, if
\begin{equation}\label{v:eq:fixed-incidence}
  M\le a_{\vartheta,\lambda}(K-1)r^{\lambda\vartheta},
\end{equation}
then there is a nonzero $Q\in\cQ$ satisfying the local constraints at every incidence
\[
  (\alpha_i,y),\qquad y\in S_i.
\]
Every degree-below-$K$ polynomial hitting at least $A$ lists satisfies
\begin{equation}\label{v:eq:diffeq}
  Q(X,P,P^{[1]},\ldots,P^{[r]})\equiv0.
\end{equation}
\end{proposition}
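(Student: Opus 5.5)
The plan is a standard dimension-versus-constraints count followed by a degree/multiplicity vanishing argument, using the lemmas of this section with the fixed shape \eqref{v:eq:fixed-shape}.

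\textbf{Existence of $Q$.} The local constraints \eqref{v:eq:local-constraints} at all $M$ incidences form a homogeneous linear system on $\cQ$. Its rank is at most $\sum_{(\alpha_i,y)}\rank\Phi_{\alpha_i,y}$. By \cref{v:lem:local-rank}, with $c,s$ fixed by $\vartheta,\lambda$ and $1-1/c=\lambda\vartheta$, this is at most
\[
 M\,C(c,s)\abs{\cT}m^3r^{-\lambda\vartheta}
\]
once $r$ is large. For the dimension, apply \cref{v:lem:dimension} with $u_0=1/(1-\vartheta)$, so that $u\ge u_0$ by hypothesis. Since $s<u_0$ is a fixed gap and $\overline\Delta m=\overline\Delta r^3\ge 2$ for large $r$, we get $\dim\cQ\ge a_0\abs{\cT}(K-1)m^3$ with $a_0$ depending only on $\vartheta,\lambda$. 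Both bounds carry the same factor $\abs{\cT}m^3$, which cancels. So choosing $a_{\vartheta,\lambda}<a_0/C(c,s)$ and using \eqref{v:eq:fixed-incidence} makes the rank strictly smaller than $\dim\cQ$. A nonzero solution $Q$ therefore exists. Here $r_0$ is the maximum of the thresholds from \cref{v:lem:dimension}, \cref{v:lem:growth} and \cref{v:lem:local-rank}.

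\textbf{Vanishing.} Let $\deg P<K$ with $P(\alpha_i)\in S_i$ for at least $A$ indices $i$. Put $R(X)=Q(X,P,P^{[1]},\ldots,P^{[r]})$. For each such $i$, pick $y=P(\alpha_i)\in S_i$. Then \cref{v:lem:multiplicity} gives $R(\alpha_i+T)\equiv0\pmod{T^m}$, so $(X-\alpha_i)^m$ divides $R$. The $\alpha_i$ are distinct, so $R$ has at least $mA$ roots counted with multiplicity. But $\deg R<mA$ by \cref{v:lem:degree}, hence $R\equiv0$, which is \eqref{v:eq:diffeq}.

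\textbf{Main obstacle.} The delicate step is making the constants uniform. $C(c,s)$ in \cref{v:lem:local-rank} comes from \cref{v:lem:growth} through the factor $1/(1-c/s)$ and the compact interval $J$ around $c$. Since $c$ and $s$ are determined by $\vartheta,\lambda$ alone, one fixed $J$ works, and all thresholds and constants depend only on $(\vartheta,\lambda)$. I would also check that the rank bound counts each incidence separately, including coordinates with $|S_i|>1$, so that summing over $M$ incidences rather than $n$ coordinates is legitimate.
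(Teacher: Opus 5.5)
Your proposal is correct and follows the paper's proof essentially step for step. You bound $\dim\cQ$ from below via \cref{v:lem:dimension} with $u_0=1/(1-\vartheta)$, compare it against $M$ times the per-incidence rank bound of \cref{v:lem:local-rank}, and take $a_{\vartheta,\lambda}<a_0/C_0$; then you conclude with the same argument that $A$ roots of multiplicity $m$ force a polynomial of degree below $mA$ to vanish. Your remarks on keeping the constants uniform in $(\vartheta,\lambda)$ and on summing over incidences rather than coordinates match how the paper stacks the local maps.
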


\begin{proof}
Apply \cref{v:lem:dimension} with $u_0=1/(1-\vartheta)$ and the value of $s$ in \cref{v:eq:fixed-shape}, increasing $r_0$ so that $\min\{1,u_0-s\}r^3\ge2$.  It gives
\[
  \dim\cQ\ge a_0\abs{\cT}(K-1)m^3.
\]
By \cref{v:lem:local-rank}, one incidence has rank at most
\[
  C_0\abs{\cT}m^3r^{-\lambda\vartheta}.
\]
The stacked map over all $M$ incidences has rank at most the sum of these ranks.  Taking $a_{\vartheta,\lambda}<a_0/C_0$ leaves a nonzero vector in its kernel.

Let $P$ hit the lists on a set $I$ with $\abs I\ge A$.  For every $i\in I$, apply \cref{v:lem:multiplicity} with $y=P(\alpha_i)\in S_i$.  The specialization in \cref{v:eq:diffeq} has $A$ distinct roots, each of multiplicity at least $m$.  By \cref{v:lem:degree}, its degree is below $mA$, so it is identically zero.
\end{proof}

\subsubsection{Using the full slack}

We now track the constants while allowing the shape to depend mildly on $r$.

\begin{proposition}[Optimized sparse interpolant]\label{v:prop:optimized-interpolant}
Fix $\vartheta\in(0,1)$.  There are effective constants $r_0$ and $a_\vartheta>0$ such that, for integers $2\le K\le A\le n$ and $1\le r<K$, whenever
\[
  \frac{A}{K-1}\ge u_0:=\frac{1}{1-\vartheta},
  \qquad r\ge r_0,
  \qquad m=r^3,
\]
and
\begin{equation}\label{v:eq:optimized-incidence}
  M\le a_\vartheta(K-1)
       \frac{r^\vartheta}{\log^4(er)},
\end{equation}
there is a nonzero interpolant satisfying every incidence constraint, and every degree-below-$K$ polynomial with at least $A$ list hits satisfies its differential equation.
\end{proposition}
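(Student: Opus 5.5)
The plan is to rerun the proof of \cref{v:prop:fixed-interpolant}, but with the shape parameters $c,s$ depending on $r$. The exponent $1-1/c$ should come within $1/\log r$ of $\vartheta$, while the two buffers $u_0-s$ and $1-c/s$ shrink only like $1/\log r$. The logarithmic losses are then tracked explicitly instead of being absorbed into $C(c,s)$ and $a_0$.

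\textbf{Choice of shape.} Put $\delta:=1/\log(er)$ and define
\[
 c:=\frac{1}{1-\vartheta+\delta},\qquad s:=\frac{c+u_0}{2},\qquad u_0=\frac1{1-\vartheta}.
\]
Then $1-1/c=\vartheta-\delta$, so that
\[
 r^{-(1-1/c)}=r^{-\vartheta}r^{\delta}\le e\,r^{-\vartheta}.
\]
Moreover $u_0-c=\Theta_\vartheta(\delta)$, hence both $\Delta=u_0-s$ and $1-c/s$ are $\Theta_\vartheta(1/\log(er))$. As $r\to\infty$, $c$ stays in the compact interval $J:=[\tfrac{1}{1-\vartheta/2},u_0]\subset(1,\infty)$ once $r\ge r_0$. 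Consequently \cref{v:lem:growth} applies with constants $C_J,r_J$ depending only on $\vartheta$, and its proof already exhibits the explicit factor $1/(1-c/s)$.

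\textbf{Dimension and rank.} By \cref{v:lem:dimension}, whose hypothesis $\overline\Delta m\ge2$ holds since $m=r^3$ and $\Delta\gtrsim 1/\log r$,
\[
 \dim\cQ\ge \frac{\Delta^3}{8192}\abs{\cT}(K-1)m^3\ge b_\vartheta\frac{\abs{\cT}(K-1)m^3}{\log^3(er)}.
\]
For the local rank I repeat the proof of \cref{v:lem:local-rank} without hiding constants. The bound $\sum_\rho B_\rho\le 6m^3/r$ does not involve $c$ or $s$. Combining it with \cref{v:lem:growth} gives
\[
 \rank\Phi_{\alpha,y}\le 6\frac{m^3}{r}\cdot\frac{C_J}{1-c/s}\abs{\cT}r^{1/c}\le C'_\vartheta\log(er)\,\abs{\cT}m^3 r^{-\vartheta}.
\]
The step that needs care is uniformity in \cref{v:lem:growth}. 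The estimate $(r-1)(m+r(r-1)/2)/W\le\frac1c\log r+C'_J$ uses $W=cr^4/\log(er)+O(1)$ uniformly over $c\in J$. Since $s$ enters only through the Markov step, this survives an $r$-dependent choice of $c,s$, and the $r$-dependence appears only through the explicit $1/(1-c/s)$.

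\textbf{Conclusion.} Stacking the constraints over all $M$ incidences gives total rank at most $M\,C'_\vartheta\log(er)\abs{\cT}m^3r^{-\vartheta}$. This is strictly less than $\dim\cQ$ provided
\[
 M< \frac{b_\vartheta}{C'_\vartheta}(K-1)\frac{r^\vartheta}{\log^4(er)},
\]
so taking $a_\vartheta$ slightly below $b_\vartheta/C'_\vartheta$ yields a nonzero kernel vector $Q$. The hypothesis $r<K$ is needed for the weights in \cref{v:lem:degree}, and $A/(K-1)\ge u_0>s$ is exactly what \cref{v:lem:dimension} requires. The differential-equation conclusion then follows verbatim from the fixed-shape proof: for $P$ with $\deg P<K$ and $\ge A$ list hits, \cref{v:lem:multiplicity} gives multiplicity $m$ at $A$ distinct points. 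Since the specialization has degree $<mA$ by \cref{v:lem:degree}, it vanishes identically.

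The main obstacle is the bookkeeping just described: verifying that no hidden constant in \cref{v:lem:dimension,v:lem:growth,v:lem:local-rank} blows up as $c\to u_0$ beyond the three displayed buffer factors.
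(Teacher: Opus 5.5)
Your proposal is correct and is essentially the paper's proof. It uses an $r$-dependent shape in which $u_0-s$, $s-c$, and $\vartheta-(1-1/c)$ are all of order $\delta=1/\log(er)$, so that \cref{v:lem:dimension} loses $\delta^3$, the factor $1/(1-c/s)$ in \cref{v:lem:growth} loses $\delta^{-1}$, and $r^{1/c-1}=O_\vartheta(r^{-\vartheta})$, together giving the $\log^4(er)$ budget; the only difference is cosmetic, since the paper takes $c=u_0-2\delta$ and $s=u_0-\delta$ instead of your $c=(1-\vartheta+\delta)^{-1}$ and $s=(c+u_0)/2$.
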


\begin{proof}
Put $L=\log(er)$ and, for sufficiently large $r$, choose
\begin{equation}\label{v:eq:optimized-shape}
  \delta:=\frac1L,
  \qquad
  c:=u_0-2\delta,
  \qquad
  s:=u_0-\delta.
\end{equation}
Increasing $r_0(\vartheta)$ ensures $1<c<s<u_0$, that
\[
  c\in J_\vartheta:=\left[\frac{1+u_0}{2},u_0\right],
\]
and that $\delta m=r^3/\log(er)\ge2$.

Apply \cref{v:lem:dimension} with gap $u_0-s=\delta$ to obtain
\begin{equation}\label{v:eq:optimized-dimension}
  \dim\cQ\ge c_\vartheta\delta^3
  \abs{\cT}(K-1)m^3.
\end{equation}
Apply the uniform form of \cref{v:lem:growth} with the compact interval $J_\vartheta$.  Since
\[
  \frac{1}{1-c/s}=\frac{s}{s-c}=O_\vartheta(\delta^{-1}),
\]
we obtain
\[
  \Lambda(W+m)
  \le C_\vartheta\delta^{-1}\abs{\cT}r^{1/c}.
\]
Moreover,
\[
  \vartheta-\left(1-\frac1c\right)
  =\frac1c-\frac1{u_0}
  =\frac{2\delta}{cu_0}.
\]
Since $\delta\log r=O(1)$,
\[
  r^{1/c-1}=O_\vartheta(r^{-\vartheta}).
\]
Combining this uniform estimate with the bound $\rank\Phi_{\alpha,y}\le6m^3\Lambda(W+m)/r$ established in the proof of \cref{v:lem:local-rank} gives
\begin{equation}\label{v:eq:optimized-rank}
  \rank\Phi_{\alpha,y}
  \le C_\vartheta\delta^{-1}
  \abs{\cT}m^3r^{-\vartheta}.
\end{equation}
The total rank over all incidences is smaller than \cref{v:eq:optimized-dimension} whenever
\[
  M\le c_\vartheta'(K-1)\delta^4r^\vartheta,
\]
which is \cref{v:eq:optimized-incidence}.  The multiplicity argument is unchanged.
\end{proof}

\subsection{The coefficient-level interpolation interface}\label{sec:matrix-interface}
For symbolic applications we need the linear system itself, not just the existence of an interpolant for each received word.

\begin{corollary}[Specialization-compatible interpolation]\label{cor:interpolation-interface}
Fix $\vartheta\in(0,1)$, and let $r_0,a_\vartheta$ be the constants of \cref{v:prop:optimized-interpolant}. Let $F$ be any field, let $\alpha_1,\ldots,\alpha_n\in F$ be distinct, and let $S_i\subseteq F$ be finite sets of total incidence $M$. Suppose
\[
 2\le K\le A\le n,\qquad r_0\le r<K,\qquad m=r^3,
\]
\[
 \frac{A}{K-1}\ge\frac1{1-\vartheta},\qquad
 M\le a_\vartheta(K-1)\frac{r^\vartheta}{\log^4(er)}.
\]
Put $B=\ceil{mA/(K-1)}$. There is a homogeneous interpolation matrix, in the fixed monomial basis of \cref{v:eq:Q-monomial,v:eq:Q1,v:eq:Q2,v:eq:Q3}, with a nonzero kernel. Every nonzero kernel vector gives a polynomial $Q$ satisfying
\begin{equation}\label{eq:interface-degrees}
 \deg_Y^{\rm tot}Q\le B,\qquad
 \deg_{(1,K-1,\ldots,K-1)}Q<mA.
\end{equation}
Every degree-below-$K$ polynomial with at least $A$ list hits satisfies
\begin{equation}\label{eq:interface-identity}
 Q(X,P,P^{[1]},\ldots,P^{[r]})\equiv0.
\end{equation}
For fixed $r,B$, the matrix has $O_{r,B}(n)$ columns and $O_{r,B}(M)$ rows. Each entry in the block for $(\alpha_i,y)$ is a polynomial of degree at most $B$ in $y$. All these assertions about a kernel vector and its local constraints remain valid under extension of the coefficient field.
\end{corollary}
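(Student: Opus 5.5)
This corollary is essentially a restatement of \cref{v:prop:optimized-interpolant} in matrix form, so the plan is to unwind that proof. First, fix the monomial basis of $\cQ$ given by \cref{v:eq:Q-monomial,v:eq:Q1,v:eq:Q2,v:eq:Q3}; it depends only on $A,K,r,m$ and not on the field or the data. For each incidence $(\alpha_i,y)$ with $y\in S_i$, write the linear map $\Phi_{\alpha_i,y}$ of \cref{v:sec:local-rank} as a block of rows: one row for each prohibited coefficient, namely the coefficients of $T^jE^bY_1^{e_1}\cdots Y_r^{e_r}$ with $rb<m$ and $j<m-rb$ after the substitution \cref{v:eq:local-substitution}. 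Stacking these blocks gives the interpolation matrix, and its kernel is precisely the set of $Q\in\cQ$ satisfying all local constraints. The rank comparison in the proof of \cref{v:prop:optimized-interpolant}, namely \cref{v:eq:optimized-dimension} against $M$ times \cref{v:eq:optimized-rank}, shows that the number of columns exceeds the rank, so the kernel is nonzero.

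Next come the degree bounds and the identity. The bound $\deg_Y^{\rm tot}Q\le B$ is \cref{v:eq:Q2} together with $b_0+b_1+|\mathbf v|$ being the total $Y$-degree. The weighted bound is \cref{v:eq:Q3}. The identity \cref{eq:interface-identity} holds for every kernel vector, not just a chosen one: \cref{v:lem:multiplicity} uses only the constraints, and \cref{v:lem:degree} uses only membership in $\cQ$. Hence the root-counting argument from \cref{v:prop:fixed-interpolant} applies verbatim.

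For the size and entry claims, the column count is $|\cT|$ times the number of admissible triples $(a,b_0,b_1)$. This is at most $|\cT|(B+1)^2\cdot mA\le O_{r,B}(n)$, since $A\le n$, $m=r^3$, and $|\cT|\le (B+1)^{r-1}$. Per incidence, the number of rows is bounded by the number of monomials $T^jE^bY^{\mathbf e}$ with $j<m$, $b<m/r$, and total $Y$-degree at most $B+\text{(degree from the substitution)}$. This is some $O_{r,B}(1)$, so there are $O_{r,B}(M)$ rows overall.

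The entry degree in $y$ is the one step needing an actual check. Expanding $Y_0^{b_0}$ with $Y_0=y+(\cdots)$ produces powers $y^t$ with $t\le b_0\le B$. The substitution $X=\alpha+T$ introduces only $\alpha$, and the rewriting into the $E$ basis is $y$-independent. So each entry is a polynomial in $y$ of degree at most $B$.

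Field extension is the remaining point. All matrix entries lie in the prime subfield adjoined with $\alpha_i,y$, and both the constraints and the multiplicity and degree arguments are purely algebraic identities. Rank and kernel nonvanishing are therefore stable under extension, as is the root counting in $F'[X]$.

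I expect the only real obstacle to be bookkeeping that the row count is bounded independently of $n$. That requires checking that $\deg_{Y}$ after substitution stays $O_{r,B}(1)$, which follows because each $Y_0$ expands into at most $r+1$ $Y$-variables with degree $1$.
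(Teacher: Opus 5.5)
Your proposal is correct and follows essentially the same route as the paper. The matrix is the stack of the local coefficient maps $\Phi_{\alpha_i,y}$ in the fixed basis of $\cQ$, and the nonzero kernel comes from the dimension--rank comparison of \cref{v:prop:optimized-interpolant}. The degree bounds and the differential identity hold for every kernel vector because they use only membership in $\cQ$ and the local constraints. The size, $y$-degree, and field-extension claims are the same bookkeeping the paper does, and your explicit counts just spell out details the paper states in one line: $|\cT|\le(B+1)^{r-1}$, valid since $\lceil sm\rceil\le\lceil um\rceil=B$, and total degree at most $B$ in $E,Y_1,\ldots,Y_r$ after the substitution.
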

\begin{proof}
The matrix records exactly the coefficient constraints in \cref{v:eq:local-constraints}. The nonzero kernel is supplied by \cref{v:prop:optimized-interpolant}. The fixed basis enforces \cref{eq:interface-degrees} for every vector, and the multiplicity argument in \cref{v:lem:multiplicity} proves \cref{eq:interface-identity} for every kernel vector.

For fixed $r,B$, there are constantly many possible $Y$-monomials and fewer than $mA=O_r(n)$ choices of the $X$-exponent. In a local block, both the truncated $T$-degree and every exponent in the substituted derivative variables and $E$ are bounded in terms of $r,B$. Thus each incidence contributes only constantly many rows. In the substitution for $Y_0$, the received symbol occurs to degree at most its original exponent, which is at most $B$. These are identities over the coefficient field, so extension of scalars preserves the local equations and the degree--multiplicity implication.
\end{proof}

\section{Uniform parameters and deterministic decoding}\label{sec:decoding}
\subsection{Padding at the actual rate}
The following choice is shared by the algorithmic and geometric arguments. It makes the constants independent of the actual rate $k/n$.

\begin{lemma}[Uniform fixed-slack parameters]\label{lem:uniform-padding}
Fix $\gamma\in(0,1)$ and an integer $L\ge1$. For all sufficiently large $n$, and every pair of integers
\[
 1\le k\le(1-\gamma)n,\qquad k+\ceil{\gamma n}\le A\le n,
\]
one can choose $r=r(\gamma,L)$ and $m=r^3$, independently of $n,k,A$, so that, with
\begin{equation}\label{w:eq:A-K}
 \vartheta:=\frac\gamma2,\qquad K:=\ceil{(1-\vartheta)A},
\end{equation}
all hypotheses of \cref{cor:interpolation-interface} hold for every instance with total incidence $M\le Ln$. Moreover,
\begin{equation}\label{w:eq:B-constant}
 K-k\ge\frac{\gamma n}{2},\qquad
 B=\ceil{\frac{mA}{K-1}}\le B_{\gamma,L}
\end{equation}
for an effective constant $B_{\gamma,L}$. For every prime $q\ge n$, after increasing the same lower threshold for $n$, one also has
\[
 r<K\le q,\qquad B<q,\qquad r^3A<q^2.
\]
\end{lemma}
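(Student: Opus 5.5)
This is a parameter check, so the plan is to verify each hypothesis of \cref{cor:interpolation-interface} in turn, with $\vartheta=\gamma/2$ and $K=\ceil{(1-\vartheta)A}$.

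\emph{Choice of $r$.} Let $r_0=r_0(\vartheta)$ and $a_\vartheta$ be the constants of \cref{v:prop:optimized-interpolant}. Since $r^\vartheta/\log^4(er)\to\infty$, we can choose $r=r(\gamma,L)\ge r_0$ with
\[
 a_\vartheta\frac{r^\vartheta}{\log^4(er)}\ge\frac{4L}{\gamma}.
\]
Put $m=r^3$; both depend only on $(\gamma,L)$.

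\emph{Lower bounds on $K$.} We have $K\ge(1-\vartheta)A\ge(1-\gamma/2)(k+\gamma n)$. Expanding,
\[
 K-k\ge\gamma n-\tfrac{\gamma}{2}(k+\gamma n).
\]
With $k\le(1-\gamma)n$, this gives $k+\gamma n\le n$, so $K-k\ge\gamma n/2$. In particular $K\ge\gamma n/2$. Therefore $K\ge2$ and $r<K$ hold once $n$ exceeds a threshold depending on $(\gamma,L)$, and then $K-1\ge\gamma n/4$. For the incidence condition,
\[
 M\le Ln\le\frac{4L}{\gamma}(K-1)\le a_\vartheta(K-1)\frac{r^\vartheta}{\log^4(er)}.
\]

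\emph{The ratio condition $A/(K-1)\ge1/(1-\vartheta)$.} This is the delicate point, since the ceiling works against us. From $K-1<(1-\vartheta)A$ we get $A/(K-1)>1/(1-\vartheta)$, which is even strict. The constraint $K\le A$ is immediate from $K=\ceil{(1-\vartheta)A}\le A$, and $A\le n$ is a hypothesis of the lemma.

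\emph{The constant $B_{\gamma,L}$.} Using $K-1\ge\gamma n/4$ and $A\le n$,
\[
 \frac{mA}{K-1}\le\frac{4m}{\gamma},
\]
so $B\le B_{\gamma,L}:=\ceil{4r^3/\gamma}$, which depends only on $(\gamma,L)$.

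\emph{The field inequalities.} We have $K\le A\le n\le q$, and $r<K$ was shown above. Enlarging the threshold on $n$ so that $n>B_{\gamma,L}$ gives $B<q$. Likewise, enlarging it so that $n>r^3$ gives $r^3A\le r^3n<n^2\le q^2$.

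The only step needing care is the ceiling in the ratio condition, and it resolves in our favor. Everything else reduces to taking the threshold $n_0(\gamma,L)$ large enough.
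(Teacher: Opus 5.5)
Your proposal is correct and follows the paper's proof essentially step for step. Both arguments choose $r$ using $r^\vartheta/\log^4(er)\to\infty$. Both compute $K-k\ge\gamma n-\vartheta(k+\gamma n)\ge\gamma n/2$, get the strict ratio from $K-1<(1-\vartheta)A$, and enlarge the length threshold to obtain $r<K$, $B<q$ and $r^3A<n^2\le q^2$. The only differences are cosmetic constants: you use $K-1\ge\gamma n/4$ and $B\le\lceil 4m/\gamma\rceil$ via $A\le n$, whereas the paper uses $K-1\ge(1-\vartheta)\gamma n/2$ and $B\le\lceil 2m/(1-\vartheta)\rceil$ via $K-1\ge(1-\vartheta)A/2$.
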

\begin{proof}
Choose an integer $r\ge r_0(\vartheta)$ sufficiently large that
\begin{equation}\label{w:eq:r-choice}
 1\le \frac{a_\vartheta(1-\vartheta)\gamma}{3L}
             \frac{r^\vartheta}{\log^4(er)}.
\end{equation}
Such an effective choice exists because $r^\vartheta/\log^4(er)$ tends to infinity. Since $A\ge k+\gamma n$ and $k+\gamma n\le n$,
\begin{equation}\label{w:eq:K-ge-k}
 K-k\ge(1-\vartheta)(k+\gamma n)-k
      =\gamma n-\vartheta(k+\gamma n)\ge\gamma n/2.
\end{equation}
For sufficiently large $n$, $2\le K\le A$ and $r<K$. Also
\[
 K-1<(1-\vartheta)A,\qquad
 K-1\ge\frac{(1-\vartheta)\gamma n}{2}.
\]
These inequalities give the required ratio and, with \cref{w:eq:r-choice}, the incidence budget for $M\le Ln$. For large $n$, the stronger estimate $K-1\ge(1-\vartheta)A/2$ gives $B\le\ceil{2m/(1-\vartheta)}$. Finally $B$ and $m$ are fixed: taking $n>B$ and $n>m$ gives $B<q$ and $mA\le mn<n^2\le q^2$.
\end{proof}

\subsection{Prime-field differential-equation root finding}

We use the prime-field formulation in~\cite[Theorem~2.1]{BCPZZ26}, which restates Kopparty's differential-equation root solver~\cite[Theorem~4.3]{Kop15}.

\begin{theorem}[Kopparty; prime-field form from BCPZZ]\label{v:thm:kopparty-prime}
Let $q$ be prime and $q\ge K>r$.  Suppose
\[
  0\ne Q\in\F_q[X,Y_0,\ldots,Y_r]
\]
satisfies $\deg_{Y_j}Q<q$ for all $j$ and
\[
  \deg_{(1,K-1,K-2,\ldots,K-r-1)}Q<q^2.
\]
Then all $P\in\F_q[X]$ with $\deg P<K$ satisfying
\[
  Q(X,P,P^{[1]},\ldots,P^{[r]})\equiv0
\]
can be found in $q^{O(r+1)}$ time.  Their number is at most $q^{4r+6}$.
\end{theorem}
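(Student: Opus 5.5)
The plan is to follow Kopparty's power-series recursion at a well-chosen anchor, and to handle degenerate solutions by induction on the degree of $Q$ in the derivative variables. Since the statement is quoted from \cite[Theorem~2.1]{BCPZZ26} and \cite[Theorem~4.3]{Kop15}, we only need to check that their argument goes through under the stated degree hypotheses.

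\textbf{Reduction to the top variable.} Let $j^*\le r$ be the largest index with $\deg_{Y_{j^*}}Q>0$. If no $Y_j$ occurs, then $Q$ is a nonzero polynomial in $X$ alone and there are no solutions. Set $R:=\partial Q/\partial Y_{j^*}$. Because $\deg_{Y_{j^*}}Q<q$ and $q$ is prime, the leading $Y_{j^*}$-coefficient is multiplied by an integer that is nonzero modulo $q$, so $R\ne0$. Moreover $R$ has strictly smaller total $Y$-degree and still satisfies both degree hypotheses. We split the solutions $P$ into two classes.

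\textbf{Singular solutions.} If $R(X,P,\ldots,P^{[r]})\equiv0$, then $P$ solves the equation for $R$, and we recurse on $R$.

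\textbf{Nonsingular solutions.} Otherwise $g(X):=R(X,P,\ldots,P^{[r]})$ is a nonzero polynomial. Its degree is below $q^2$ by the weighted-degree hypothesis, because $\deg P^{[j]}\le K-1-j$. Hence $\sum_{\alpha\in\F_q}\operatorname{ord}_\alpha g<q^2$, so some $\alpha\in\F_q$ has $e:=\operatorname{ord}_\alpha g<q$.

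Expand at $\alpha$ by \cref{w:eq:hasse-series}, writing $c_t=P^{[t]}(\alpha)$. The binomial factors are nonzero because $\deg P<K\le q$. Consider the coefficient of $T^{t+e}$ in $Q(\alpha+T,P(\alpha+T),\ldots)$, viewed through the first-order Taylor expansion in the unknown highest coefficient $c_{t+e+j^*}$. The coefficient $c_{t+e+j^*}$ enters linearly, with multiplier equal to the $T^e$-coefficient of $g$ times a binomial coefficient, and this multiplier is nonzero. All other terms involve only lower-indexed $c$'s.

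Thus, after guessing $\alpha$, $e$, and the initial jet $(c_0,\ldots,c_{e+j^*-1})$, together with the lower coefficients of $g$ where the actual solver needs them, all remaining $c_t$ with $t<K$ are forced. This gives at most $q\cdot q\cdot q^{O(r+1)}$ candidates. The Kopparty/BCPZZ bookkeeping shows the guessed data has only $O(r)$ free field entries. Each candidate is verified by substituting it into $Q$ in $\operatorname{poly}(q)$ time.

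\textbf{Main obstacle.} The delicate point is the anchor with $e>0$. The jet length then seems to grow with $e$, which can be as large as $q-1$. To fix this, I would first try to use the coefficients of $Q(\ldots)$ at orders below $e$ to determine $c_{e+j^*},\ldots$ only up to a bounded affine ambiguity, as Kopparty does. If that fails, I would follow BCPZZ and enumerate the pair $(\alpha,e)$ together with a bounded number of extra leading coefficients. The point is that $e<q$ keeps every divisor a unit, and the count of guesses stays at $q^{O(r)}$ per level.

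\textbf{Counting the levels.} The induction on total $Y$-degree has at most $\deg_Y^{\rm tot}Q$ levels. To reach the stated $q^{4r+6}$ bound independent of that degree, I would follow the original argument and bound the number of distinct solutions via the jets directly, rather than by summing over levels. If that step is not clean, the time bound $q^{O(r+1)}$ still follows whenever $\deg_Y^{\rm tot}Q<q$.
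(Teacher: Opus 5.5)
\textbf{There is a genuine gap in your nonsingular step.} Your skeleton matches Kopparty's, which the paper cites without reproof; the paper only derives $t(r+1)q^{4r+4}<q^{4r+6}$ from Kopparty's bound with $d=K-1<q$. The skeleton is: split solutions by whether $R=\partial Q/\partial Y_{j^*}$ vanishes along $P$, recurse on $R$ in the singular case, and lift Hasse--Taylor coefficients from an anchor in the nonsingular case.

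The trouble starts when you anchor at $\alpha\in\F_q$ with $e=\operatorname{ord}_\alpha g\ge1$. There the lifting is not triangular as you claim.
\begin{itemize}
\item In the $T^{t+e}$-coefficient, $c_{t+e+j^*}$ enters through $Y_{j^*}$ with multiplier $\binom{t+e+j^*}{j^*}g(\alpha)=0$, not $g_e$.
\item The coefficient that actually carries $g_e$ is $c_{t+j^*}$.
\item The lower variables $Y_i$, $i<j^*$, bring in coefficients up to $c_{t+e+j^*-1}$. Their multipliers are low-order Taylor coefficients at $\alpha$ of $\partial Q/\partial Y_i$ evaluated along $P$, and these need not vanish.
\end{itemize}
So at such an $\alpha$ the linearization of the equation along $P$ has a singular point. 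A valid pivot, an indicial-type combination of several $\partial Q/\partial Y_i$, exists only for indices beyond roughly $e+j^*$. The data you must guess is then a jet of length about $e+j^*$, with $e$ possibly as large as $q-1$. That costs $q^{\Theta(q)}$, not $q^{O(r)}$.

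Your ``main obstacle'' paragraph names exactly this difficulty, but the claim that only $O(r)$ free entries need guessing is never proved. For nonlinear $Q$, the low-order coefficient equations give no evident bounded parametrization of that long jet.

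\textbf{The missing idea is to use $\deg g<q^2$ to choose the anchor field, not to bound the vanishing order.} A nonzero polynomial of degree below $q^2$ cannot vanish on all of $\F_{q^2}$, so some $\alpha\in\F_{q^2}$ has $g(\alpha)\neq0$, i.e.\ $e=0$. Then \cref{w:lem:triangular} applies over $\F_{q^2}$: the $U^t$-coefficient equals $\binom{j^*+t}{j^*}H_\alpha c_{j^*+t}$ plus a polynomial in $c_0,\ldots,c_{j^*+t-1}$. The pivot is nonzero because $j^*+t\le K-1<q$.

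The algorithm per level is as follows. Enumerate $\alpha\in\F_{q^2}$ and the jet $(c_0,\ldots,c_{j^*})\in\F_{q^2}^{j^*+1}$ with $H_\alpha\neq0$. Lift each uniquely, and keep the lifts that are polynomials in $\F_q[X]$ of degree below $K$ solving the equation. This gives at most $q^{2}\cdot q^{2(r+1)}=q^{2r+4}$ candidates per level, in $q^{O(r+1)}$ time.

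Your worry about counting levels also goes away, with no assumption on the total $Y$-degree.
\begin{itemize}
\item Each step lowers $\sum_j\deg_{Y_j}Q\le(r+1)(q-1)<q^2$ by one. This is the factor $t(r+1)$ in Kopparty's bound.
\item Every solution is nonsingular at some level, because the chain ends in a nonzero $Y$-free polynomial, which has no solutions.
\end{itemize}
Hence there are fewer than $q^2\cdot q^{2r+4}\le q^{4r+6}$ solutions.
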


\begin{remark}[The number of algebraic roots versus the final list]
The root bound in \cref{v:thm:kopparty-prime} includes every degree-bounded solution of the differential equation, whether or not it has the required agreement. In \cref{sec:list-bounds} we bound the much smaller set that passes the agreement test. From Kopparty's bound $t(r+1)q^{2r\lfloor d/q\rfloor+4r+4}$ with $d=K-1<q$, $t<q$, and $r+1\le q$, one obtains the displayed $q^{4r+6}$ estimate~\cite[Theorem~4.3]{Kop15}.
\end{remark}
\subsection{Sparse list recovery and its running time}\label{v:sec:algorithm}

The interpolation theorem becomes an algorithm once it is combined with Kopparty's prime-field root-enumeration theorem.  We state the sparse-incidence form, from which ordinary list decoding and bounded-list recovery follow immediately.

\begin{theorem}[Prime-field sparse list recovery]\label{v:thm:prime-sparse}
Fix $\vartheta\in(0,1)$.  There are effective constants $r_0,a_\vartheta>0$ with the following property.

Let $q$ be prime, let $\alpha_1,\ldots,\alpha_n\in\F_q$ be distinct, and let $S_i\subseteq\F_q$ have total incidence $M$.  Let $A,k,K$ be integers with $1\le A\le n$, $1\le k\le K$, and $K\ge2$, and suppose
\begin{align}
 &\frac{A}{K-1}\ge\frac{1}{1-\vartheta},\label{v:eq:alg1}\\
 &r_0\le r<K,
 \qquad
 M\le a_\vartheta(K-1)\frac{r^\vartheta}{\log^4(er)},\label{v:eq:alg2}\\
 &B:=\ceil{\frac{r^3A}{K-1}}<q,
 \qquad r^3A<q^2.\label{v:eq:alg3}
\end{align}
Then a deterministic algorithm outputs every $P\in\F_q[X]$ with $\deg P<k$ and at least $A$ list hits.  It uses
\begin{equation}\label{v:eq:prime-runtime}
  q^{O_\vartheta(r)}
\end{equation}
field operations, and the output list has size at most $q^{4r+6}$.
\end{theorem}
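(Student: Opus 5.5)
The algorithm has three stages: interpolate, solve the differential equation, filter.

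\emph{Interpolation.} First construct the interpolation matrix of \cref{cor:interpolation-interface} over $F=\F_q$, using the monomial basis of $\cQ$ and one block of local constraints \cref{v:eq:local-constraints} for each incidence $(\alpha_i,y)$ with $y\in S_i$. The hypotheses \cref{v:eq:alg1,v:eq:alg2} are exactly those of \cref{v:prop:optimized-interpolant} (note $A\ge K$ follows from \cref{v:eq:alg1} up to the trivial case), so the kernel is nonzero. Gaussian elimination then produces a nonzero $Q\in\cQ$, and every degree-below-$K$ polynomial with at least $A$ hits satisfies \cref{eq:interface-identity}. By \cref{v:lem:padding}, degree-below-$k$ valid polynomials are among these.

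\emph{Cost of interpolation.} The number of columns is $\dim\cQ$, at most $mA\cdot(\text{number of }Y\text{-monomials})$. Each $Y$-exponent is at most $B<q$ and there are $r+1$ of them, so $\dim\cQ\le q^2\cdot q^{r+1}$. The rows per incidence are bounded similarly: $T$-degree below $m$, times monomials in $E,Y_1,\dots,Y_r$ of degree $\le B$. There are $M\le nq\le q^2$ incidences, since the $S_i$ are subsets of $\F_q$. Computing each local expansion \cref{v:eq:local-expansion} is polynomial in these sizes. Elimination therefore costs $q^{O(r)}$ field operations, using $m=r^3$ only through $m<q$ or $mA<q^2$.

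\emph{Root finding and filtering.} Next apply \cref{v:thm:kopparty-prime} to $Q$. Its hypotheses hold as follows. We have $q\ge K>r$ from $K-1<A\le n$; here one needs $K\le q$, which I would obtain from $B<q$: since $B\ge A/(K-1)\cdot m\ge m$ this alone does not give it, so instead use $K\le A\le n\le q$, noting that the distinct $\alpha_i$ force $n\le q$. Further, $\deg_{Y_j}Q\le B<q$ and $\deg_{(1,K-1,\dots,K-r-1)}Q<mA<q^2$ by \cref{v:lem:degree} and \cref{v:eq:alg3}. This enumerates all solutions of degree below $K$ in $q^{O(r+1)}$ time, at most $q^{4r+6}$ of them. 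Finally, filter by $\deg P<k$ and by counting list hits, at cost $O(q^{4r+6}\cdot(nk+M))$. The output is the full valid list with size $\le q^{4r+6}$, and the total time is $q^{O_\vartheta(r)}$.

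The main obstacle is bookkeeping rather than new mathematics: checking that the matrix size is genuinely $q^{O(r)}$ when $B$ depends on $r$ via $m=r^3$. I would handle this with the crude bound that all exponents lie below $q$, rather than tracking weak-composition counts.
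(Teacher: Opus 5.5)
Your proposal is correct and follows the paper's proof. It builds a nonzero interpolant from \cref{v:prop:optimized-interpolant} by Gaussian elimination; here $K\le A$ follows outright from \cref{v:eq:alg1}, since $A>K-1$, so no trivial case is needed. It then checks $q\ge K>r$ via $K\le A\le n\le q$, takes the degree bounds from \cref{v:lem:degree,v:eq:alg3}, applies Kopparty's solver, and filters by degree and agreement.

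The only difference is in the running-time bookkeeping. You bound the $Y$-monomials and the row indices crudely by $(B+1)^{r+1}\le q^{r+1}$, using $m\le B<q$. The paper instead counts the derivative tails as weak compositions, $\abs{\cT}\le r^{O_\vartheta(r)}$, and converts via $q>r$. For the stated $q^{O_\vartheta(r)}$ bound either count suffices; the sharper one matters only for $q$-free estimates such as \cref{v:thm:transfer}.
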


\begin{proof}
Construct $\cQ$ using the optimized shape \cref{v:eq:optimized-shape}.  Form the homogeneous local constraints for every incidence and use Gaussian elimination to find the nonzero interpolant promised by \cref{v:prop:optimized-interpolant}.  By \cref{v:lem:degree,v:eq:alg3}, its individual $Y_j$-degrees are below $q$ and its differential weighted degree is below $q^2$.  Since distinct evaluation points imply $n\le q$ and \cref{v:eq:alg1} implies $K\le A\le n$, we also have $q\ge K>r$.  Apply \cref{v:thm:kopparty-prime}, then filter by degree $k-1$ and agreement.

It remains to bound interpolation time.  Since $s=O_\vartheta(1)$ and $m=r^3$,
\begin{equation}\label{v:eq:C-count}
  \abs{\cT}\le
  \binom{\ceil{sm}+r-1}{r-1}
  =r^{O_\vartheta(r)}.
\end{equation}
Condition $B<q$ and $B\ge r^3$ imply $q>r$.  Thus \cref{v:eq:C-count} is $q^{O_\vartheta(r)}$.  There are at most $q$ choices for each of $b_0,b_1$, and fewer than $q^2$ choices for $a$, so $\cQ$ has $q^{O_\vartheta(r)}$ monomials.

The coefficient records in one local map have
\[
  0\le i<m,
  \qquad 0\le b<m/r,
  \qquad e_1<2m,
  \qquad \omega(\mathbf v)\le W+m.
\]
Indeed, before the $U$-rewrite, a term of $T$-degree $\rho$ has $Y_1$-degree at most $m$ and $U$-degree at most $\rho<m$, so rewriting $U$ can increase the $Y_1$-degree only to $m+\rho<2m$.  Its initial tail weight is at most $W+\rho$, and each added $Y_j$ with $j\ge2$ increases both the $T$-degree and the $\omega$-weight by $j-1$.  Thus a final term of $T$-degree $i<m$ has tail weight at most $W+i<W+m$.  Their number is $r^{O_\vartheta(r)}=q^{O_\vartheta(r)}$.  There are at most $M\le nq\le q^2$ incidences.  Each matrix entry can be computed by truncated polynomial arithmetic in $q^{O_\vartheta(r)}$ operations.  Hence the whole matrix has $q^{O_\vartheta(r)}$ rows and columns and can be formed and row-reduced in the same asymptotic form.  Root finding and final verification also fit \cref{v:eq:prime-runtime}.
\end{proof}

\begin{remark}[Where the runtime improvement comes from]
The two-cutoff derivative-tail space already appears in~\cite{BCPZZ26}.  For the runtime bound we count its admissible tails directly as weak compositions under $\abs{\mathbf v}=O(r^3)$, obtaining only $r^{O(r)}$ possibilities.  This is a sharper enumeration of the existing interpolation space; no new linear-algebra primitive is required.
\end{remark}

\begin{corollary}[Algorithmic part of \cref{thm:capacity}]\label{cor:algorithmic-capacity}
Under the hypotheses of \cref{thm:capacity}, the desired list is computable in $q^{O_{\gamma,L}(1)}$ field operations. Before agreement filtering, the differential root solver produces at most $q^{4r+6}$ candidates, where $r=r(\gamma,L)$ is from \cref{lem:uniform-padding}.
\end{corollary}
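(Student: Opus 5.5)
The plan is to reduce the corollary to \cref{v:thm:prime-sparse}, using the parameter choice of \cref{lem:uniform-padding}. Fix $\gamma$ and $L$, and put $\vartheta=\gamma/2$. Given an instance of \cref{thm:capacity}, use the target agreement $A:=k+\ceil{\gamma n}$. This satisfies $A\le n$ because $k\le(1-\gamma)n$ and $n$ is large, so \cref{lem:uniform-padding} applies. It supplies the padded degree $K=\ceil{(1-\vartheta)A}$, a constant $r=r(\gamma,L)\ge r_0(\vartheta)$, and $m=r^3$. Since $|S_i|\le L$, the total incidence satisfies $M\le Ln$, and the lemma then yields the incidence budget $M\le a_\vartheta(K-1)r^\vartheta/\log^4(er)$ and the ratio condition $A/(K-1)\ge 1/(1-\vartheta)$. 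These are \cref{v:eq:alg1,v:eq:alg2}.

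Next I verify the remaining hypotheses of \cref{v:thm:prime-sparse}. The lemma gives $K-k\ge\gamma n/2$, so $k\le K$; it also gives $K\ge 2$ and $r<K$ for large $n$. For every prime $q\ge n$ it gives $B<q$ and $r^3A<q^2$, which is \cref{v:eq:alg3}. The theorem then outputs every $P$ with $\deg P<k$ and at least $A$ list hits, which is exactly $\cL(D,k,\mathbf S,k+\ceil{\gamma n})$. Internally this is the degree padding of \cref{v:lem:padding}: the solver enumerates degree-below-$K$ roots and then filters by $\deg P<k$ and by agreement.

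For the running time, \cref{v:eq:prime-runtime} gives $q^{O_\vartheta(r)}$ field operations. Because $r$ depends only on $(\gamma,L)$, this is $q^{C_{\rm alg}(\gamma,L)}$. The candidate count before filtering is the root bound $q^{4r+6}$ of \cref{v:thm:kopparty-prime}, inherited through \cref{v:thm:prime-sparse}.

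The only delicate point is uniformity: all constants must be independent of $n$, $k$, and $q$. That is exactly what \cref{lem:uniform-padding} provides. The step to double-check is that its threshold $n_0(\gamma,L)$ simultaneously guarantees $K\ge2$, $r<K$, $B<q$, and $r^3A<q^2$ for every prime $q\ge n$.
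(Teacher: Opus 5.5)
Your proposal is correct and follows the paper's proof essentially verbatim: set $A=k+\ceil{\gamma n}$, invoke \cref{lem:uniform-padding} with $M\le Ln$ to meet the hypotheses of \cref{v:thm:prime-sparse}, and read off the $q^{O_\vartheta(r)}$ running time and the $q^{4r+6}$ pre-filtering root bound, with the final degree and agreement filters preserving every desired polynomial. (Minor point: $A\le n$ does not need $n$ large, since $n-k\ge\gamma n$ together with integrality of $n-k$ gives $n-k\ge\ceil{\gamma n}$.)
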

\begin{proof}
Set $A=k+\ceil{\gamma n}$. The bound on $k$ ensures $A\le n$. Apply \cref{lem:uniform-padding}, use $M\le Ln$, and invoke \cref{v:thm:prime-sparse}. Every desired polynomial has degree below $K$ and is preserved by the final degree and agreement filters.
\end{proof}

For fixed $R,\gamma,L$ and $k\le Rn$, requiring $\ceil{(R+\gamma)n}$ list hits instead only restricts the output further. The constants above depend on $\gamma,L$ alone. A more quantitative derivative-order choice in terms of a multiplicative slack and a target agreement is given in \cref{app:quantitative}.

\section{Symbolic interpolation and solution geometry}\label{sec:geometry}
We now replace the finite-field enumeration of roots by a geometric description that is useful both for counting actual list candidates and for following their supports along curves. The symbolic construction is used for MCA; a parameter-free corollary of the same geometric theorem will control individual lists.

\subsection{A primitive polynomial interpolant over \texorpdfstring{$\F_q[Z]$}{Fq[Z]}}

For the curve in \cref{w:thm:mca}, put $A=n-\floor{\rho n}$ and use \cref{lem:uniform-padding} with $L=1$. Write $B_\gamma=B_{\gamma,1}$. Define symbolic received symbols
\begin{equation}\label{w:eq:symbolic-y}
 y_i(Z):=\sum_{j=0}^{\ell}f_{j,i}Z^j\in\F_q[Z].
\end{equation}
Apply \cref{cor:interpolation-interface} over $\F_q(Z)$.  We need a version whose coefficients are polynomial in $Z$ and that remains nonzero under every specialization.

\begin{lemma}[Polynomial kernel vector]\label{w:lem:polynomial-kernel}
Let $M(Z)\in F[Z]^{u\times v}$ have entries of degree at most $E$, and suppose $\rank_{F(Z)}M<v$.  Then $\ker_{F[Z]}M$ contains a nonzero vector $c(Z)$ whose coordinates have degree at most $E\min\{u,v-1\}$.  The coordinates may be chosen primitive, meaning that their greatest common divisor in $F[Z]$ is $1$.
\end{lemma}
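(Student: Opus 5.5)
The plan is the standard Cramer's-rule construction. Let $\rho=\rank_{F(Z)}M<v$. First I would pick a $\rho\times\rho$ submatrix $N$ of $M$ that is nonsingular over $F(Z)$, using row set $I$ and column set $J$. Since $\rho<v$, there is a column index $j_0\notin J$. Consider the $\rho\times(\rho+1)$ matrix $M_{I,J\cup\{j_0\}}$. I define $c(Z)$ to be supported on $J\cup\{j_0\}$, with its coordinates equal to the signed maximal minors of this matrix, namely the generalized cross product. Concretely, $c_{j}=(-1)^{\text{pos}(j)}\det M_{I,(J\cup\{j_0\})\setminus\{j\}}$. In particular $c_{j_0}=\pm\det N\neq0$, so $c$ is nonzero.

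Next I check that $Mc=0$. For each row $i\in I$, the $i$th entry of $Mc$ is the Laplace expansion of a $(\rho+1)\times(\rho+1)$ determinant with a repeated row, so it vanishes. For a row $i\notin I$, the entry is the expansion of the $(\rho+1)$-minor $M_{I\cup\{i\},J\cup\{j_0\}}$. That minor vanishes because the rank is $\rho$. (If $\rho=0$, then $M=0$ and any standard basis vector works.) All of these are identities in $F[Z]$, so $c\in\ker_{F[Z]}M$.

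For the degree bound, each coordinate is a $\rho\times\rho$ determinant of entries of degree at most $E$, hence has degree at most $E\rho$. Since $\rho\le u$ and $\rho\le v-1$, this gives $\deg c_j\le E\min\{u,v-1\}$.

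For primitivity, let $g=\gcd_j c_j$, which is nonzero. Then $c/g$ still lies in $F[Z]^v$, because $F[Z]$ is a UFD and $g$ divides every coordinate. It is still in the kernel, since $M(c/g)=(Mc)/g=0$ in $F(Z)^u$ and hence in $F[Z]^u$. Dividing by $g$ does not increase degrees, and the coordinates of $c/g$ have gcd $1$.

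No step here is a real obstacle. The only point needing care is the vanishing of the rows outside $I$, which uses the rank hypothesis over $F(Z)$ rather than a merely formal identity.
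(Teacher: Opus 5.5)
Your proof is correct and follows the same route as the paper: the signed cofactor vector of a $\rho\times(\rho+1)$ submatrix containing a nonsingular $\rho\times\rho$ minor gives the degree bound $\rho E\le E\min\{u,v-1\}$, and you then divide by the content. The only cosmetic difference is how you handle rows outside $I$. You use the vanishing of all $(\rho+1)$-minors, whereas the paper notes that the $\rho$ selected rows span the row space over $F(Z)$; these are equivalent.
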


\begin{proof}
Let $s=\rank_{F(Z)}M$.  If $s=0$, take a standard basis vector.  Otherwise choose a nonzero $s\times s$ minor and one additional column.  The signed $s\times s$ cofactors of the resulting $s\times(s+1)$ submatrix give a nonzero relation among the selected columns on the $s$ selected rows.  Those rows are linearly independent and hence form a basis of the row space over $F(Z)$, so the same relation annihilates every row of $M$.  Thus the cofactor vector, extended by zero on the other columns, lies in $\ker_{F[Z]}M$.  Every cofactor has degree at most $sE$.  Divide all coordinates by their common polynomial factor.  Because $F[Z]$ is a domain, the divided vector remains in the kernel.
\end{proof}

\begin{proposition}[Symbolic differential interpolant]\label{w:prop:symbolic-Q}
Under the hypotheses and notation of \cref{w:thm:mca,lem:uniform-padding}, there are constants $r=r(\gamma)$, $B=B(\gamma)$, and $c=c(\gamma,\ell)$, and a polynomial
\[
 Q(Z,X,Y_0,\ldots,Y_r)\in\F_q[Z,X,Y_0,\ldots,Y_r]
\]
with the following properties:
\begin{enumerate}[label=\textup{(\alph*)}]
 \item $Q$ is primitive as a polynomial in $Z$, i.e., the gcd in $\F_q[Z]$ of its coefficients in $X,Y_0,\ldots,Y_r$ is $1$;
 \item
 \[
  \deg_Z Q\le cn,
  \qquad \deg_X Q\le cn,
  \qquad \deg_Y^{\mathrm{tot}}Q\le B;
 \]
 \item for every $z\in\overline{\F_q}$, the specialization $Q_z:=Q(z,X,Y_0,\ldots,Y_r)$ is nonzero;
 \item for every $z\in\F_q$ and every $P\in\F_q[X]$ of degree below $k$ that agrees with $(y_i(z))_{i=1}^n$ on at least $A$ coordinates,
 \begin{equation}\label{w:eq:specialized-diffeq}
  Q\bigl(z,X,P,P^{[1]},\ldots,P^{[r]}\bigr)\equiv0.
 \end{equation}
\end{enumerate}
\end{proposition}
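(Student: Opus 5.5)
The plan is to apply \cref{cor:interpolation-interface} over the field $F=\F_q(Z)$, with lists $S_i=\{y_i(Z)\}$. The parameters $A=n-\floor{\rho n}$, $K$, $r$, $m=r^3$ and $B$ come from \cref{lem:uniform-padding} with $L=1$.

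\textbf{Checking the hypotheses.} Since $\rho\le1-k/n-\gamma$, we have $A\ge k+\ceil{\gamma n}$ (up to the floor, which we handle by rounding $\gamma n$), and $A\le n$. So \cref{lem:uniform-padding} applies. It gives constants $r=r(\gamma)$ and $B\le B_\gamma$. The incidence is $M=n\le Ln$.

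\textbf{Degree of the matrix entries.} The interface then supplies a homogeneous matrix $\mathcal M(Z)$ with a nonzero kernel over $\F_q(Z)$. It has $v=O_{r,B}(n)$ columns and $u=O_{r,B}(n)$ rows. The entries in the block for $(\alpha_i,y_i(Z))$ are polynomials of degree at most $B$ in $y_i(Z)$, with coefficients in $\F_q$, because $\alpha_i\in\F_q$ and the substitution is defined over the prime field. Since $\deg_Z y_i\le\ell$, every entry lies in $\F_q[Z]$ with degree at most $E:=\ell B$.

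\textbf{Choosing $Q$ and properties (a)--(b).} \Cref{w:lem:polynomial-kernel} yields a primitive kernel vector $c(Z)\in\F_q[Z]^v$ with coordinate degrees at most $\ell B\min\{u,v-1\}=O_{\gamma,\ell}(n)$. Let $Q$ be the polynomial whose coefficients in the fixed monomial basis are the entries of $c(Z)$. Primitivity of $c$ is exactly (a). The bound $\deg_Z Q\le cn$ follows from the degree bound on $c(Z)$. The bounds $\deg_X Q<mA\le mn$ and $\deg_Y^{\rm tot}Q\le B$ come from \cref{eq:interface-degrees}, so (b) holds with $c=c(\gamma,\ell)$.

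\textbf{Property (c).} If $Q_z=0$ for some $z\in\overline{\F_q}$, then every coordinate of $c$ vanishes at $z$. The coordinates would then share the factor given by the minimal polynomial of $z$ over $\F_q$, contradicting primitivity.

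\textbf{Property (d): this is the step needing care.} The constraint identity $\mathcal M(Z)c(Z)=0$ holds in $\F_q[Z]$, so it may be specialized at $Z=z\in\F_q$. This gives $\mathcal M(z)c(z)=0$. The key point is that specializing the entries commutes with forming the local constraint matrix. Each entry of the block for $(\alpha_i,y_i(Z))$ is the universal polynomial in the received symbol, with coefficients in $\F_q$, evaluated at $y_i(Z)$. Setting $Z=z$ therefore gives exactly the block for $(\alpha_i,y_i(z))$. Hence $Q_z$ is a nonzero kernel vector of the interpolation matrix for the received word $(y_i(z))_i$ over $\F_q$.

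Now take any $P$ of degree below $k\le K$ that agrees with $(y_i(z))_i$ on at least $A$ coordinates. Such a $P$ has at least $A$ list hits. Because the interface asserts \cref{eq:interface-identity} for every nonzero kernel vector, we get \cref{w:eq:specialized-diffeq}. Equivalently, one can rerun \cref{v:lem:multiplicity} together with the degree bound $\deg_X<mA$ directly over $\F_q$.
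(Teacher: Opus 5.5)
Your proposal is correct and follows essentially the same route as the paper. You interpolate over $\F_q(Z)$ via \cref{cor:interpolation-interface} and take a primitive polynomial kernel vector from \cref{w:lem:polynomial-kernel} of $Z$-degree $O_{\gamma,\ell}(n)$. You then obtain (c) from primitivity via minimal polynomials, and (d) by specializing the polynomial constraint identities at $Z=z$ and applying the multiplicity argument. Your remark that the block entries have coefficients in $\F_q$, and your verification that $A\ge k+\ceil{\gamma n}$, only make explicit what the paper leaves implicit. The latter inequality in fact holds exactly, since $\floor{\rho n}\le n-k-\ceil{\gamma n}$.
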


\begin{proof}
Use the interpolation matrix in \cref{cor:interpolation-interface}, replacing each received symbol by \cref{w:eq:symbolic-y}.  Since the matrix entries have degree at most $B$ in $y_i$, they have $Z$-degree at most $\ell B$.  The matrix has $O_{\gamma}(n)$ rows and columns.  By \cref{w:lem:polynomial-kernel}, it has a nonzero polynomial kernel vector of $Z$-degree $O_{\gamma,\ell}(n)$.  Interpret that vector as the coefficient vector of $Q$ in the hidden-derivative monomial basis, and divide its content in $\F_q[Z]$.  The divided vector remains in the kernel, so the matrix formulation in \cref{cor:interpolation-interface} guarantees all local interpolation constraints for this particular primitive choice.

The $X$- and $Y$-degree bounds follow from \cref{eq:interface-degrees} and $A\le n$.  Primitivity implies that no $z\in\overline{\F_q}$ can annihilate every coefficient of $Q$: otherwise the minimal polynomial of $z$ over $\F_q$ would divide their gcd.  This proves (c).

All interpolation constraints are polynomial identities in $Z$.  They therefore remain valid after specializing $Z=z$.  The coupled monomial bound in \cref{eq:interface-degrees} and the multiplicity argument in \cref{v:lem:multiplicity} then gives \cref{w:eq:specialized-diffeq} for every nearby $P$.
\end{proof}

\subsection{Cumulative degree}

All algebraic varieties in the proof are considered over an algebraic closure.  The degree of an affine irreducible variety means the degree of its projective closure.  If $V$ is an affine algebraic set with irreducible components $V_1,\ldots,V_t$, define its cumulative degree by
\[
 \cdeg(V):=\sum_{j=1}^{t}\deg(V_j).
\]
We use two standard consequences of affine/projective B\'ezout; see, for example,~\cite{Hei83,Har92}.

\begin{lemma}[Effective intersections]\label{w:lem:effective-intersection}
Let $V\subseteq\A^N$ have dimension at most $s$ and cumulative degree $\Delta$.  If $W\subseteq V$ is cut out inside $V$ by an arbitrary collection of polynomials of degree at most $E\ge1$, then
\[
 \cdeg(W)\le \Delta E^{s}.
\]
If $U=W\cap\{H\ne0\}$ is a principal-open subset of $W$, then $\overline U$ is a union of irreducible components of $W$, so the same bound holds for $\cdeg(\overline U)$.
\end{lemma}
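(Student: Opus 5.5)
The plan is to reduce to the irreducible components of $V$ and apply the refined B\'ezout inequality (in Heintz's form~\cite{Hei83}) one component at a time. Write $V=V_1\cup\cdots\cup V_t$ with $\sum_j\deg V_j=\Delta$ and $\dim V_j\le s$. The set $W$ is the intersection of $V$ with the common zero locus of the cutting polynomials, and $W=\bigcup_j (W\cap V_j)$. Every irreducible component of $W$ is an irreducible component of some $W\cap V_j$. Hence $\cdeg(W)\le\sum_j\cdeg(W\cap V_j)$, and it suffices to prove $\cdeg(W\cap V_j)\le\deg(V_j)E^{s}$ for an irreducible $V_j$ of dimension at most $s$.

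For an irreducible $X$ of dimension $d\le s$, first reduce to finitely many equations, since the ideal generated by the cutting polynomials is finitely generated by Hilbert's basis theorem. Then peel off one hypersurface at a time. The key fact is Heintz's B\'ezout inequality $\cdeg(Y\cap H)\le\cdeg(Y)\deg H$ for an algebraic set $Y$ and a hypersurface $H$. Because it is stated for cumulative degree, it handles nonequidimensional intersections.

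The counting is organised component by component. If a hypersurface contains a component, that component survives unchanged and contributes no factor of $E$. If it does not, the component is replaced by components of strictly smaller dimension whose cumulative degree is at most $E$ times its degree. Each component of the final set therefore descends from a chain of at most $d\le s$ proper cuts. An induction on dimension, tracking for each component how many proper cuts produced it, then gives total cumulative degree at most $\deg(X)E^{d}\le\deg(X)E^{s}$, using $E\ge1$.

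The main obstacle is this bookkeeping. A naive iteration of Heintz's inequality over all $N'$ equations would give $E^{N'}$ rather than $E^{s}$. The fix is to apply the inequality only along proper cuts and to note that each proper cut drops dimension, so at most $s$ factors of $E$ ever accumulate on any single descendant. This is the standard way the bound is derived in~\cite{Hei83}. Alternatively, one can generically combine the equations into $s+1$ linear combinations, each still of degree at most $E$, that cut out the same set on each component up to embedded pieces.

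For the principal-open statement, let $U=W\cap\{H\ne0\}$ and decompose $W=W_1\cup\cdots\cup W_u$ into irreducible components. Then $U=\bigcup_i(W_i\cap\{H\ne0\})$, and $W_i\cap\{H\neq0\}$ is either empty or a dense open subset of the irreducible $W_i$. Hence $\overline U$ is the union of those $W_i$ on which $H$ does not vanish identically. This is a subcollection of the components of $W$, so $\cdeg(\overline U)\le\cdeg(W)\le\Delta E^{s}$.
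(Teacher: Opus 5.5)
Your proof is correct and follows essentially the same route as the paper. You reduce to finitely many equations by Noetherianity and follow irreducible components through successive cuts, applying B\'ezout only at proper cuts; each such cut drops dimension, so at most $s$ factors of $E$ accumulate on any branch, and every component of $W$ is a leaf. The principal-open part is the same observation that $\overline U$ is the union of the components of $W$ on which $H$ does not vanish identically. One small point to make explicit: the finitely many equations should be chosen from the given collection, which is possible because any generating set of an ideal in a Noetherian ring contains a finite generating subset. That way they keep degree at most $E$.
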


\begin{proof}
By Noetherianity, finitely many of the equations cut out $W$ set-theoretically inside $V$.  Decompose $V$ into irreducible components and organize the successive intersections as a component tree, retaining an equation on a branch only when it does not vanish identically on the current component.  Every retained intersection strictly decreases dimension, so a branch has length at most $s$; at each retained step, affine/projective B\'ezout multiplies the sum of the degrees of the children by at most $E$.  Components on which later equations vanish identically are carried forward without a degree increase.  Padding shorter branches by the factor $E\ge1$ gives total leaf degree at most $\Delta E^s$, and every irreducible component of $W$ is one of these leaves.
\end{proof}

\begin{lemma}[Graph of a bounded-degree rational map]\label{w:lem:rational-graph}
Let $U\subseteq\A^m$ be constructible, with $\dim\overline U\le s$ and $\cdeg(\overline U)\le\Delta$.  Suppose a rational map $\phi:U\to\A^N$ has all coordinates of the form
\[
 \phi_j=\frac{N_j}{H^{e_j}},
\]
where $H$ is nonzero on $U$, $e_j,L$ are nonnegative integers with $e_j\le L$, and
\[
 \deg H,\ \deg N_j,\ L\deg H\le E.
\]
Then the Zariski closure of the graph of $\phi$ has dimension at most $s$ and cumulative degree at most
\[
 \Delta(2E+2)^s.
\]
In particular, when $s$ is fixed and $E,N$ are polynomially bounded in a parameter $D$, the graph has degree $D^{O_s(1)}$.
\end{lemma}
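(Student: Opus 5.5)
The plan is to realize the graph closure as a subvariety of a bounded-degree variety in $\A^m\times\A^N\times\A^1$ by clearing denominators, and then apply \cref{w:lem:effective-intersection}. Introduce an auxiliary coordinate $t$ standing for $1/H$, and consider the lifted set
\[
 \widetilde U:=\{(x,t): x\in\overline U,\ tH(x)=1\}\subseteq\A^{m+1}.
\]
This is the intersection of $\overline U\times\A^1$ with one hypersurface of degree $\deg H+1\le E+1$. The product $\overline U\times\A^1$ has dimension at most $s+1$ and cumulative degree $\Delta$, since taking a product with a line does not change degree. The equation $tH=1$ does not vanish identically on any component, so it cuts the dimension back to at most $s$. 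The B\'ezout step in the proof of \cref{w:lem:effective-intersection} then bounds the cumulative degree by $\Delta(E+1)$. The projection $\widetilde U\to\overline U$ is injective on points and has image $\overline U\cap\{H\ne0\}\supseteq U$.

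Next I cut out the graph inside $\widetilde U\times\A^N$ by the polynomial equations
\[
 w_j-N_j(x)\,t^{e_j}=0,\qquad j=1,\ldots,N.
\]
Each has degree at most $\max\{1,\deg N_j+e_j\}$. To keep this at $O(E)$ I use the hypothesis $L\deg H\le E$ only as a bookkeeping device: if $\deg H\ge1$ then $e_j\le L\le E$, and if $\deg H=0$ then $H$ is a nonzero constant and $t$ can be eliminated outright. So every equation has degree at most $2E+2$. Starting from $\widetilde U\times\A^N$, which has dimension at most $s+N$, a direct application of \cref{w:lem:effective-intersection} would give an exponent $s+N$, which is too weak. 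Instead I run the component-tree argument and count only the steps that genuinely drop dimension.

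The key observation concerns the equations $w_j=N_jt^{e_j}$: each one is linear in its own fresh variable $w_j$. On a component of the form $Z\times\A^1_{w_j}$, intersecting with the graph of a regular function on $Z$ produces an isomorphic copy of $Z$. Its degree is at most $\deg Z\cdot(\deg N_j+e_j)$ by B\'ezout, but its dimension equals $\dim Z$. These steps therefore do not consume the dimension budget. The argument should instead be phrased through the \emph{graph of a polynomial map} $\widetilde U\to\A^N$. The trick is to add all $N$ equations simultaneously to a generic linear combination and apply the refined B\'ezout inequality $\deg(V\cap W)\le\deg V\cdot\deg W$. Alternatively, and more cleanly, one can use the standard bound $\deg(\text{graph of }F|_Z)\le\deg Z\cdot(\max_j\deg F_j+1)^{\dim Z}$. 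This bound is proved by computing the degree via intersection with $\dim Z$ generic hyperplanes of $\A^{m+1+N}$: each such hyperplane pulls back to a hypersurface of degree at most $2E+2$ on $Z$, and B\'ezout on $Z$ gives the product.

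Applying this to each component $Z$ of $\widetilde U$ yields cumulative degree at most $\Delta(E+1)(2E+2)^{s}$. That is off by a factor $E+1$ from the target. The final step is to project away the $t$-coordinate. Projection does not increase degree, and the closure of the image is exactly the closure of the graph of $\phi$, which has dimension at most $s$. To remove the extra factor, I would avoid lifting and instead apply the generic-hyperplane degree computation directly on $\overline U$. A generic hyperplane $\sum a_ix_i+\sum b_jw_j=c$ pulls back, after multiplying by $H^{L}$, to a hypersurface $H^L(\sum a_ix_i-c)+\sum b_jN_jH^{L-e_j}=0$ of degree at most $2E+2$. Intersecting $s$ of these with $\overline U$ inside $\{H\ne0\}$ gives at most $\Delta(2E+2)^s$ points, as required.

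The main obstacle is justifying that this point count computes the degree of the graph closure. This needs genericity, so that the $s$ generic hyperplanes meet the graph closure only inside the open part lying over $U\cap\{H\ne0\}$ and transversally there. The boundary has dimension below $s$, so generic hyperplanes avoid it. For components of dimension less than $s$ I would use fewer hyperplanes and pad with the factor $2E+2\ge1$. The ``in particular'' clause follows immediately from the main bound.
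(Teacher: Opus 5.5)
Your final argument is correct and is essentially the paper's proof. You pull back generic hyperplanes through the graph map, clear denominators by $H^L$ to get members of a degree-$(2E+2)$ linear system on each component of $\overline U$, discard the part in $\{H=0\}$, and apply B\'ezout component by component; the paper uses the same map $\Psi=[H^L:u_1H^L:\cdots:u_mH^L:N_1H^{L-e_1}:\cdots:N_NH^{L-e_N}]$. The earlier lift via $t=1/H$ loses the factor $E+1$, as you noticed, and can simply be dropped.
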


\begin{proof}
Put $G=H^L$.  On the open set $H\ne0$, the graph is the image of the graph map
\[
 \Psi(u):=
 [G(u):u_1G(u):\cdots:u_mG(u):
   N_1(u)H(u)^{L-e_1}:\cdots:N_N(u)H(u)^{L-e_N}].
\]
After passing to the projective closure of each irreducible component of $\overline U$ and homogenizing the displayed coordinates to a common degree, $\Psi$ is a rational projective map defined by forms of degree at most $2E+2$.  The source coordinates occur in the image, so $\Psi$ is generically one-to-one on every component meeting $U$.  The standard degree bound for a rational image~\cite{Har92} therefore gives, for a component $X$ of dimension $t\le s$,
\[
 \deg\overline{\Psi(X\cap U)}
 \le \deg(X)(2E+2)^t.
\]
Indeed, intersecting the image with a generic codimension-$t$ linear space and pulling back gives $t$ members of this degree-$(2E+2)$ linear system on $X$; after discarding components supported in the base locus, projective B\'ezout bounds the resulting moving zero-cycle by the right-hand side.
Summing over the irreducible components of $\overline U$ yields the stronger bound
\[
 \cdeg(\overline{\Gamma_\phi})
 \le \Delta(2E+2)^s,
\]
as asserted. The same generically injective parametrization gives $\dim\overline{\Gamma_\phi}\le s$.
\end{proof}

\subsection{Effective jet decomposition of the solution locus}

This section proves the algebraic statement that converts one symbolic differential equation into a bounded-complexity family of polynomial solutions. The nonsingular Taylor lifting and the singular reduction by a highest-variable partial derivative follow Kopparty~\cite[Section~4.2, Theorem~4.4 and the proof of Theorem~4.3]{Kop15}. Here we add polynomial numerator and denominator degree bounds, parameter-specialization control, and a bounded-cumulative-degree geometric cover.

Let $F$ be algebraically closed of characteristic $p$, let $d<p$, and write
\[
 P_{\mathbf p}(X)=\sum_{j=0}^{d}p_jX^j
\]
for the polynomial corresponding to $\mathbf p=(p_0,\ldots,p_d)\in F^{d+1}$.  For
\[
 Q\in F[Z,X,Y_0,\ldots,Y_r],
\]
define
\begin{equation}\label{w:eq:solution-locus}
 \Sol_d(Q):=
 \left\{(z,\mathbf p)\in\A^{d+2}:
 Q\bigl(z,X,P_{\mathbf p},P_{\mathbf p}^{[1]},\ldots,
 P_{\mathbf p}^{[r]}\bigr)\equiv0
 \right\}.
\end{equation}

A polynomial in $F[Z,X,Y_0,\ldots,Y_r]$ is \emph{$Z$-primitive} if the gcd of its coefficients as a polynomial in $X,Y_0,\ldots,Y_r$ is $1$ in $F[Z]$.

\begin{theorem}[Parametric jet decomposition]\label{w:thm:jet-decomposition}
Fix integers $r,B\ge0$.  There is an effective constant $C_0=C_0(r,B)$ with the following property.  Let $F$ be algebraically closed of characteristic $p$, let $0\le d<p$, and let
\[
 0\ne Q\in F[Z,X,Y_0,\ldots,Y_r]
\]
be $Z$-primitive and satisfy
\begin{equation}\label{w:eq:jet-degree-input}
 \deg_ZQ,\deg_XQ\le D,
 \qquad
 \deg_Y^{\mathrm{tot}}Q\le B<p,
 \qquad D\ge2.
\end{equation}
Then there are a finite set $E_Q\subseteq F$ and affine algebraic sets
\[
 V_1,\ldots,V_t\subseteq\A^{d+2}
\]
such that
\begin{align}
 &\abs{E_Q}+\sum_{j=1}^{t}\cdeg(V_j)
   \le (D+d+2)^{C_0},\label{w:eq:jet-complexity}\\
 &\dim V_j\le r+1\quad\text{for every }j,\label{w:eq:jet-dimension}
\end{align}
and every $(z,\mathbf p)\in\Sol_d(Q)$ with $z\notin E_Q$ belongs to $\bigcup_jV_j$.
\end{theorem}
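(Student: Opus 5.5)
I would prove \cref{w:thm:jet-decomposition} by induction on the total derivative-variable degree $b:=\deg_Y^{\mathrm{tot}}Q\le B$, following Kopparty's nonsingular/singular split. I would make every step effective by \cref{w:lem:effective-intersection,w:lem:rational-graph}, and collect into $E_Q$ the parameter values at which a needed leading coefficient vanishes identically.

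If $b=0$, then $Q=Q(Z,X)$ is nonzero and $Z$-primitive. For every $z$, the specialization $Q(z,X)\ne0$, so $\Sol_d(Q)$ is empty. For $b\ge1$, let $Y_{j}$ be the highest-index derivative variable actually occurring in $Q$, and put $Q':=\partial Q/\partial Y_{j}$. Since $b<p$, $Q'$ is nonzero with $Y$-degree $b-1$. I would replace $Q'$ by its primitive part. Its content $c(Z)$ has degree at most $D$, and its roots go into $E_Q$. The solutions with $Q'(z,X,P,\ldots)\equiv0$ are then covered by the induction hypothesis applied to the primitive part, with the same degree bounds.

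The remaining \emph{nonsingular} solutions satisfy $G(X):=Q'(z,X,P,\ldots,P^{[r]})\not\equiv0$, and $\deg G\le D+Bd$. So $G$ has a nonroot among any $D+Bd+1$ anchors $\alpha\in F$. For each anchor $\alpha$ in a fixed set of size $D+Bd+1$ (say $\{0,1,\ldots\}$, using $d<p$ only for Hasse binomials; if $p$ is small relative to $D+Bd$, I would instead take anchors from $F$, which is infinite), I would consider the open piece on which $G(\alpha)\ne0$. Write $c_i=P^{[i]}(\alpha)$, so that by \cref{w:eq:hasse-series} every $P^{[i]}(\alpha+U)$ is linear in $(c_0,\ldots,c_d)$. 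Expanding $Q(z,\alpha+U,P(\alpha+U),\ldots)$ in powers of $U$, the coefficient of $U^{t}$ for $t\ge1$ involves $c_{j+t}$ linearly, multiplied by $\binom{j+t}{j}\,G(\alpha)$ (the derivative in the top variable), plus terms in lower $c$'s. This is the Taylor lifting step.

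Hence $c_{r+1},\ldots,c_d$ are recursively rational functions of $(z,c_0,\ldots,c_r)$ with denominator a power of $H:=Q'(z,\alpha,c_0,\ldots)$, with exponent at most $2(d-r)$. A degree count shows numerator degrees are at most $\mathrm{poly}(D,d)\cdot B$. Thus the nonsingular solutions at anchor $\alpha$ lie in the closure of the graph of a rational map from $\A^{r+2}\ni(z,c_0,\ldots,c_r)$, restricted to the locus $\{H\ne0\}$, composed with the linear change of coordinates from jets to monomial coefficients $\mathbf p$. Applying \cref{w:lem:rational-graph} with $s=r+2$ gives dimension $r+2$, which is one too many.

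To get dimension $r+1$, I would additionally impose the $U^0$ equation $Q(z,\alpha,c_0,\ldots,c_r)=0$, cutting the base inside $\A^{r+2}$. This hypersurface is proper because $Q_z$ and its $Y_j$-partial are nonzero. So the base has dimension $\le r+1$ and degree $\le D+B$, and \cref{w:lem:rational-graph} yields a piece with $\dim\le r+1$ and $\cdeg\le (D+d)^{O(1)}$. Summing over the $D+Bd+1$ anchors and the $B$ induction levels gives $(D+d+2)^{C_0}$, with the exceptional sets adding at most $BD$ points.

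The main obstacle I expect is the bookkeeping in the lifting step. One must verify that the coefficient of $c_{j+t}$ is exactly a nonzero constant times $H$, uniformly in $t$, because the lower-index variables $Y_i$ with $i<j$ also contain $c_{i+t}$ with smaller indices and so do not interfere. One must also check that the numerator degrees grow only polynomially in $d$ through the recursion. I would control both by a weighted-degree induction on $t$.
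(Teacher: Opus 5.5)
Your proposal is correct and follows the paper's proof essentially step for step. The shared steps are: induction on total $Y$-degree with the $Z$-primitive base case, the highest-variable partial with its content's roots placed in $E_Q$, the singular/nonsingular split over a fixed anchor set, triangular Hasse--Taylor lifting with powers of $H_\alpha$ as denominators, the base equation $Q(z,\alpha,c_0,\ldots)=0$ to bring the dimension down to $r+1$, and \cref{w:lem:rational-graph}; the numerator-degree control you defer to a weighted-degree induction is exactly the order-budget argument ($\sum h_i\le t$, with at least two parts when equality holds) of \cref{w:lem:rational-jet-degree}.

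Your deviations are harmless. Omitting the residual coefficient equations is fine since only a cover is needed, and the base then has cumulative degree at most $D+B$. Parametrizing by $(z,c_0,\ldots,c_r)$ instead of $(z,c_0,\ldots,c_j)$ for the true top index $j$ still yields dimension at most $r+1$. The trivial case $d<r$ (take $\A^{d+2}$ itself) should be set aside first, as the paper does.
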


The exceptional set is needed because, after differentiating $Q$ with respect to its highest derivative variable, all coefficients of that derivative may vanish at finitely many parameter values.  In our proximity application we select at most one polynomial above each $z$, so those fibers contribute only $\abs{E_Q}$ selected points.

\subsubsection{Triangular Hasse--Taylor lifting}

We first give the quantitative local lifting statement.  Suppose $Q$ depends on $Y_s$ but on no $Y_j$ with $j>s$.  Fix $\alpha\in F$ and write
\[
 P(\alpha+U)=\sum_{j=0}^{d}c_jU^j.
\]
Set
\begin{equation}\label{w:eq:Halpha}
 H_\alpha(Z,c_0,\ldots,c_s)
 :=\frac{\partial Q}{\partial Y_s}
       (Z,\alpha,c_0,\ldots,c_s).
\end{equation}
For $t\ge0$, let $F_{\alpha,t}$ be the coefficient of $U^t$ in
\begin{equation}\label{w:eq:Q-Taylor-sub}
 Q\left(Z,\alpha+U,
       P(\alpha+U),P^{[1]}(\alpha+U),\ldots,P^{[s]}(\alpha+U)
 \right).
\end{equation}

\begin{lemma}[Triangular coefficient equation]\label{w:lem:triangular}
For $1\le t\le d-s$,
\begin{equation}\label{w:eq:triangular-equation}
 F_{\alpha,t}
 =\binom{s+t}{s}H_\alpha(Z,c_0,\ldots,c_s)c_{s+t}
  +G_{\alpha,t}(Z,c_0,\ldots,c_{s+t-1})
\end{equation}
for a polynomial $G_{\alpha,t}$.  In particular, because $d<p$, the coefficient $\binom{s+t}{s}$ is nonzero in $F$.
\end{lemma}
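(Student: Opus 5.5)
The plan is to expand the substitution \cref{w:eq:Q-Taylor-sub} using \cref{w:eq:hasse-series} and track where the highest unknown coefficient $c_{s+t}$ can first appear. By \cref{w:eq:hasse-series}, for each $0\le i\le s$,
\[
 P^{[i]}(\alpha+U)=\sum_{j=i}^{d}\binom ji c_jU^{j-i}
 =c_i+\sum_{\mu\ge1}\binom{i+\mu}{i}c_{i+\mu}U^{\mu}.
\]
So the coefficient of $U^\mu$ in $P^{[i]}(\alpha+U)$ involves only $c_{i+\mu}$. For $\mu\le t$ and $i\le s$, its index is at most $s+t$, with equality only when $i=s$ and $\mu=t$. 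Also $X=\alpha+U$ contributes only a polynomial in $U$ with constant term $\alpha$.

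Next, Taylor-expand $Q$ around the constant-term point $(Z,\alpha,c_0,\ldots,c_s)$ in the variables $X,Y_0,\ldots,Y_s$. Write $X=\alpha+\xi$ and $Y_i=c_i+\eta_i$, where $\xi=U$ and $\eta_i=\sum_{\mu\ge1}\binom{i+\mu}{i}c_{i+\mu}U^\mu$. This expansion is a finite polynomial identity over $F[Z,c]$, and it needs no division because it is just the binomial expansion of $Q(\alpha+\xi,c+\eta)$. The coefficient of $U^t$ is then a sum over monomials $\xi^{a}\prod\eta_i^{b_i}$ times the corresponding Hasse-type coefficients of $Q$. All of these coefficients are polynomials in $Z,c_0,\ldots,c_s$ with integer-coefficient expressions.

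A monomial with total $(\xi,\eta)$-degree at least $2$ contributes to $U^t$ only through products of factors each of $U$-degree at least $1$ and summing to $t$. Each such factor has $U$-degree at most $t-1$, so every $c_j$ appearing in it has $j\le s+t-1$. The linear terms are of two kinds. The term $\xi$ contributes only to $U^1$, with no new $c$. The term $\eta_i$ with $i<s$ brings in $c_{i+t}$, where $i+t\le s+t-1$. Only $\eta_s$ contributes $c_{s+t}$, namely the term $\frac{\partial Q}{\partial Y_s}(Z,\alpha,c_0,\ldots,c_s)\binom{s+t}{s}c_{s+t}$, which is exactly $\binom{s+t}{s}H_\alpha c_{s+t}$. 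Collecting all remaining contributions into $G_{\alpha,t}$ gives \cref{w:eq:triangular-equation}, and $G_{\alpha,t}$ is a polynomial in $Z,c_0,\ldots,c_{s+t-1}$. The restriction $t\le d-s$ ensures that $c_{s+t}$ is an actual coordinate.

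Finally, $\binom{s+t}{s}\ne0$ because $s+t\le d<p$, so no prime factor of the binomial is divisible by $p$. The only delicate point is justifying that the first-order coefficient is the ordinary partial derivative even in characteristic $p$. This holds because the linear part of the exact polynomial expansion $Q(w+\eta)=Q(w)+\sum_i\partial_iQ(w)\eta_i+(\text{degree}\ge2\text{ in }\eta)$ is valid over any commutative ring.
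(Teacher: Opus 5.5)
Your proof is correct and follows essentially the same route as the paper's: both use \cref{w:eq:hasse-series} to see that $c_{s+t}$ enters the $U^t$-coefficient only linearly, through $Y_s$, with coefficient $\binom{s+t}{s}H_\alpha$, while every other contribution involves only $c_0,\ldots,c_{s+t-1}$. The paper phrases this as a first-order perturbation in $c_{s+t}$, noting that two perturbations have order at least $2t$; you instead expand $Q$ around the base jet and count $U$-orders, which also makes explicit that coefficients beyond $c_{s+t}$ never appear in $G_{\alpha,t}$.
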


\begin{proof}
Perturb $c_{s+t}$ while fixing all earlier Taylor coefficients.  By \cref{w:eq:hasse-series}, the induced perturbation of $P^{[s]}$ begins with
\[
 \binom{s+t}{s}c_{s+t}U^t,
\]
whereas the perturbation of $P^{[i]}$ for $i<s$ begins in degree $s+t-i>t$.  In the coefficient of $U^t$ in \cref{w:eq:Q-Taylor-sub}, only the term linear in the perturbation of $Y_s$ can occur; two such perturbations have order at least $2t>t$.  Its coefficient is the constant term of $\partial Q/\partial Y_s$ after substitution, namely $H_\alpha$.  This proves \cref{w:eq:triangular-equation}.  Since $s+t\le d<p$, the displayed binomial coefficient is nonzero.
\end{proof}

\begin{lemma}[Polynomial degree of the lifted jet]\label{w:lem:rational-jet-degree}
Assume \cref{w:eq:jet-degree-input}, and suppose the total $Y$-degree is at most $B$.  On the open set $H_\alpha\ne0$, the equations
\[
 F_{\alpha,t}=0,
 \qquad 1\le t\le d-s,
\]
recursively determine
\begin{equation}\label{w:eq:rational-coefficients}
 c_{s+t}=\frac{N_t(Z,c_0,\ldots,c_s)}
                 {H_\alpha(Z,c_0,\ldots,c_s)^{e_t}}.
\end{equation}
There is a constant $C_B$ depending only on $B$ such that
\begin{equation}\label{w:eq:rational-degree-bound}
 e_t\le 2t-1,
 \qquad
 \deg N_t\le C_B(D+B)t^2.
\end{equation}
After these substitutions, every remaining coefficient equation
$F_{\alpha,t}=0$ has a numerator of degree at most
\begin{equation}\label{w:eq:residual-degree}
 E_0:=C_B(D+B)(d+2)^2.
\end{equation}
\end{lemma}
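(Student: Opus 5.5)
We argue by induction on $t$, using \cref{w:lem:triangular} as the engine. For $t=1$, the equation $F_{\alpha,1}=0$ reads $\binom{s+1}{s}H_\alpha c_{s+1}+G_{\alpha,1}(Z,c_0,\ldots,c_s)=0$. On $H_\alpha\ne0$ it gives $c_{s+1}=N_1/H_\alpha$, with $N_1=-\binom{s+1}{s}^{-1}G_{\alpha,1}$. This is consistent with $e_1\le1$.

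For the inductive step, suppose $c_{s+1},\ldots,c_{s+t-1}$ have already been written as $N_i/H_\alpha^{e_i}$ with $e_i\le 2i-1$. Substitute these into $G_{\alpha,t}(Z,c_0,\ldots,c_{s+t-1})$ and clear denominators. The key quantity to control is how many times each $c_{s+i}$ appears in a monomial of $G_{\alpha,t}$. I would show that $G_{\alpha,t}$ is weighted-homogeneous-bounded: give $c_{s+i}$ weight $i$ (and $c_j$ weight $0$ for $j\le s$). Every monomial of $G_{\alpha,t}$ then has total weight at most $t$, because $F_{\alpha,t}$ is the $U^t$ coefficient and $c_{s+i}$ enters $P^{[j]}(\alpha+U)$ only at $U$-order at least $s+i-j\ge i$.

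The same expansion bounds the $Y$-degree. The total $Y$-degree of $Q$ is at most $B$, so each monomial of $G_{\alpha,t}$ is a product of at most $B$ factors, each a Taylor coefficient of some $P^{[j]}(\alpha+U)$, times $Z$, $U$ and coefficients coming from the $X$-expansion. A monomial $\prod_i c_{s+i}^{k_i}$ with $\sum_i i k_i\le t$ and $\sum_i k_i\le B$ then has denominator exponent
\[
 \sum_i k_i(2i-1)\le 2t-\textstyle\sum_i k_i.
\]
Putting everything over the common denominator $H_\alpha^{2t-2}$ and dividing by $H_\alpha$ from the leading term gives $e_t\le 2t-1$. The boundary case is a single factor with $i=t$, which cannot occur in $G_{\alpha,t}$ because $c_{s+t}$ itself has been separated off. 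This case is why the exponent bound is $2t-1$ and not larger.

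Numerator degrees follow the same recursion. The coefficients of $G_{\alpha,t}$ as a polynomial in $(Z,c_0,\ldots,c_s)$ have degree at most $D+B$: the $Z$-degree is at most $D$, the $X$-expansion of $Q$ at $\alpha$ contributes only field constants, and there are at most $B$ factors of the $c_j$. Multiplying by $H_\alpha^{\,e_t-\deg}$ adds at most $(2t)(D+B)$. The products $\prod N_i^{k_i}$ add at most
\[
 \sum_i k_i\,C_B(D+B)i^2\le C_B(D+B)\Bigl(\textstyle\sum_i k_i\, i\Bigr)^2
\]
in degree, and this is strictly less than $C_B(D+B)t^2$ whenever no single factor is $c_{s+t}$. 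Choosing $C_B$ large, for instance absorbing the additive $2t(D+B)$ terms via $t^2-(t-1)^2\ge t$, closes the induction with $\deg N_t\le C_B(D+B)t^2$.

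For the residual equations $F_{\alpha,t}$ with $t>d-s$, including $t=0$, I would substitute the same formulas. The weights are now at most $t\le\deg_U$ of the whole expression, and $\deg_U$ is at most $D+B d$. Clearing denominators with $e\le 2(d+1)$ then gives a numerator degree bound of the form $C_B(D+B)(d+2)^2$ in the same way.

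The main obstacle is making the weight bookkeeping rigorous. One must check that the $X$-shift $X=\alpha+U$ contributes $U$-powers that only reduce the available weight, and that the number of $c$-factors per monomial is bounded by $B$, so that the quadratic degree bound survives.
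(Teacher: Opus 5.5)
Your inductive argument is essentially the paper's proof. It uses the same order budget: a noninitial coefficient $c_{s+h}$ enters at $U$-order at least $h$, and at most $B$ of them occur in one monomial. It uses the same observation that the only single-factor term of weight $t$ is the separated linear term in $c_{s+t}$, which gives the common denominator $H_\alpha^{2t-2}$ for $G_{\alpha,t}$ and hence $e_t\le 2t-1$. And it runs the same quadratic numerator recursion.

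One small tightening is needed in the numerator induction. Consider the case $\sum_i ik_i=t$ with at least two factors. Your estimate $t^2-(t-1)^2\ge t$ does not apply there, and the chain through $(\sum_i ik_i)^2$ only yields $\le t^2$, with no slack. Use instead the concentration bound $\sum_i k_i i^2\le (t-1)^2+1$. Its deficit $2t-2\ge t$ for $t\ge2$ absorbs the additive $(2t+1)(D+B)$; the paper closes the induction with $C_B=4$.

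The residual paragraph, as written, does not prove the stated form of $E_0$. You bound the weight of a monomial of $F_{\alpha,t}$ by $t\le D+Bd$ and then argue ``in the same way''. That only gives a numerator bound of order $(D+B)(D+Bd)^2$, and $D$ may be far larger than $d$: in the application $D=\Theta(n)$, while $d=k-1$ can be small. Likewise, $e\le 2(d+1)$ holds for each individual $e_h$. It does not hold for the common denominator of a monomial containing up to $B$ factors $c_{s+h}$ with $h$ close to $d-s$.

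The correct handle, which is what the paper uses, ignores $t$ entirely. Each monomial has at most $B$ noninitial coefficients, each of excess $h\le d-s$. Hence $\sum_j h_j\le Bd$, the common-denominator exponent is at most $2Bd$, and $\sum_j h_j^2\le B^2d^2$. Alternatively, there are at most $B$ substituted numerators, each of degree at most $C_B(D+B)d^2$. Either way this yields $O_B((D+B)(d+2)^2)$. Your closing remark about the $B$-factor bound is exactly this ingredient; it needs to replace the $t$-based weight bound in that paragraph.
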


\begin{proof}
The recursive existence follows from \cref{w:lem:triangular}.  We prove the denominator bound and the stronger numerator estimate $\deg N_t\le4(D+B)t^2$ by induction on $t$.  Consider a monomial contributing to the coefficient of $U^t$ in \cref{w:eq:Q-Taylor-sub}.  It contains at most $B$ Taylor coefficients.  If $c_j$ with $j>s$ occurs through $P^{[i]}$, its $U$-order is $j-i\ge j-s$.  Hence, if the occurrences above the initial jet are $c_{s+h_1},\ldots,c_{s+h_v}$, then
\begin{equation}\label{w:eq:order-budget}
 h_1+\cdots+h_v\le t,
 \qquad v\le B.
\end{equation}
In the remainder $G_{\alpha,t}$ from \cref{w:eq:triangular-equation}, either $h_1+\cdots+h_v\le t-1$, or equality holds and $v\ge2$: the only term with $v=1$ and $h_1=t$ is the linear term in $c_{s+t}$ already separated in \cref{w:eq:triangular-equation}.  Using the inductive bound $e_h\le2h-1$, every monomial of $G_{\alpha,t}$ therefore has denominator exponent at most
\[
 \sum_{j=1}^{v}(2h_j-1)\le2t-2.
\]
After putting $G_{\alpha,t}$ over the common denominator $H_\alpha^{2t-2}$, solving \cref{w:eq:triangular-equation} for $c_{s+t}$ gives $e_t\le2t-1$.

For the numerator degree, the same separation gives
\[
 \sum_{j=1}^{v}h_j^2\le t^2-2t+2
\]
whenever $t\ge2$: under a fixed sum, the sum of squares is maximized by concentrating all but one unit in a single part, and the one-part case of total weight $t$ is absent from $G_{\alpha,t}$.  By induction, the numerator of a substituted monomial therefore has degree at most
\[
 D+B+4(D+B)(t^2-2t+2).
\]
Passing to the common denominator $H_\alpha^{2t-2}$ adds at most $(2t-2)(D+B)$, since $\deg H_\alpha\le D+B$.  For $t\ge2$, the inequality $1+4(t^2-2t+2)+(2t-2)\le4t^2$ closes the induction and gives
\[
 \deg N_t\le4(D+B)t^2.
\]
The case $t=1$ follows directly from \cref{w:eq:triangular-equation}.  This proves \cref{w:eq:rational-degree-bound}.

The same order-budget argument applies to every residual coefficient of \cref{w:eq:Q-Taylor-sub}.  There are no Taylor coefficients beyond $c_d$, at most $B$ noninitial coefficients occur in one monomial, and their total excess above $s$ is at most $Bd$.  Thus a common denominator uses only $O_B(d)$ powers of $H_\alpha$, while
\[
 \sum_j h_j^2\le\left(\sum_jh_j\right)^2\le B^2d^2.
\]
The product of the substituted numerators and the denominator clearing therefore have degree $O_B((D+B)(d+2)^2)$, which is \cref{w:eq:residual-degree} after increasing $C_B$.
\end{proof}

\subsubsection{Proof of the jet-decomposition theorem}

\begin{proof}[Proof of \cref{w:thm:jet-decomposition}]
We induct on the total $Y$-degree of $Q$.  If $d<r$, the ambient space $\A^{d+2}$ itself has dimension $d+2\le r+1$, so the theorem is immediate with one degree-one set and no exceptions.  We may therefore assume $d\ge r$.

If $Q$ is independent of every $Y_j$, then $Q(z,X)\equiv0$ has no solution for any $z$: $Z$-primitivity rules out a specialization at which all $X$-coefficients vanish.  This is the base case.

Otherwise, let $s\le r$ be maximal such that $Q$ depends on $Y_s$, and put
\[
 H:=\frac{\partial Q}{\partial Y_s}.
\]
Because $\deg_{Y_s}Q\le B<p$ whenever the theorem is used below, $H$ is nonzero.  (More generally, the assumption needed here is $\deg_{Y_s}Q<p$.)  Write
\[
 H=c(Z)\widetilde H
\]
where $c(Z)\in F[Z]$ is the content and $\widetilde H$ is $Z$-primitive.  Add the roots of $c$ to the exceptional set.  There are at most $D$ of them.

Every nonexceptional solution $(z,P)$ is of one of two types.

\smallskip
\noindent\emph{Singular type.}
If
\[
 H\bigl(z,X,P,P^{[1]},\ldots,P^{[s]}\bigr)\equiv0,
\]
then $(z,P)\in\Sol_d(\widetilde H)$.  The total $Y$-degree has decreased by one, while all $Z$- and $X$-degrees remain at most $D$.  The inductive hypothesis supplies the required cover for all singular solutions outside another set of at most $(D+d+2)^{C_0(r,B-1)}$ parameter values.

\smallskip
\noindent\emph{Nonsingular type.}
Suppose instead that the substituted polynomial $H(z,X,P,\ldots,P^{[s]})$ is nonzero.  Its $X$-degree is at most
\begin{equation}\label{w:eq:H-sub-degree}
 D+(B-1)d.
\end{equation}
Choose a fixed set
\[
 \Omega\subseteq F,
 \qquad \abs{\Omega}=D+(B-1)d+1.
\]
Then some $\alpha\in\Omega$ satisfies
\[
 H_\alpha(z,P(\alpha),P^{[1]}(\alpha),\ldots,P^{[s]}(\alpha))\ne0.
\]
Fix this $\alpha$.  In Taylor coordinates $c_j=P^{[j]}(\alpha)$, \cref{w:lem:triangular,w:lem:rational-jet-degree} express every $c_{s+1},\ldots,c_d$ as a rational function of the base variables
\[
 u=(z,c_0,\ldots,c_s)
\]
with numerator degree at most $E_0=O_B((D+B)(d+2)^2)$ and a common denominator $H_\alpha^L$ satisfying $\deg H_\alpha,L\deg H_\alpha\le E_0$.

The base variables must satisfy
\begin{equation}\label{w:eq:base-equation}
 Q(z,\alpha,c_0,\ldots,c_s)=0,
\end{equation}
and the numerators of all residual coefficient equations.  Let $U_\alpha$ be the corresponding constructible set with $H_\alpha\ne0$.  If $U_\alpha$ is nonempty, then $H_\alpha$ is not the zero polynomial.  Consequently the polynomial
\[
 Q_\alpha(Z,c_0,\ldots,c_s):=Q(Z,\alpha,c_0,\ldots,c_s)
\]
is nonzero, since $\partial Q_\alpha/\partial c_s=H_\alpha$.  As $\overline{U_\alpha}\subseteq Z(Q_\alpha)\subseteq\A^{s+2}$, we have
\begin{equation}\label{w:eq:base-dim}
 \dim\overline{U_\alpha}\le s+1\le r+1.
\end{equation}
Let $W_\alpha$ be the closed set defined by the base equation and the cleared residual equations. Then $U_\alpha=W_\alpha\cap\{H_\alpha\ne0\}$, so $\overline{U_\alpha}$ is the union of the irreducible components of $W_\alpha$ not contained in $Z(H_\alpha)$. By \cref{w:lem:effective-intersection} and \cref{w:eq:residual-degree},
\begin{equation}\label{w:eq:base-degree}
 \cdeg(\overline{U_\alpha})
 \le (D+d+2)^{O_{r,B}(1)}.
\end{equation}
The rational Taylor-coefficient map from $U_\alpha$ to $(z,c_0,\ldots,c_d)$ satisfies the hypotheses of \cref{w:lem:rational-graph}.  Its graph closure therefore has dimension at most $s+1$ and cumulative degree $(D+d+2)^{O_{r,B}(1)}$. The graph is identified with its $(z,c_0,\ldots,c_d)$ coordinates, since these already contain the base variables. The change from Taylor coefficients at $\alpha$ to monomial coefficients is invertible and linear, so it preserves dimension and degree.  We obtain one admissible algebraic set for each $\alpha\in\Omega$.

There are only $O_B(D+d)$ choices of $\alpha$, and the singular induction has depth at most $B$.  Increasing one effective exponent $C_0(r,B)$ absorbs all exceptional parameter values, all graph degrees, and all recursive contributions, proving \cref{w:eq:jet-complexity,w:eq:jet-dimension}.
\end{proof}

\begin{remark}[Why the large-characteristic hypothesis is essential]
If $d\ge p$, even the equation $Y_1=0$ has every polynomial in $X^p$ as a solution, so the solution locus has dimension growing with $d/p$.  The condition $d<p$ eliminates exactly this Frobenius freedom.  In the prime-field Reed--Solomon application, $d=k-1<n\le q=p$.
\end{remark}

\section{List bounds independent of the field size}\label{sec:list-bounds}
The root solver bounds the number of all polynomial solutions of $Q=0$ by a power of $q$. Agreement supplies additional linear equations, allowing a bound in terms of $n$ instead. This section records the consequence of combining interpolation with the geometric decomposition.

\begin{corollary}[A parameter-free solution cover]\label{cor:static-jet}
Fix $r,B\ge0$. There is an effective $C_0=C_0(r,B)$ such that the following holds. Let $F$ be algebraically closed of characteristic $p$, let $d<p$, let $B<p$, and let $D\ge2$. For a nonzero
\[
 Q(X,Y_0,\ldots,Y_r)\in F[X,Y_0,\ldots,Y_r],\qquad
 \deg_XQ\le D,\quad \deg_Y^{\rm tot}Q\le B,
\]
the coefficient vectors of all degree-at-most-$d$ solutions of
\[
 Q(X,P,P^{[1]},\ldots,P^{[r]})\equiv0
\]
are contained in a union of affine algebraic sets $W_1,\ldots,W_t\subseteq\A^{d+1}$ with
\[
 \dim W_j\le r,\qquad
 \sum_j\cdeg(W_j)\le(D+d+2)^{C_0}.
\]
\end{corollary}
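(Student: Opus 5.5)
The plan is to deduce the statement from \cref{w:thm:jet-decomposition} by adjoining a dummy parameter. Given $Q(X,Y_0,\ldots,Y_r)$, regard it as an element $\widehat Q\in F[Z,X,Y_0,\ldots,Y_r]$ that does not depend on $Z$. Since $Q\ne0$, the coefficients of $\widehat Q$ in $X,Y_0,\ldots,Y_r$ are constants in $F$, not all zero, so their gcd in $F[Z]$ is $1$. Thus $\widehat Q$ is $Z$-primitive. Its degrees satisfy $\deg_Z\widehat Q=0\le D$, $\deg_X\widehat Q\le D$, and $\deg_Y^{\rm tot}\widehat Q\le B<p$. Together with $d<p$ and $D\ge2$, these are exactly the hypotheses \cref{w:eq:jet-degree-input}, so the theorem supplies $E_Q$ and sets $V_1,\ldots,V_t\subseteq\A^{d+2}$ with $\dim V_j\le r+1$ and total complexity at most $(D+d+2)^{C_0(r,B)}$.

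Next, observe that $\Sol_d(\widehat Q)=\A^1\times S$, where $S\subseteq\A^{d+1}$ is the set of coefficient vectors of degree-at-most-$d$ solutions of the static equation. The exceptional set $E_Q$ is finite and $F$ is infinite, so we can fix some $z_0\in F\setminus E_Q$. Every $\mathbf p\in S$ then gives $(z_0,\mathbf p)\in\Sol_d(\widehat Q)$ with $z_0\notin E_Q$, so $(z_0,\mathbf p)\in\bigcup_jV_j$. Define $W_j:=\{\mathbf p:(z_0,\mathbf p)\in V_j\}$, identified with $V_j\cap\{Z=z_0\}$ via projection away from the $Z$-coordinate. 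These $W_j$ cover $S$.

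It remains to control the dimension and degree of the slices. Apply \cref{w:lem:effective-intersection} to $V_j$ with the single linear equation $Z-z_0$, so $E=1$. This gives $\cdeg(V_j\cap\{Z=z_0\})\le\cdeg(V_j)$, and the coordinate projection, being a linear isomorphism of the hyperplane onto $\A^{d+1}$, preserves degrees. For the dimension there is a subtlety: an irreducible component of $V_j$ lying entirely inside $\{Z=z_0\}$ would not lose a dimension, so naively the slice could have dimension $r+1$.

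This is the step I expect to be the main obstacle, and I would handle it by choosing $z_0$ generically. Each $V_j$ has finitely many irreducible components, and each component lies in at most one hyperplane $\{Z=c\}$. Let $E'$ be the finite set of such constants $c$, taken over all components of all $V_j$. Choosing $z_0\notin E_Q\cup E'$, which is possible since $F$ is infinite, guarantees that every component of $V_j$ meeting $\{Z=z_0\}$ is not contained in it. Hence the intersection with that component has dimension at most $(r+1)-1=r$, and so $\dim W_j\le r$. The cumulative-degree bound $\sum_j\cdeg(W_j)\le\sum_j\cdeg(V_j)\le(D+d+2)^{C_0}$ is unaffected by this choice, which completes the argument with the same constant $C_0(r,B)$.
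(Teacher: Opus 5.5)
Your proposal is correct and follows the paper's proof essentially step for step. You regard $Q$ as a $Z$-independent (hence trivially $Z$-primitive) polynomial, apply \cref{w:thm:jet-decomposition}, and slice at a $z_0$ avoiding both $E_Q$ and the constant $Z$-values of vertical components. Every surviving intersection is then a proper hyperplane cut of dimension at most $r$ and of no larger cumulative degree, which gives the stated bounds with the same $C_0(r,B)$.
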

\begin{proof}
Introduce an indeterminate $Z$ and regard $Q$ as independent of $Z$. It is $Z$-primitive. Apply \cref{w:thm:jet-decomposition}, and decompose its covering sets into irreducible components. Choose $z_0\in F$ outside the exceptional set and outside the finitely many constant $Z$-values of vertical components. Such a choice exists because $F$ is infinite.

Intersect every nonvertical component with $Z=z_0$. This is a proper hyperplane intersection, so its dimension is at most $r$ and its cumulative degree is at most the degree of the component. Vertical components have empty intersection with this slice. Identifying the slice with $\A^{d+1}$ yields the stated degree and dimension bounds. Every solution $P$ belongs to this cover because $(z_0,P)$ solves the parameter-independent equation and $z_0$ is nonexceptional. This is an existence argument for a cover; the choice of $z_0$ is not an additional step of the decoding algorithm.
\end{proof}

\begin{lemma}[Counting points isolated by their support equations]\label{lem:support-isolation}
Let $F$ be algebraically closed, and let $V_1,\ldots,V_t\subseteq\A^N$ have dimension at most $s$ and total cumulative degree at most $\Delta$. Let $h_1,\ldots,h_M$ have degree at most $E\ge1$, with $M\ge1$. Suppose $T\subseteq\bigcup_jV_j$ is finite and every $x\in T$ is assigned a set $S_x\subseteq[M]$ such that
\[
 \bigcap_{i\in S_x} Z(h_i)=\{x\}.
\]
Then
\[
 |T|\le\Delta\sum_{a=0}^{s}(ME)^a.
\]
\end{lemma}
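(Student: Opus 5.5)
The plan is to run a successive-cut descent on each point of $T$ and then charge every point to a zero-dimensional component of some intersection $V_j\cap Z(h_{i_1})\cap\cdots\cap Z(h_{i_a})$ with $a\le s$, where each retained cut drops dimension. Fix $x\in T$. Choose an irreducible component $X_0$ of some $V_j$ that contains $x$. Since $\bigcap_{i\in S_x}Z(h_i)=\{x\}$, the intersection $X_0\cap\bigcap_{i\in S_x}Z(h_i)$ equals $\{x\}$.

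We then descend. At stage $b$ let $X_b$ be an irreducible variety containing $x$. If $\dim X_b=0$, then $X_b=\{x\}$ and we stop. Otherwise not every $h_i$ with $i\in S_x$ vanishes identically on $X_b$; if they all did, the intersection would contain the positive-dimensional $X_b$, contradicting isolation. So pick the least such index $i_{b+1}\in S_x$. Let $X_{b+1}$ be an irreducible component of $X_b\cap Z(h_{i_{b+1}})$ containing $x$. This is a proper hypersurface section of an irreducible variety, so $\dim X_{b+1}=\dim X_b-1$. After $a=\dim X_0\le s$ steps we reach $\{x\}$. The point $x$ is therefore an irreducible component of the iterated intersection along the chain $X_0\supset X_1\supset\cdots\supset X_a$, indexed by the tuple $(i_1,\dots,i_a)\in[M]^a$.

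For counting, organize everything as the component tree used in the proof of \cref{w:lem:effective-intersection}. For a fixed component $X_0$ and a fixed tuple $(i_1,\ldots,i_a)$, consider the tree in which each node is cut by the next listed $h_i$, keeping only components on which the cut is proper. B\'ezout shows that the sum of degrees at depth $b$ is at most $\deg(X_0)E^b$. Hence the number of zero-dimensional leaves at depth $a$ is at most $\deg(X_0)E^a$, because each point has degree one. Every $x$ that reaches depth $a$ with this tuple is such a leaf: it appears as a component at each stage, and each of its cuts was proper. So $x$ is counted in the tree for its own tuple.

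Summing over the at most $M^a$ tuples, over $a=0,\dots,s$, and over the components $X_0$ with $\sum\deg X_0\le\Delta$, we get
\[
 |T|\le\sum_{X_0}\deg(X_0)\sum_{a=0}^{s}M^aE^a\le\Delta\sum_{a=0}^s(ME)^a .
\]
The one point needing care is the bookkeeping. When a cut is improper on some branch of the tree, that branch is simply not extended; since $x$'s chain only ever uses proper cuts, it survives in the tree for its own tuple. The depth-$a$ degree bound then follows by induction from B\'ezout, with $\deg(X\cap Z(h))\le\deg(X)\deg h$ summed over the components.
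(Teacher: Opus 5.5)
Your proposal is correct and follows essentially the same argument as the paper: you descend from a component of the cover through proper cuts by assigned support equations until you reach $\{x\}$ after at most $s$ steps, then bound the zero-dimensional leaves for each ordered index tuple by B\'ezout and sum over the at most $M^a$ tuples. Your version is more explicit than the paper's in fixing the starting component and checking that the point's own chain survives in the pruned tree for its tuple, but this is bookkeeping the paper leaves implicit, not a different route.
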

\begin{proof}
Assign each point to an irreducible component of the cover that contains it. At a positive-dimensional component $W$ through $x$, not every assigned equation can vanish identically: otherwise $W$ would be contained in their singleton intersection. Choose an assigned equation that does not vanish identically and pass to an irreducible component of its intersection with $W$ containing $x$. Dimension drops at every step, so a zero-dimensional component is reached after at most $s$ cuts.

For a fixed ordered tuple of $a$ equation indices, B\'ezout bounds the total degree of the resulting proper-intersection components by $\Delta E^a$. A zero-dimensional component contributes at most its degree many points. Summing over the at most $M^a$ tuples of length $a$, and then over $0\le a\le s$, gives the bound. Repetitions or overlaps can only overcount points.
\end{proof}

\begin{theorem}[A sparse list bound from a differential interpolant]\label{thm:geometric-list}
Fix $r,B\ge0$, and let $\F_q$ be a finite field of characteristic $p>\max\{k-1,B\}$. Let $D=\{\alpha_1,\ldots,\alpha_n\}\subseteq\F_q$ have $n$ distinct points, let $S_i\subseteq\F_q$ have total incidence $M$, and let $A\ge k\ge1$. Suppose every $P\in\cL(D,k,\mathbf S,A)$ satisfies
\[
 Q(X,P,P^{[1]},\ldots,P^{[r]})\equiv0
\]
for a nonzero polynomial $Q$ with $\deg_XQ\le D_0$, $\deg_Y^{\rm tot}Q\le B$, and $D_0\ge2$. Then
\begin{equation}\label{eq:field-independent-list}
 |\cL(D,k,\mathbf S,A)|
 \le (D_0+k+1)^{C_0(r,B)}\sum_{a=0}^{r}M^a.
\end{equation}
If $M=0$, the list is empty.
\end{theorem}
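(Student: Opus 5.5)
The plan is to combine \cref{cor:static-jet} with \cref{lem:support-isolation}, working over $F=\overline{\F_q}$ with $d=k-1<p$ and $N=d+1=k$. First, \cref{cor:static-jet} applied to $Q$ with $D=D_0$ gives affine algebraic sets $W_1,\ldots,W_t\subseteq\A^{k}$. Each has dimension at most $r$, and their total cumulative degree is at most $(D_0+k+1)^{C_0(r,B)}$, since $d+2=k+1$. The coefficient vector of every solution of degree below $k$ lies in $\bigcup_jW_j$. In particular, every $P\in\cL(D,k,\mathbf S,A)$ lies in this cover. Note that $p>B$ and $p>d$, as \cref{cor:static-jet} requires.

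Next, set up the equations. Enumerate the $M$ point--symbol incidences $(\alpha_i,y)$ with $y\in S_i$. To each incidence attach the affine-linear form
\[
 h_{(i,y)}(\mathbf p):=P_{\mathbf p}(\alpha_i)-y,
\]
which has degree $E=1$ in the coefficient coordinates. Let $T$ be the finite set $\cL(D,k,\mathbf S,A)$, and for $P\in T$ let $S_P$ be the set of incidences $(\alpha_i,P(\alpha_i))$ with $P(\alpha_i)\in S_i$. This set has at least $A\ge k$ elements, and they lie at distinct points $\alpha_i$. The common zero set $\bigcap_{(i,y)\in S_P}Z(h_{(i,y)})$ in $\A^k(F)$ consists of the polynomials of degree below $k$ that take the prescribed values at at least $k$ distinct points. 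By Lagrange uniqueness (the Vandermonde system has full rank), this set is exactly $\{P\}$. This verifies the isolation hypothesis of \cref{lem:support-isolation}.

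Finally, if $M=0$ then no $P$ has any list hits, so $A\ge1$ forces the list to be empty. Otherwise $M\ge1$, and \cref{lem:support-isolation} with $s=r$, $E=1$ and $\Delta=(D_0+k+1)^{C_0}$ gives
\[
 |T|\le(D_0+k+1)^{C_0}\sum_{a=0}^{r}M^a,
\]
which is \cref{eq:field-independent-list}.

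The only delicate point is making sure the cover from \cref{cor:static-jet} contains the list points in the same coordinates used for the support equations. The cover is stated for coefficient vectors of degree-at-most-$d$ solutions, and the $h_{(i,y)}$ are written in those same monomial coordinates, so this is consistent. The statement's use of $D$ for the evaluation set clashes with the $D$ of \cref{cor:static-jet}, but we simply instantiate the latter with $D_0$. Otherwise the argument is a direct assembly of the two earlier results.
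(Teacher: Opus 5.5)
Your proposal is correct and follows the paper's proof essentially step for step. It covers the list by \cref{cor:static-jet} with $d=k-1$, isolates each candidate by its linear incidence equations via Vandermonde invertibility, and concludes with \cref{lem:support-isolation} at $E=1$, $s=r$. Your separate treatment of $M=0$, which the isolation lemma needs since it assumes $M\ge1$, makes explicit a detail the paper's proof leaves implicit.
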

\begin{proof}
Apply \cref{cor:static-jet} over $\overline{\F_q}$ with $d=k-1$. The candidate coefficient vectors lie in a cover of dimension at most $r$ and cumulative degree at most $(D_0+k+1)^{C_0(r,B)}$.

For every incidence $(i,b)$ with $b\in S_i$, define the hyperplane
\[
 h_{i,b}(\mathbf p)=\sum_{a=0}^{k-1}p_a\alpha_i^a-b.
\]
To a candidate $P$ assign all pairs $(i,P(\alpha_i))$ at its list-hit coordinates. There are at least $A\ge k$ distinct such coordinates. Any $k$ of their evaluation equations form an invertible Vandermonde system, so their common zero set, and hence that of all assigned equations, is exactly the coefficient vector of $P$. Apply \cref{lem:support-isolation} with $E=1$ and $s=r$. This proves \cref{eq:field-independent-list}.
\end{proof}

\begin{proof}[Proof of \cref{thm:capacity}]
The algorithmic assertion is \cref{cor:algorithmic-capacity}. For the list bound, take $A=k+\ceil{\gamma n}$ and choose the parameters of \cref{lem:uniform-padding}. By \cref{cor:interpolation-interface}, there is a nonzero differential interpolant containing every candidate, with
\[
 \deg_XQ<mA=O_{\gamma,L}(n),\qquad
 \deg_Y^{\rm tot}Q\le B_{\gamma,L}.
\]
For sufficiently large $n$, the characteristic $p=q\ge n$ exceeds both $k-1$ and $B_{\gamma,L}$. Use \cref{thm:geometric-list} with $D_0=\max\{2,mA\}$ and $M\le Ln$. Since $r,B_{\gamma,L},L$ are fixed, its right-hand side is $n^{O_{\gamma,L}(1)}$. Increasing an effective exponent and the length threshold absorbs the fixed multiplicative constants.

For singleton lists, relative distance at most $1-k/n-\gamma$ is equivalent to at least
\[
 n-\floor{(1-k/n-\gamma)n}=k+\ceil{\gamma n}
\]
agreements. This proves the decoding-radius formulation as well.
\end{proof}

\begin{remark}[What the list bound does not give]
The estimate is uniform in the field size; it is not a constant list-size bound. Nor does it by itself turn Kopparty's root enumeration into a $\operatorname{poly}(n,\log q)$ algorithm. The geometric cover is used for counting, whereas the stated decoder may enumerate $q^{O(r)}$ differential-equation roots before filtering.
\end{remark}

\section{Support-preserving descent and mutual correlated agreement}\label{sec:mca-proof}\subsection{Descent along the full agreement support}

The point-isolation argument has a parametric counterpart. We follow the complete support assigned to each witness and stop either on a vertical component or on a branch that explains that support identically. This is the step needed for MCA rather than ordinary correlated agreement.

\begin{lemma}[Support-preserving descent]\label{w:lem:support-descent}
Let $F$ be algebraically closed, let $\pi:\A^N\to\A^1$ be one coordinate function, and let $V_1,\ldots,V_t\subseteq\A^N$ be affine algebraic sets satisfying
\[
 \dim V_j\le s,
 \qquad
 \sum_{j=1}^{t}\cdeg(V_j)\le\Delta.
\]
Let $h_1,\ldots,h_n\in F[x_1,\ldots,x_N]$ have degree at most $L\ge1$.  Let $T\subseteq\bigcup_jV_j$ be finite, and assign to each $x\in T$ a set $S_x\subseteq[n]$ such that $h_i(x)=0$ for every $i\in S_x$.

Call $x$ \emph{branch-explained} if there is a positive-dimensional irreducible subvariety $W\subseteq V_j$ through $x$ for some $j$ such that $\pi|_W$ is nonconstant and
\[
 h_i|_W\equiv0
 \qquad\text{for every }i\in S_x.
\]
Then
\begin{equation}\label{w:eq:support-descent-bound}
 \abs{\{\pi(x):x\in T\text{ is not branch-explained}\}}
 \le \Delta\sum_{a=0}^{s}(nL)^a.
\end{equation}
\end{lemma}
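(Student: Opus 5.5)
The plan is to adapt the descent in the proof of \cref{lem:support-isolation}, with a different stopping rule. Fix a point $x\in T$ that is not branch-explained, and an irreducible component $W_0$ of some $V_j$ containing $x$. We build a chain of irreducible varieties $W_0\supsetneq W_1\supsetneq\cdots$, all containing $x$, as follows. Suppose $W_a$ is the current variety. We stop if $W_a$ is zero-dimensional, or if $\pi|_{W_a}$ is constant; call such a terminal variety \emph{vertical}. Otherwise $W_a$ is positive-dimensional with $\pi|_{W_a}$ nonconstant. Since $x$ is not branch-explained, some $i\in S_x$ has $h_i|_{W_a}\not\equiv0$. Pick the least such index $i_{a+1}$, and let $W_{a+1}$ be an irreducible component of $W_a\cap Z(h_{i_{a+1}})$ through $x$; this exists because $h_{i_{a+1}}(x)=0$. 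The cut is proper, so the dimension drops by one at each step, and the chain stops after at most $s$ cuts.

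Next, we show that every terminal variety contributes a single $\pi$-value. If the chain ends at a zero-dimensional $W_a$, then $W_a=\{x\}$. If it ends at $W_a$ with $\pi|_{W_a}$ constant, that constant is $\pi(x)$. So in either case $\pi(x)$ is the unique value of $\pi$ on the terminal variety. It therefore suffices to count the terminal irreducible varieties over all possible choice sequences. Each such variety is indexed by an ordered tuple $(i_1,\ldots,i_a)\in[n]^a$ with $0\le a\le s$, together with a choice of component at each step.

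The counting is the B\'ezout bookkeeping of \cref{w:lem:effective-intersection}. Fix a tuple of length $a$ and run the same component tree starting from all components of all $V_j$. At each retained cut, the sum of degrees of the children is multiplied by at most $L$. Components on which the next equation vanishes identically are simply not continued along this tuple; their chains end earlier and are counted under a shorter tuple. The total number of components, and hence of terminal varieties for that tuple, is therefore at most the total degree $\Delta L^a$, since each component has degree at least one. Summing over the at most $n^a$ tuples of length $a$ and over $0\le a\le s$ gives at most $\Delta\sum_{a=0}^s(nL)^a$ terminal varieties. Each carries at most one $\pi$-value, which is \cref{w:eq:support-descent-bound}.

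The main obstacle is the vertical case. Unlike in \cref{lem:support-isolation}, a vertical terminal variety may have positive dimension and contain many points, so we cannot bound the number of points, only the number of parameter values. This is why the conclusion counts $\pi$-values rather than points. We must also check that, for a nonconstant positive-dimensional $W_a$, the failure of branch-explanation applies to $W_a$ itself. It does: $W_a$ is an irreducible subvariety of $V_j$ through $x$, which is exactly the kind of $W$ the definition quantifies over.
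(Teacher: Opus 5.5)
Your proposal is correct and follows essentially the same argument as the paper. You use the same descent with the same stopping rules (zero-dimensional, $\pi$-vertical, or all support equations vanishing identically, which non-explanation rules out), and the same B\'ezout bookkeeping over ordered index tuples with only proper cuts retained. Your observation that every terminal component has degree at least one and carries a single $\pi$-value is a slight streamlining of the paper's separate treatment of zero-dimensional and vertical terminal components.
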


\begin{proof}
Decompose all $V_j$ into irreducible components and assign each point of $T$ to one component containing it.  Starting from the assigned component $W_0$, process a point $x$ as follows.

If $W_a$ is zero-dimensional, stop.  If $\pi$ is constant on $W_a$, stop.  If every $h_i$ with $i\in S_x$ vanishes identically on $W_a$, then $x$ is branch-explained and the process stops.  Otherwise choose an index $i_{a+1}\in S_x$ for which $h_{i_{a+1}}$ is not identically zero on $W_a$, and let $W_{a+1}$ be an irreducible component of
\[
 W_a\cap Z(h_{i_{a+1}})
\]
containing $x$.  Because $h_{i_{a+1}}(x)=0$ but does not vanish identically on the irreducible variety $W_a$, every such component has strictly smaller dimension.  Thus an un-explained point reaches a zero-dimensional or $\pi$-vertical terminal component after at most $s$ cuts.

Fix an ordered index tuple $(i_1,\ldots,i_a)\in[n]^a$.  Organize the successive intersections with $Z(h_{i_1}),\ldots,Z(h_{i_a})$ as a tree of irreducible components, retaining only proper intersections along each branch.  B\'ezout's theorem bounds the sum of the degrees at depth $a$ by $\Delta L^a$.  Each zero-dimensional terminal component contributes at most its degree many values of $\pi$.  Each positive-dimensional vertical irreducible component contributes exactly one value of $\pi$, and the number of such components is at most the same cumulative-degree bound.  Hence all terminal components reached by this fixed tuple contribute at most $\Delta L^a$ parameter values.  There are $n^a$ ordered tuples of length $a$.  Summing over $0\le a\le s$ proves \cref{w:eq:support-descent-bound}.
\end{proof}

For $\mathbf p=(p_0,\ldots,p_{k-1})$, write
\[
 P_{\mathbf p}(X)=\sum_{a=0}^{k-1}p_aX^a.
\]
For the coefficient words in \cref{w:thm:mca}, define
\begin{equation}\label{w:eq:eval-hypersurface}
 h_i(z,\mathbf p)
 :=P_{\mathbf p}(\alpha_i)-\sum_{j=0}^{\ell}z^jf_{j,i}.
\end{equation}
These hypersurfaces have total degree at most $\ell$.

\begin{lemma}[A support-identical branch explains the full support]\label{w:lem:vandermonde-support}
Let $W\subseteq\A^{k+1}$, with coordinates $(z,\mathbf p)$, be a positive-dimensional irreducible variety over $\overline{\F_q}$.  Assume that $z|_W$ is nonconstant.  Let $S\subseteq[n]$ have $|S|\ge k$, and suppose
\[
 h_i|_W\equiv0
 \qquad\text{for every }i\in S.
\]
Then there are $P_0,\ldots,P_\ell\in\F_q[X]$, each of degree below $k$, such that
\[
 P_j(\alpha_i)=f_{j,i}
 \qquad(i\in S,\ 0\le j\le\ell).
\]
Moreover, at every point $(z_0,\mathbf p_0)\in W$, one has
\[
 P_{\mathbf p_0}=\sum_{j=0}^{\ell}z_0^jP_j.
\]
\end{lemma}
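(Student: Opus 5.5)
Pick $k$ indices $i_1,\ldots,i_k\in S$; the points $\alpha_{i_1},\ldots,\alpha_{i_k}$ are distinct, so the Vandermonde matrix $V=(\alpha_{i_u}^a)_{u,a}$ over $\F_q$ is invertible. On $W$ the equations $h_{i_u}|_W\equiv0$ say that, as regular functions on $W$,
\[
 V\mathbf p=\Bigl(\sum_{j=0}^{\ell}z^jf_{j,i_u}\Bigr)_{u=1}^{k}.
\]
Solving gives $\mathbf p=\sum_{j=0}^{\ell}z^j\,\mathbf g_j$ on $W$, where $\mathbf g_j:=V^{-1}(f_{j,i_u})_u\in\F_q^k$. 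Hence $\mathbf g_j$ has $\F_q$ entries. I will define $P_j:=P_{\mathbf g_j}$. These have degree below $k$ and coefficients in $\F_q$. The identity $P_{\mathbf p_0}=\sum_j z_0^jP_j$ then holds at every point $(z_0,\mathbf p_0)\in W$, which proves the ``moreover'' clause directly.

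It remains to show $P_j(\alpha_i)=f_{j,i}$ for every $i\in S$, not only for the $k$ chosen indices. For each $i\in S$, substitute the parametrization into $h_i$. On $W$ this yields
\[
 0\equiv h_i|_W=\sum_{j=0}^{\ell}z^j\bigl(P_j(\alpha_i)-f_{j,i}\bigr),
\]
that is, a univariate polynomial $g_i(z)\in\F_q[z]$ of degree at most $\ell$ vanishes identically as a function on $W$. Since $W$ is irreducible and $z|_W$ is nonconstant, the image $z(W)$ is dense in $\A^1$ over the algebraically closed field $\overline{\F_q}$, so in particular it is infinite. Equivalently, $z$ is transcendental over $\overline{\F_q}$ in the function field of $W$. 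A polynomial of degree at most $\ell$ with infinitely many roots is zero, so every coefficient $P_j(\alpha_i)-f_{j,i}$ vanishes. This is the ``coefficientwise splitting'' described in the introduction.

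The only delicate step is justifying that $z|_W$ nonconstant forces $g_i\equiv0$. I would argue as follows: a nonconstant regular function on an irreducible variety over an algebraically closed field has dominant image in $\A^1$, because the closure of the image is irreducible and is not a point, hence it is all of $\A^1$. Its image is therefore infinite, and $g_i$ vanishes on that image. Everything else is linear algebra over $\F_q$. The hypothesis $|S|\ge k$ is used only to pick an invertible Vandermonde system.
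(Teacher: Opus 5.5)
Your proposal is correct and follows the paper's route. Like the paper, you invert a $k\times k$ Vandermonde system over $\F_q$ to write $\mathbf p=\sum_j z^j\mathbf p^{(j)}$ on $W$, then use that the nonconstant $z|_W$ is transcendental over $\overline{\F_q}$, which you justify via dominance onto $\A^1$, to split each support equation coefficientwise.
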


\begin{proof}
Choose $k$ indices $i_1,\ldots,i_k\in S$.  In the function field $\overline{\F_q}(W)$, their equations give
\[
 \begin{pmatrix}
  1&\alpha_{i_1}&\cdots&\alpha_{i_1}^{k-1}\\
  \vdots&\vdots&&\vdots\\
  1&\alpha_{i_k}&\cdots&\alpha_{i_k}^{k-1}
 \end{pmatrix}
 \mathbf p
 =
 \begin{pmatrix}
  \sum_{j=0}^{\ell}z^jf_{j,i_1}\\
  \vdots\\
  \sum_{j=0}^{\ell}z^jf_{j,i_k}
 \end{pmatrix}.
\]
The Vandermonde matrix is invertible over $\F_q$, so
\begin{equation}\label{w:eq:coefficient-curve}
 \mathbf p=\mathbf p^{(0)}+z\mathbf p^{(1)}+\cdots+z^\ell\mathbf p^{(\ell)}
\end{equation}
for vectors $\mathbf p^{(j)}\in\F_q^k$.  Let $P_j$ be the polynomial with coefficient vector $\mathbf p^{(j)}$.

Because the base field $\overline{\F_q}$ is algebraically closed and $z$ is nonconstant on the irreducible variety $W$, the element $z\in\overline{\F_q}(W)$ is transcendental over $\overline{\F_q}$.  For any $i\in S$, substituting \cref{w:eq:coefficient-curve} into the identity $h_i=0$ yields
\[
 \sum_{j=0}^{\ell}z^j\bigl(P_j(\alpha_i)-f_{j,i}\bigr)=0.
\]
Transcendence forces every coefficient to vanish, proving the first claim.  The identity in \cref{w:eq:coefficient-curve} holds in the regular coordinate functions on $W$, so it also gives the asserted polynomial identity at every point of $W$.
\end{proof}

\subsection{Proof of the MCA theorem}

\begin{proof}[Proof of \cref{w:thm:mca}]
Set
\[
 e=\floor{\rho n},
 \qquad A=n-e.
\]
The radius hypothesis gives
\begin{equation}\label{w:eq:A-ge-k}
 A\ge k+\ceil{\gamma n}\ge k.
\end{equation}
For each bad parameter $z\in\Bad_{\cC,\rho}^{(\ell)}(\mathbf f)$, choose one witnessing polynomial $P_z$ of degree below $k$, and put
\[
 S_z:=\Agr(y(z),P_z).
\]
Thus $|S_z|\ge A$, and $S_z$ is not a maximal $\mathbf f$-correlated set.  Let $x_z=(z,\mathbf p_z)\in\A^{k+1}$ be the coefficient point of $P_z$.  We select only one point above each bad parameter.

Construct the primitive symbolic differential interpolant $Q$ from \cref{w:prop:symbolic-Q}.  Every $x_z$ belongs to $\Sol_{k-1}(Q)$.  Write $B_\gamma=B_{\gamma,1}$ from \cref{lem:uniform-padding}, and enlarge $n_0$ so that $n>B_\gamma$ and $n\ge2$, and apply \cref{w:thm:jet-decomposition} over $\overline{\F_q}$ with $d=k-1$ and
\[
 D_0:=\max\{2,\deg_ZQ,\deg_XQ\}=O_{\gamma,\ell}(n).
\]
Indeed, $d=k-1<q$, while $B\le B_\gamma<n\le q$.  Moreover, $Z$-primitivity is preserved after extending scalars to $\overline{\F_q}$: a common linear factor over the algebraic closure would force its minimal polynomial over $\F_q$ to divide every coefficient of $Q$.  Thus all hypotheses of \cref{w:thm:jet-decomposition} hold.  We obtain an exceptional set $E_Q$ and varieties $V_1,\ldots,V_t\subseteq\A^{k+1}$ satisfying
\begin{equation}\label{w:eq:jet-cover-application}
 |E_Q|+\sum_{j=1}^{t}\cdeg(V_j)\le n^{C_1},
 \qquad
 \dim V_j\le s:=r+1,
\end{equation}
for an effective $C_1=C_1(\gamma,\ell)$, and every $x_z$ with $z\notin E_Q$ lies in $\bigcup_jV_j$.

Apply \cref{w:lem:support-descent} to the selected points with $z\notin E_Q$, using the support assignment $x_z\mapsto S_z$ and the hypersurfaces in \cref{w:eq:eval-hypersurface}.  Suppose that one selected point $x_z$ were branch-explained.  Then there would be a positive-dimensional nonvertical irreducible $W$ through $x_z$ on which every $h_i$, $i\in S_z$, vanishes identically.  By \cref{w:eq:A-ge-k,w:lem:vandermonde-support}, the entire exact support $S_z$ would be $\mathbf f$-correlated.  By \cref{w:lem:exact-maximal}, it would then be maximal, contradicting the choice of $P_z$ as an MCA witness.  Consequently every selected bad point outside $E_Q$ is not branch-explained.

Since the $h_i$ have degree at most $\ell$, \cref{w:lem:support-descent,w:eq:jet-cover-application} give
\[
 \abs{\Bad_{\cC,\rho}^{(\ell)}(\mathbf f)\setminus E_Q}
 \le n^{C_1}\sum_{a=0}^{r+1}(\ell n)^a.
\]
There are at most $|E_Q|\le n^{C_1}$ additional bad parameters in the exceptional set.  Because $r$ and $\ell$ are fixed, increasing an effective exponent gives
\[
 \abs{\Bad_{\cC,\rho}^{(\ell)}(\mathbf f)}\le n^{C(\gamma,\ell)}.
\]
The bound is uniform in the coefficient words.  Taking $\ell=1$, maximizing over $f_0,f_1$, and dividing by $q$ proves the MCA error estimate.
\end{proof}

\begin{corollary}[Ordinary correlated proximity gaps]\label{w:cor:ordinary-pg}
Under the hypotheses of \cref{w:thm:mca}, if more than $n^{C(\gamma,\ell)}$ values $z\in\F_q$ satisfy
\[
 \dist\!\left(\sum_{j=0}^{\ell}z^jf_j,\cC\right)\le\rho,
\]
then there are degree-below-$k$ polynomials $P_0,\ldots,P_\ell$ agreeing jointly with $f_0,\ldots,f_\ell$ on at least $A_\rho$ coordinates. Consequently, $\dist(y(z),\cC)\le\rho$ for every $z\in\F_q$.
\end{corollary}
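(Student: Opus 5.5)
The plan is to combine \cref{w:thm:mca} with \cref{w:lem:exact-maximal} and the uniqueness of Reed--Solomon interpolation.

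Suppose more than $n^{C(\gamma,\ell)}$ parameters $z$ satisfy $\dist(y(z),\cC)\le\rho$. By \cref{w:thm:mca} the bad set has at most $n^{C(\gamma,\ell)}$ elements, so at least one good parameter $z_0$ must exist. Being close to the code, $z_0$ has a witness $P$ of degree below $k$ with $|\Agr(y(z_0),P)|\ge A_\rho$. Because $z_0$ is not bad, the definition of $\Bad$ forces $S:=\Agr(y(z_0),P)\in\cA_\rho(\mathbf f)$. In particular $S$ is $\mathbf f$-correlated: there are $P_0,\ldots,P_\ell$ of degree below $k$ with $P_j(\alpha_i)=f_{j,i}$ for all $i\in S$. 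Since $|S|\ge A_\rho$, this is exactly the asserted joint agreement on at least $A_\rho$ coordinates.

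For the consequence, fix an arbitrary $z\in\F_q$ and consider the polynomial $\sum_j z^jP_j$. It has degree below $k$. On every $i\in S$ it equals $\sum_j z^jf_{j,i}=y(z)_i$. Therefore $y(z)$ agrees with a codeword on at least $A_\rho=n-\floor{\rho n}$ coordinates. This gives relative distance at most $\floor{\rho n}/n\le\rho$.

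No step is a real obstacle here. The only point to check is that "not bad" together with "close" yields membership in $\cA_\rho$. That follows directly from \cref{w:def:bad-curve}, since a parameter close to the code has a witness, and for a non-bad parameter every such witness has support in $\cA_\rho(\mathbf f)$.
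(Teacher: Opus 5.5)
Your proposal is correct and matches the paper's proof essentially step for step. A close parameter outside the at-most-$n^{C(\gamma,\ell)}$ bad set has a witness whose exact support lies in $\cA_\rho(\mathbf f)$, and the explaining polynomials give $\sum_j z^jP_j$, which agrees with $y(z)$ on that support for every $z$.
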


\begin{proof}
Choose a nearby polynomial at every good parameter.  Since there are more good parameters than bad ones, at least one good parameter is not MCA-bad.  The exact support of its nearby polynomial belongs to $\cA_\rho(\mathbf f)$ and therefore is jointly explained by $P_0,\ldots,P_\ell$ on at least $A_\rho$ coordinates. For every parameter $z$, the polynomial $\sum_j z^jP_j$ agrees with $y(z)$ on that same support, proving the last assertion.
\end{proof}

\section{Consequences and quantitative scope}\label{sec:consequences}
\subsection{Coordinate transformations, punctures, and erasures}
\begin{corollary}[Coordinate-affine generalized RS codes]\label{v:cor:grs}
Let $v_i\in\F_q^\times$ and $b_i\in\F_q$.  All preceding algorithms apply to codewords
\[
  (v_iP(\alpha_i)+b_i)_{i=1}^n.
\]
For list recovery, replace $S_i$ by
\[
  S_i':=\{(z-b_i)/v_i:z\in S_i\}.
\]
This preserves both incidence size and agreement.
\end{corollary}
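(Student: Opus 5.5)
The plan is to reduce to the untransformed Reed--Solomon code $\RS_{\F_q}(D,k)$ by an affine change of coordinates on each symbol, and then invoke the earlier results unchanged.

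For each $i$ consider the map $\phi_i(x)=v_ix+b_i$ on $\F_q$. Since $v_i\ne0$, it is a bijection with inverse $z\mapsto(z-b_i)/v_i$. The central equivalence is then immediate:
\[
 v_iP(\alpha_i)+b_i\in S_i \iff P(\alpha_i)\in S_i'.
\]
Hence, for every $P$, the set of coordinates $i$ on which the transformed codeword hits $S_i$ is exactly the set on which $P$ hits $S_i'$. It follows that
\[
 \cL'(D,k,\mathbf S,A)=\cL(D,k,\mathbf S',A),
\]
where $\cL'$ denotes the list of the transformed code. In addition, $|S_i'|=|S_i|$ for every $i$, so the incidence size $M$ is preserved.

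Given this, the algorithm runs in three steps.
\begin{enumerate}
 \item Compute the lists $S_i'$ using $O(M)$ field operations.
 \item Run \cref{v:thm:prime-sparse}, or equivalently \cref{cor:algorithmic-capacity}, on the instance $(\alpha_i,S_i')$.
 \item Output the codewords $(v_iP(\alpha_i)+b_i)_i$ for the polynomials $P$ found.
\end{enumerate}
The map $P\mapsto(v_iP(\alpha_i)+b_i)_i$ is injective on polynomials of degree below $k\le n$, so the list sizes also coincide. Consequently the bound of \cref{thm:geometric-list} transfers directly.

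Ordinary decoding is the singleton case $S_i=\{y_i\}$. There, $S_i'=\{(y_i-b_i)/v_i\}$, and agreement counts are unchanged.

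None of these steps is a real obstacle. The only point needing care concerns the MCA and curve statements, if those are also meant to be covered. There, a curve $\sum_j z^jf_j$ of received words for the transformed code corresponds to the curve $\bigl(\sum_j z^jf_{j,i}-b_i\bigr)/v_i$. This is again a degree-$\ell$ curve, whose coefficient words are $f_{0,i}'=(f_{0,i}-b_i)/v_i$ and $f_{j,i}'=f_{j,i}/v_i$ for $j\ge1$.

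To handle this, I will check that the exact agreement supports match, which follows from the same bijection. I will also check that joint explainability matches. If $P_0,\dots,P_\ell$ explain $\mathbf f'$ on a set $S$, then the transformed polynomials explain $\mathbf f$ on $S$, with the affine shift $b_i$ absorbed into the constant-term word $f_0$ alone. This shows that \cref{w:thm:mca} transfers verbatim as well.
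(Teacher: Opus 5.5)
Your list-decoding and list-recovery argument is correct, and it is the intended one. Because $x\mapsto v_ix+b_i$ is a bijection, $v_iP(\alpha_i)+b_i\in S_i$ if and only if $P(\alpha_i)\in S_i'$. Hence hit sets, incidence sizes, lists and list sizes all coincide, and you run the earlier decoder on $(\alpha_i,S_i')$. The paper offers nothing beyond this.

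Your final paragraph on MCA, however, claims too much when $b\ne0$. In that case the transformed code $\cC'=\{(v_iP(\alpha_i)+b_i)_i:\deg P<k\}$ is only an affine coset. The correspondence you describe does not match joint explanation by codewords of $\cC'$. For $j\ge1$, a witness $P_j$ for $f'_j=f_j/v$ gives $f_{j,i}=v_iP_j(\alpha_i)$, which is a word of the linear code $\{(v_iP(\alpha_i))_i\}$, not of $\cC'$.

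Under the natural notion, where each $f_j$ must agree on $S$ with some codeword of $\cC'$, the MCA bound fails outright. Take $\ell=1$, $b_i=v_i\alpha_i^k$, $f_0=b$ and $f_1=0$.
\begin{itemize}
\item Then $y(z)=b$ for every $z$, and the codeword with $P=0$ agrees with $y(z)$ on all of $[n]$.
\item A set $S$ with $|S|\ge A_\rho\ge k+1$ is correlated only if $v_iP_1(\alpha_i)+b_i=0$ on $S$.
\item That would make the monic degree-$k$ polynomial $X^k+P_1$ vanish on $k+1$ points, which is impossible.
\end{itemize}
So $\cA_\rho(\mathbf f)=\emptyset$, and every $z\in\F_q$ is bad.

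What your argument actually proves is the $b=0$ statement applied to the translated coefficient words $(f_0-b,f_1,\ldots,f_\ell)$. This is why the paper transfers MCA only when $b_i=0$, by dividing each $f_{j,i}$ by $v_i$. It also states explicitly that arbitrary translations are allowed only for list decoding and list recovery. Since the corollary itself makes no MCA claim, you should drop that paragraph or restrict it to $b_i=0$.
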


Puncturing is already allowed because the evaluation set is arbitrary.  Erased coordinates may simply be omitted; the theorem is then applied to the remaining evaluation set, with its new length $n'$, and with the desired absolute number of nonerased agreements.

When $b_i=0$, the same transformation preserves the exact agreement supports relevant to MCA: dividing each curve coefficient $f_{j,i}$ by $v_i$ carries joint explanations and maximality bijectively to those for ordinary Reed--Solomon codes. Arbitrary translations are allowed above only for list decoding and list recovery.

\subsection{Fixed interleaving}
For an integer $h\ge1$, let $\cC^{\equiv h}$ consist of $h$ rows from $\cC$, viewed as a length-$n$ code over $\F_q^h$. Distance counts columns on which the entire $h$-tuple differs. Write $\Lambda_h(\cC,\rho)$ for the maximum number of such codewords in a radius-$\rho$ ball.

\begin{corollary}[Polynomial lists for fixed interleaving]\label{cor:interleaving}
Under the singleton-list hypotheses of \cref{thm:capacity}, for every integer $h\ge1$,
\[
 \Lambda_h\!\left(\cC,1-k/n-\gamma\right)
 \le n^{hC_{\rm list}(\gamma,1)}.
\]
For fixed $\gamma,h$, the interleaved list can be enumerated deterministically in $q^{O_{\gamma,h}(1)}$ field operations.
\end{corollary}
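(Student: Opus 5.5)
The plan is to reduce interleaved list decoding to ordinary list decoding coordinatewise, using \cref{thm:capacity} with singleton lists. Let $U=(u^{(1)},\ldots,u^{(h)})$ be a received interleaved word, and let $(P_1,\ldots,P_h)$ be an interleaved codeword within relative distance $1-k/n-\gamma$. Then the set $I$ of columns where the whole $h$-tuple agrees has $|I|\ge k+\ceil{\gamma n}$. On these columns each row $P_t$ agrees with $u^{(t)}$. Consequently $P_t\in\cL(D,k,\mathbf S^{(t)},k+\ceil{\gamma n})$ with singleton lists $S^{(t)}_i=\{u^{(t)}_i\}$.

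By \cref{thm:capacity} with $L=1$, each row list has size at most $n^{C_{\rm list}(\gamma,1)}$. The interleaved codeword is determined by its $h$ rows, so the interleaved list injects into the product of the $h$ row lists. This gives $\Lambda_h\le n^{hC_{\rm list}(\gamma,1)}$. The thresholds $n_0$ and the dimension range are exactly those of \cref{thm:capacity}, so no new parameters are needed.

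For the algorithm, we run the decoder of \cref{cor:algorithmic-capacity} on each row, costing $h\,q^{C_{\rm alg}(\gamma,1)}$ field operations. We then form all $h$-tuples from the row outputs, at most $n^{hC_{\rm list}}\le q^{hC_{\rm list}}$ of them since $q\ge n$. For each tuple we count the columns of full agreement in $O(hkn)$ operations and keep those meeting the threshold. The total cost is $q^{O_{\gamma,h}(1)}$, which is the claimed bound.

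The step needing the most care is minor: correctness of the filtering. A tuple of rows each near its row does not automatically agree jointly on enough columns, so the joint check is needed. Conversely, every true interleaved codeword appears among the tuples, because each of its rows passes the per-row agreement test. Nothing deeper is required. In particular, no appeal to \cref{w:thm:mca} is needed, since the exponent $h$ in the bound absorbs the product structure.
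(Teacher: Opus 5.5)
Your proposal is correct and follows essentially the same route as the paper: each row polynomial of a valid interleaved codeword lies in the corresponding singleton-list output of \cref{thm:capacity}. The product of the $h$ row lists therefore gives the size bound $n^{hC_{\rm list}(\gamma,1)}$, and enumerating that product with a common-column agreement filter gives the $q^{O_{\gamma,h}(1)}$ algorithm, using $q\ge n$.
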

\begin{proof}
Fix a received $h\times n$ array. Every valid interleaved codeword has the requisite agreement in every row separately. Its row polynomials therefore belong to the $h$ scalar lists from \cref{thm:capacity}, each of size at most $n^{C_{\rm list}(\gamma,1)}$. The product of these lists contains every valid interleaved codeword. Enumerate that product and test common-column agreement. The product bound proves the size estimate, and $q\ge n$ gives the stated running time for fixed $h$.
\end{proof}

\subsection{Fixed-slack certificates for the proximity challenges}
The preliminary Grand MCA and Grand List Decoding challenges consider smooth evaluation domains at rates
\[
 R\in\{1/2,1/4,1/8,1/16\},
\]
a target $\eps^*$ such as $2^{-128}$, and, for list decoding, a fixed interleaving order~\cite{ABF26,Prize26}. Smoothness is not needed by our theorems.

\begin{corollary}[Simultaneous asymptotic certificates]\label{cor:safe-certificates}
Fix $R\in(0,1)$, $\gamma\in(0,1-R)$, an integer $h\ge1$, and $\lambda>0$. Let $k=Rn$ be an integer and let $\cC=\RS_{\F_q}(D,k)$ for any $n$ distinct points in a prime field. For all sufficiently large $n$, define
\[
 C_*:=\max\{C(\gamma,1),\ hC_{\rm list}(\gamma,1)\},
\]
where $C(\gamma,1)$ is the MCA exponent in \cref{w:thm:mca}. If
\[
 q>2^\lambda n^{C_*},
\]
then, at $\rho=1-R-\gamma$,
\[
 \errMCA(\cC,\rho)<2^{-\lambda},\qquad
 \Lambda_h(\cC,\rho)<2^{-\lambda}q.
\]
\end{corollary}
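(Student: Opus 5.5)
The plan is to combine the two existing bounds and then compare each with the field size. First, the hypotheses have to match. With $k=Rn$ and $\gamma<1-R$, we get $k\le(1-\gamma)n$ and $\rho=1-R-\gamma=1-k/n-\gamma$. This places us exactly in the setting of \cref{w:thm:mca} with $\ell=1$, and of \cref{cor:interleaving}, which relies on \cref{thm:capacity} with singleton lists, so $L=1$. We take $n\ge\max\{n_0(\gamma,1),n_0(\gamma,1)\text{ of }\cref{thm:capacity}\}$ so that both apply. The condition $q\ge n$ is automatic: $D$ consists of $n$ distinct points of $\F_q$, and in any case $q>2^\lambda n^{C_*}\ge n$.

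For the MCA part, \cref{w:thm:mca} gives $\errMCA(\cC,\rho)\le n^{C(\gamma,1)}/q$. Since $C(\gamma,1)\le C_*$ and $n\ge1$, this is at most $n^{C_*}/q<2^{-\lambda}$, using the strict hypothesis $q>2^\lambda n^{C_*}$.

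For the list part, \cref{cor:interleaving} gives $\Lambda_h(\cC,\rho)\le n^{hC_{\rm list}(\gamma,1)}\le n^{C_*}$. The hypothesis rearranges to $n^{C_*}<2^{-\lambda}q$, which proves the second inequality.

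The only real obstacle is bookkeeping. The constants $C(\gamma,1)$ and $C_{\rm list}(\gamma,1)$ must depend only on $\gamma$, and not on $R$ or $\lambda$. The threshold for "sufficiently large $n$" must be the maximum of the two thresholds $n_0$, with integrality of $Rn$ imposed as a standing assumption. Both points follow from the uniformity in $k$ already stated in \cref{thm:capacity} and \cref{w:thm:mca}. I would also note that $\lambda$ enters only through the field-size condition, so no further largeness in terms of $\lambda$ is needed.
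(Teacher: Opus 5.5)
Your proposal is correct and follows the paper's proof: apply \cref{w:thm:mca} with $\ell=1$ and \cref{cor:interleaving} with singleton lists, then use $q>2^\lambda n^{C_*}$ to bound both polynomial budgets. Your extra bookkeeping makes explicit what the paper's one-line proof leaves implicit, namely checking $k\le(1-\gamma)n$ and $\rho=1-k/n-\gamma$, and taking the larger of the two length thresholds.
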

\begin{proof}
The first estimate is \cref{w:thm:mca}. The second follows from \cref{cor:interleaving}; the common field-size assumption dominates both polynomial budgets.
\end{proof}

At the four designated rates and $\lambda=128$, the certified radii are
\begin{center}
\begin{tabular}{@{}cc@{}}
\toprule
Rate & Fixed-slack certified radius\\
\midrule
$1/2$ & $1/2-\gamma$\\
$1/4$ & $3/4-\gamma$\\
$1/8$ & $7/8-\gamma$\\
$1/16$ & $15/16-\gamma$\\
\bottomrule
\end{tabular}
\end{center}
These are lower-bound certificates for the safe radius, not a determination of the largest safe radius. In particular, the interleaved-list certificate uses the geometric $n^{O(1)}$ bound; it would not follow from the root solver's $q^{O(1)}$ bound alone.

\subsection{Field characteristic and the endpoint}
The interpolation construction is characteristic-free. The deterministic decoder invokes the prime-field version of Kopparty's root solver. The geometric list and MCA arguments extend to a finite field of characteristic
\[
 p>\max\{k-1,B_{\gamma,L}\}
\]
for the list bound, or with $L=1$ for MCA, provided the evaluation points are distinct. Interpolation over that field and its rational-function extension is already supplied by \cref{cor:interpolation-interface}. The two characteristic requirements have different purposes: $p>k-1$ prevents Frobenius freedom in Hasse--Taylor lifting, and $p>B_{\gamma,L}$ prevents the highest-variable partial derivative of the interpolant from vanishing identically.

None of the theorems permits replacing fixed $\gamma$ by an $n$-dependent slack without tracking the constants and the length threshold. The results in~\cite{CS25,KKH26} show that the literal up-to-capacity MCA formulation requires qualification; in particular, the prime-field smooth-domain constructions in~\cite{KKH26} occur in a shrinking near-capacity window. The fixed-slack statements here do not address that window or the exact integer-error boundary.

\subsection{Effectivity and remaining quantitative questions}
The parameters are effective: the order $r$ is chosen by \cref{w:eq:r-choice}, $B_{\gamma,L}$ is bounded in \cref{lem:uniform-padding}, and the geometric exponent follows from the degree recurrence in \cref{w:lem:rational-jet-degree} and constant-dimensional intersections. The support-isolation and support-descent arguments add only constantly many powers of the incidence size or block length.

The resulting constants are not optimized. No competitive numerical exponent or useful certificate for a prescribed small block length is claimed. Both the field-size inequality and the sufficiently-large-length threshold must be checked for a concrete application. The decoder is polynomial in $q$, not in $\log q$, and the list bound is polynomial in $n$, not constant. Improving the gap dependence, obtaining small-field MCA bounds, and converting the bounded-complexity geometry into a faster output-sensitive decoder remain separate quantitative questions.

\section*{AI Disclosure}

The interactions with ChatGPT 5.6 Pro were crucial in obtaining the main results of this paper. ChatGPT 5.6 and 6 Pro  were also used an editorial assistant. The authors take full responsibility for the results and have verified all claims, proofs, and references.

%\clearpage

\appendix

\section{Transfer to other root-enumeration settings}\label{v:app:transfer}

\begin{theorem}[Root-solver transfer]\label{v:thm:transfer}
Let $K>r\ge1$ and $1\le k\le K$.  Suppose a field $\F$ has an algorithm that enumerates all degree-below-$K$ solutions of
\[
  Q(X,P,P^{[1]},\ldots,P^{[r]})\equiv0
\]
for every nonzero $Q$ with individual $Y$-degree at most $B$ and
\[
  \deg_{(1,K-1,K-2,\ldots,K-r-1)}Q<D,
\]
in time $T_{\rm root}$ and with at most $L_{\rm root}$ outputs.  Under \cref{v:eq:alg1,v:eq:alg2}, sparse list recovery over $\F$ with actual degree bound $k-1$ is solvable using
\[
  r^{O_\vartheta(r)}\operatorname{poly}(n,M,B)+T_{\rm root}
 +L_{\rm root}\operatorname{poly}(n,K,M)
\]
field operations, with output list at most $L_{\rm root}$, provided $D\ge r^3A$ and the assumed root solver applies to the interpolant constructed in \cref{v:prop:optimized-interpolant}.
\end{theorem}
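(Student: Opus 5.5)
The plan is to rerun the decoder of \cref{v:thm:prime-sparse} with Kopparty's solver replaced by the assumed root solver, and to account for the three phases separately: interpolation, root enumeration, and filtering.

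\emph{Interpolation.} Build $\cQ$ with the optimized shape \cref{v:eq:optimized-shape}. Under \cref{v:eq:alg1,v:eq:alg2}, \cref{v:prop:optimized-interpolant} guarantees a nonzero kernel vector of the stacked local constraint matrix. The interpolant $Q$ it produces satisfies the two degree bounds of \cref{v:lem:degree}:
\[
 \deg_{Y_j}Q\le B,\qquad \deg_{(1,K-1,\ldots,K-r-1)}Q<r^3A\le D.
\]
Hence $Q$ lies in the class the root solver handles, which is why the hypothesis $D\ge r^3A$ appears. The proviso that the solver applies to this interpolant covers any further side conditions, such as characteristic restrictions.

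\emph{Cost of the linear algebra.} The main task is to show this phase costs $r^{O_\vartheta(r)}\operatorname{poly}(n,M,B)$ rather than the $q^{O(r)}$ of the prime-field proof. The column count is the number of admissible monomials. There are $\abs{\cT}=r^{O_\vartheta(r)}$ tails by \cref{v:eq:C-count}, at most $(B+1)^2$ choices of $(b_0,b_1)$, and fewer than $mA=r^3A\le r^3n$ choices of $a$; in total this is $r^{O_\vartheta(r)}\operatorname{poly}(n,B)$. For the row count, use the enumeration from the proof of \cref{v:thm:prime-sparse}. Each incidence records coefficients with $i<m$, $b<m/r$, $e_1<2m$, and $\omega(\mathbf v)\le W+m$, which is $r^{O_\vartheta(r)}$ rows; over all incidences this gives $M\,r^{O_\vartheta(r)}$ rows. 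Each entry is obtained by truncated polynomial arithmetic modulo $T^m$ in a bounded number of variables, at cost $r^{O_\vartheta(r)}\operatorname{poly}(B)$. Gaussian elimination is then polynomial in the matrix dimensions, and the whole phase fits within $r^{O_\vartheta(r)}\operatorname{poly}(n,M,B)$, with the $r$-dependence absorbed into the exponent.

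\emph{Obstacle.} I expect the only delicate point to be this step: making sure no hidden $q$-dependence enters. The prime-field proof bounded $B$ and $a$ by powers of $q$; here they must be bounded by $B$ and $n$ directly.

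\emph{Root enumeration and filtering.} Call the assumed solver once, costing $T_{\rm root}$ and returning at most $L_{\rm root}$ candidates. By \cref{v:prop:optimized-interpolant}, together with padding (\cref{v:lem:padding}), every valid degree-below-$k$ polynomial is among these candidates. Filter each candidate by degree and by counting list hits. Evaluating a degree-below-$K$ polynomial at $n$ points and testing membership in the $S_i$ costs $\operatorname{poly}(n,K,M)$ per candidate, for $L_{\rm root}\operatorname{poly}(n,K,M)$ in total. The output is a subset of the candidates, so its size is at most $L_{\rm root}$.
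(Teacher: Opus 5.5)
Your proposal is correct and follows the same three-step route as the paper's proof: field-independent interpolation via \cref{v:prop:optimized-interpolant}, one call to the assumed solver, and then filtering by degree and agreement, with that filtering cost charged separately to the $L_{\rm root}\operatorname{poly}(n,K,M)$ term. The paper's proof does not spell out the linear-algebra cost. Your count of rows and columns in terms of $r^{O_\vartheta(r)}$, $B$, $n$, and $M$, rather than powers of $q$, faithfully elaborates the enumeration in the proof of \cref{v:thm:prime-sparse}.
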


\begin{proof}
The interpolation construction, dimension estimate, local rank bound, and multiplicity argument are field-independent.  Construct a nonzero interpolant, invoke the assumed root solver, and filter its output by the actual degree bound and list agreement. The last summand explicitly accounts for verification of all returned roots; the transfer does not assume that this cost is already included in $T_{\rm root}$.
\end{proof}

\section{Quantitative derivative order and parameter ledger}\label{app:quantitative}
\subsection{Inverting the sparse-incidence condition}
The uniform fixed-slack statement is convenient for the main theorems. The following more detailed form retains the dependence on a target agreement $\eps$ and a multiplicative slack $\vartheta$.

\begin{corollary}[Bounded-list recovery at every constant agreement]\label{v:cor:bounded-list}
Fix $\vartheta\in(0,1)$, $\eps\in(0,1)$, and $L\ge1$.  Let
\[
  A=\ceil{\eps n},
  \qquad
  K=\ceil{(1-\vartheta)A},
\]
and suppose $k\le K$.  For all sufficiently large $n$, there is an effective choice
\begin{equation}\label{v:eq:r-choice}
 r=O_\vartheta\!\left(
  \left(1+\frac{L}{\eps}\right)^{1/\vartheta}
  \log^{5/\vartheta}\!\left(e+\frac{L}{\eps}\right)
 \right)
\end{equation}
for which every instance with $\abs{S_i}\le L$ is list recoverable up to $A$ hits by the algorithm of \cref{v:thm:prime-sparse}.  Over every prime $q\ge n$ satisfying the degree conditions, the time is $q^{O_\vartheta(r)}$ and the list size is at most $q^{4r+6}$.
\end{corollary}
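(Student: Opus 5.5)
The plan is to reduce \cref{v:cor:bounded-list} to \cref{v:thm:prime-sparse} with $K=\ceil{(1-\vartheta)A}$, and then choose $r$ just large enough to meet the incidence budget \cref{v:eq:alg2}. The work splits into three parts.

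\textbf{Ratio and size checks.} Since $A=\ceil{\eps n}\to\infty$, for large $n$ we have $K\ge2$ and $K-1<(1-\vartheta)A$. The first inequality in \cref{v:eq:alg1} is then immediate. Padding (\cref{v:lem:padding}) is built into \cref{v:thm:prime-sparse} through the final filter by $\deg P<k$, which is why only $k\le K$ is required. We also record the lower bound
\[
 K-1\ge(1-\vartheta)\eps n/2
\]
for large $n$.

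\textbf{Choosing $r$.} With $M\le Ln$, it suffices to have
\[
 Ln\le a_\vartheta\frac{(1-\vartheta)\eps n}{2}\,\frac{r^\vartheta}{\log^4(er)},
\]
that is, $r^\vartheta/\log^4(er)\ge C_\vartheta L/\eps$. Write $x=1+L/\eps$ and try $r=\lceil c_\vartheta x^{1/\vartheta}\log^{5/\vartheta}(e+x)\rceil$. For this choice, $r^\vartheta\ge c_\vartheta^\vartheta\, x\log^5(e+x)$. On the other side, $\log(er)=O_\vartheta(\log(e+x))$, so $\log^4(er)=O_\vartheta(\log^4(e+x))$. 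The ratio is therefore at least $c'_\vartheta\, x\log(e+x)$. The spare logarithm beats the constant $C_\vartheta$ once $c_\vartheta$ is large, uniformly in $x\ge1$. We also take $r\ge r_0(\vartheta)$ by enlarging $c_\vartheta$. The result is the bound \cref{v:eq:r-choice}.

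\textbf{Degree conditions and conclusion.} Because $r$ depends only on $\vartheta,\eps,L$, the condition $r<K$ holds for large $n$. The conditions in \cref{v:eq:alg3}, namely $B=\ceil{r^3A/(K-1)}\le\ceil{2r^3/(1-\vartheta)}<q$ and $r^3A<q^2$, hold for every prime $q\ge n$ once $n$ exceeds a threshold depending on $r$ and $\vartheta$. This is the meaning of ``satisfying the degree conditions.'' \Cref{v:thm:prime-sparse} then gives running time $q^{O_\vartheta(r)}$ and list size at most $q^{4r+6}$.

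The main obstacle is the inversion of $r^\vartheta/\log^4(er)\ge C x$, and in particular making the logarithmic exponent $5/\vartheta$ work uniformly, including for small $x$. I would handle it by the substitution $r=(y\log^{5}y)^{1/\vartheta}$ and bounding $\log(er)\le(2/\vartheta)\log y$ for $y\ge e+x$.
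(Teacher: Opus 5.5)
Your proposal is correct and follows the paper's proof. Both use $K-1<(1-\vartheta)A$ for the ratio condition and $K-1=\Omega_\vartheta(\eps n)$ to get $M/(K-1)=O_\vartheta(L/\eps)$, then choose $r$ with a large hidden constant so that $r^\vartheta/\log^4(er)\ge C_\vartheta(1+L/\eps)$, and observe that $r<K$, $B<q$, $r^3A<q^2$ hold for large $n$ because $r$ is fixed.

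You spell out the inversion that the paper leaves implicit, with two small caveats. First, uniformity in $x\ge1$ comes from $c_\vartheta^\vartheta$ outgrowing the $\log^4 c_\vartheta$ hidden in $\log(er)$; the spare factor $\log(e+x)$ is only $\ge1$ near $x=1$ and cannot beat $C_\vartheta$ on its own. Second, your closing bound $\log(er)\le(2/\vartheta)\log y$ needs $y$ above an absolute threshold, not merely $y\ge e+x$.
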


\begin{proof}
Because $K-1<(1-\vartheta)A$,
\[
  \frac{A}{K-1}>\frac{1}{1-\vartheta}.
\]
Also $M\le L n$ and $K-1=\Omega_\vartheta(\eps n)$.  Thus
\[
  \frac{M}{K-1}=O_\vartheta\!\left(\frac{L}{\eps}\right).
\]
Let $x=1+L/\eps$.  For a sufficiently large hidden constant in \cref{v:eq:r-choice},
\[
  \frac{r^\vartheta}{\log^4(er)}\ge C_\vartheta x,
\]
so \cref{v:eq:alg2} holds.  Since $r$ is fixed once $\vartheta,\eps,L$ are fixed, the remaining inequalities $r<K$, $B<q$, and $r^3A<q^2$ hold for all sufficiently large $n$ whenever $q\ge n$.
\end{proof}

In particular, at rate at most $(1-\vartheta)\eps$, singleton lists need only
\[
 r=O_\vartheta\!\left((1+1/\eps)^{1/\vartheta}
           \log^{5/\vartheta}(e+1/\eps)\right).
\]
The interpolation and root-enumeration stages both take $q^{O_\vartheta(r)}$ operations in their stated prime-field parameter range. This is a fixed-parameter polynomial-time bound; it is not polynomial dependence on $1/\vartheta$ or on the additive capacity gap.

\subsection{Rounding and field-size thresholds}
At the actual capacity-minus-$\gamma$ radius, set $A=k+\ceil{\gamma n}$ and $K=\ceil{(1-\gamma/2)A}$. The upper bound $k\le(1-\gamma)n$ implies $A\le n$, while \cref{lem:uniform-padding} gives $K\ge k$ and $K-1<(1-\gamma/2)A$. The same lemma applies at smaller radii by increasing $A$ up to $n$.

After $\gamma,L$ are fixed, $r,m,B_{\gamma,L}$ are constants and $K=\Theta_\gamma(n)$. Thus $r<K$, $B<q$, and $r^3A<q^2$ hold for sufficiently large $n$ whenever $q\ge n$. The stricter field-size budget in \cref{cor:safe-certificates} serves a different purpose: it makes a polynomial number of exceptional parameters, or a polynomial list size, small relative to $q$.

\subsection{Parameter ledger}
\begin{center}
\small
\begin{tabular}{@{}>{\raggedright\arraybackslash}p{0.13\textwidth}>{\raggedright\arraybackslash}p{0.28\textwidth}>{\raggedright\arraybackslash}p{0.49\textwidth}@{}}
\toprule
Parameter & Choice & Role\\
\midrule
$\gamma$ & Fixed additive slack & Gives $A-k\ge\gamma n$ and a rate-uniform interpolation budget.\\
$L$ & Input-list bound & Implies $M\le Ln$; constants for recovery depend on $L$.\\
$\ell$ & Curve degree & Bounds the degree of evaluation hypersurfaces in MCA.\\
$A$ & Required number of hits & Supplies root multiplicity at least $mA$.\\
$K$ & $\ceil{(1-\vartheta)A}$ & Padded degree ceiling, at least the actual dimension $k$.\\
$\vartheta$ & $\gamma/2$ for uniform padding & Fixed multiplicative degree margin.\\
$r,m$ & $r$ from \cref{w:eq:r-choice}; $m=r^3$ & Differential order and local multiplicity.\\
$c,s$ & $(1-\vartheta)^{-1}-2/\log(er)$ and $(1-\vartheta)^{-1}-1/\log(er)$ & Weighted and ordinary tail cutoffs.\\
$W$ & $\floor{crm/\log(er)}$ & Weighted derivative-tail budget.\\
$B$ & $\ceil{mA/(K-1)}$ & Total derivative-variable degree bound.\\
$D_0$ & $O(n)$ at fixed parameters & Coefficient-degree scale in the jet theorem.\\
$E_Q$ & Finite exceptional set & Values lost under derivative-content specialization.\\
$\Delta$ & Polynomial in $n$ & Total cumulative degree of the solution cover.\\
$S_P,S_z$ & Full list-hit or exact agreement support & Determines all cuts; no agreement subset is substituted for an MCA witness's support.\\
\bottomrule
\end{tabular}
\end{center}

\end{document}